\documentclass[sigplan]{acmart}\settopmatter{printfolios=true,printccs=false,printacmref=false}

\usepackage{booktabs}   
\usepackage{subcaption} 
\usepackage{tablefootnote}

\usepackage{amsmath,amssymb} 
\usepackage{url}
\usepackage{xspace}
\usepackage{enumitem}

\usepackage{varwidth,algorithm,algorithmicx}
\usepackage[noend]{algpseudocode}

\usepackage{tikz}
\usetikzlibrary{arrows,snakes,backgrounds,automata,shapes,positioning,chains,calc}

\usepackage[inference]{semantic}

\makeatletter
\newcommand{\xRightarrow}[2][]{\ext@arrow 0359\Rightarrowfill@{#1}{#2}}
\makeatother

\newcommand{\kwswap}{\textsf{\textbf{swap}}}

\newcommand{\kwskip}{\textsf{\textbf{skip}}}
\newcommand{\kwdo}{\textsf{\textbf{do}}}
\newcommand{\kwwhile}{\textsf{\textbf{while}}}
\newcommand{\kwend}{\textsf{\textbf{end}}}
\newcommand{\kwif}{\textsf{\textbf{if}}}
\newcommand{\kwthen}{\textsf{\textbf{then}}}
\newcommand{\kwelse}{\textsf{\textbf{else}}}
\newcommand{\kwreturn}{\textsf{\textbf{return}}}
\newcommand{\kwthread}{\textsf{\textbf{thread}}}

\algnewcommand\Swap{\kwswap}
\algnewcommand\Skip{\kwskip}
\algnewcommand\Thread{\kwthread}

\algrenewcommand\algorithmicend{\kwend}
\algrenewcommand\algorithmicdo{\kwdo}
\algrenewcommand\algorithmicwhile{\kwwhile}
\algrenewcommand\algorithmicfor{\textsf{\textbf{for}}}
\algrenewcommand\algorithmicforall{\textsf{\textbf{for all}}}
\algrenewcommand\algorithmicloop{\textsf{\textbf{loop}}}
\algrenewcommand\algorithmicrepeat{\textsf{\textbf{repeat}}}
\algrenewcommand\algorithmicuntil{\textsf{\textbf{until}}}
\algrenewcommand\algorithmicprocedure{\textsf{\textbf{procedure}}}
\algrenewcommand\algorithmicfunction{\textsf{\textbf{function}}}
\algrenewcommand\algorithmicif{\kwif}
\algrenewcommand\algorithmicthen{\kwthen}
\algrenewcommand\algorithmicelse{\kwelse}
\algrenewcommand\algorithmicreturn{\kwreturn}

\algblockdefx{Thread}{EndThread}%
[1]{\kwthread \xspace #1}%
{\algorithmicend}

\algblockdefx{MyWhile}{EndMyWhile}%
[1]{\kwwhile \xspace #1}%
{\algorithmicend}

\makeatletter
\ifthenelse{\equal{\ALG@noend}{t}}%
  {\algtext*{EndMyWhile}}
  {}%
\makeatother
\makeatletter
\ifthenelse{\equal{\ALG@noend}{t}}%
  {\algtext*{EndThread}}
  {}%
\makeatother

\usepackage{caption, subcaption}
    \captionsetup[sub]{labelformat=simple}

\newcommand{\dom}{\operatorname{\mathbf{dom}}}
\newcommand{\ran}{\operatorname{\mathbf{ran}}}


\newcommand{\OMIT}[1]{}

\setcopyright{none}

\acmYear{2019}
\acmISBN{} 
\acmDOI{} 
\startPage{1}

\bibliographystyle{ACM-Reference-Format}

\begin{document}


\title[Verifying C11 Programs Operationally]{Verifying C11 Programs Operationally}

\author{Simon Doherty}
\affiliation{
  \department{Department of Computer Science}              
  \institution{University of Sheffield}            
}
\email{s.doherty@sheffield.ac.uk}          

\author{Brijesh Dongol}
\orcid{nnnn-nnnn-nnnn-nnnn}             
\affiliation{
  \department{Department of Computer Science}              
  \institution{University of Surrey}            
}
\email{b.dongol@surrey.ac.uk}          

\author{Heike Wehrheim}
\affiliation{
  \department{Department of Computer Science}              
  \institution{University of Paderborn}            
}
\email{wehrheim@upb.de}          

\author{John Derrick}
\affiliation{
  \department{Department of Computer Science}              
  \institution{University of Sheffield}            
}
\email{j.derrick@sheffield.ac.uk}          


 \begin{CCSXML}
<ccs2012>
<concept>
<concept_id>10003752.10003753.10003761</concept_id>
<concept_desc>Theory of computation~Concurrency</concept_desc>
<concept_significance>500</concept_significance>
</concept>
<concept>
<concept_id>10003752.10003809.10010170.10010171</concept_id>
<concept_desc>Theory of computation~Shared memory algorithms</concept_desc>
<concept_significance>500</concept_significance>
</concept>
<concept>
<concept_id>10011007.10010940.10010992.10010993</concept_id>
<concept_desc>Software and its engineering~Correctness</concept_desc>
<concept_significance>500</concept_significance>
</concept>
<concept>
<concept_id>10011007.10010940.10010992.10010998.10010999</concept_id>
<concept_desc>Software and its engineering~Software verification</concept_desc>
<concept_significance>500</concept_significance>
</concept>
</ccs2012>
\end{CCSXML}

\ccsdesc[500]{Theory of computation~Concurrency}
\ccsdesc[500]{Theory of computation~Shared memory algorithms}
\ccsdesc[500]{Software and its engineering~Correctness}
\ccsdesc[500]{Software and its engineering~Software verification}

\keywords{Operational semantics, C11, Verification, Soundness and
  Completeness} 

\newcommand{\linefill}{\cleaders\hbox{$\smash{\mkern-2mu\mathord-\mkern-2mu}$}\hfill\vphantom{\lower1pt\hbox{$\rightarrow$}}}  
\newcommand{\Linefill}{\cleaders\hbox{$\smash{\mkern-2mu\mathord=\mkern-2mu}$}\hfill\vphantom{\hbox{$\Rightarrow$}}}  
\newcommand{\transi}[2]{\mathrel{\lower1pt\hbox{$\mathrel-_{\vphantom{#2}}\mkern-8mu\stackrel{#1}{\linefill_{\vphantom{#2}}}\mkern-11mu\rightarrow_{#2}$}}}
\newcommand{\trans}[1]{\transi{#1}{{}}}
\newcommand{\transo}{\mathord{\trans{~}}}
\newcommand{\ntransi}[2]{\mathrel{\lower1pt\hbox{$\mathrel-_{\vphantom{#2}}\mkern-8mu\stackrel{#1}{\linefill_{\vphantom{#2}}}\mkern-8mu\nrightarrow_{#2}$}}}
\newcommand{\ntrans}[1]{\ntransi{#1}{{}}}

\newcounter{sarrow}
\newcommand\strans[1]{%
  \mathrel{\raisebox{0.1em}{
    \stepcounter{sarrow}%
    \!\!\!\!
    \begin{tikzpicture}
      \node[inner sep=.5ex] (\thesarrow) {$\scriptstyle #1$};
      \path[draw,<-,decorate,line width=0.25mm,
      decoration={zigzag,amplitude=0.7pt,segment length=1.2mm,pre=lineto,pre length=4pt}] 
      (\thesarrow.south east) -- (\thesarrow.south west);
    \end{tikzpicture}%
  }}}

\newcommand\Strans[1]{%
\mathrel{\raisebox{0.1em}{
\!\!\begin{tikzpicture}
  \node[inner sep=0.6ex] (a) {$\scriptstyle #1$};
  \path[line width=0.2mm, draw,implies-,double distance between line
  centers=1.5pt,decorate, 
    decoration={zigzag,amplitude=0.7pt,segment length=1.2mm,pre=lineto,
    pre   length=4pt}] 
    (a.south east) -- (a.south west);
\end{tikzpicture}}%
}}

\newcommand{\legal}{{\it legal}}
\newcommand{\Ws}{{\it Ws}}
\newcommand{\Rs}{{\it Rs}}
\newcommand{\calR}{{\cal R}}

\newcommand{\bbT}{\mathbb{T}}

\newcommand{\calE}{{\cal E}}
\newcommand{\nat}{\mathbb{N}}
\newcommand{\noteq}{\neq}

\newcommand{\lt}{{\bf Less than}}

\newcommand{\ev}{\mathit{ev}}
\newcommand{\Events}{\mathit{Evt}}
\newcommand{\Inv}{\mathit{Inv}}
\newcommand{\Resp}{\mathit{Res}}
\newcommand{\his}{\mathit{his}}
\newcommand{\exec}{\mathit{exec}}
\newcommand{\complete}{\mathit{complete}}
\newcommand{\Var}{\mathsf{Var}}
\newcommand{\Val}{\mathsf{Val}}
\newcommand{\CASOp}{\mathit{CAS}} 
\newcommand{\WR}{\mathsf{Wr_R}}
\newcommand{\RA}{\mathsf{Rd_A}}
\newcommand{\R}{\mathsf{Rd}}
\newcommand{\RX}{\mathsf{Rd_X}}
\newcommand{\E}{\mathsf{E}}
\newcommand{\W}{\mathsf{Wr}}
\newcommand{\IW}{\mathsf{IWr}}
\newcommand{\WX}{\mathsf{Wr_X}}
\newcommand{\CRA}{\mathsf{CRA}}
\newcommand{\URA}{\mathsf{U}}

\newcommand{\HB}{{\sf hb}\xspace} 
\newcommand{\PO}{{\sf po}\xspace}
\newcommand{\MO}{{\sf mo}\xspace}
\newcommand{\SC}{{\sf sc}\xspace}
\newcommand{\RF}{{\sf rf}\xspace}
\newcommand{\SB}{{\sf sb}\xspace}

\newcommand{\ok}{{\tt ok}}

\newcommand{\start}[1]{\texttt{Begin}_{#1}}
\newcommand{\starto}[1]{(\start{#1},\ok)}
\newcommand{\starta}[1]{(\start{#1},\bot)}

\newcommand{\commitWriter}[1]{\texttt{CommitWriter}_{#1}}
\newcommand{\commitReadOnly}[1]{\texttt{CommitReadOnly}_{#1}}

\newcommand{\commit}[1]{\texttt{Commit}_{#1}}
\newcommand{\commito}[1]{(\commit{#1},\ok)}
\newcommand{\commita}[1]{(\commit{#1},\bot)}

\newcommand{\readcall}[1]{\texttt{Read}_{#1}}

\newcommand{\reado}[3]{(\readcall{#1}(#2), #3)}
\newcommand{\reada}[2]{(\readcall{#1}(#2),\bot)}

\newcommand{\writecall}[1]{\texttt{Write}_{#1}}
\newcommand{\writeo}[3]{(\writecall{#1}(#2,#3),\ok)}
\newcommand{\writeos}[3]{\writecall{#1}(#2,#3)}
\newcommand{\writea}[3]{(\writecall{#1}(#2,#3),\bot)}

\newcommand{\support}{\mathit{supp}}

\newcommand{\reads}{\mathit{Reads}}
\newcommand{\writes}{\mathit{Writes}}
\newcommand{\commits}{\mathit{Commits}}

\newcommand{\readas}{\mathit{Reads}^\bot}
\newcommand{\writeas}{\mathit{Writes}^\bot}

\newcommand{\idle}{\mathit{idle}}
\newcommand{\live}{\mathit{live}}

\newcommand{\fv}{\mathit{fv}}

\newcommand{\refeq}[1]{(\ref{#1})}
\newcommand{\fr}{{\sf fr}}
\newcommand{\ltsb}{{\sf sb}}
\newcommand{\lteco}{{\sf eco}}
\newcommand{\ltrf}{\mathord{\sf rf}}
\newcommand{\ltfr}{{\sf fr}}
\newcommand{\lthb}{{\sf hb}}
\newcommand{\ltsw}{{\sf sw}}
\newcommand{\sw}{{\sf sw}}
\newcommand{\ltmox}{{\sf mo}^x}
\newcommand{\ltmo}{{\sf mo}}
\newcommand{\PreExec}{{\it PE}}
\newcommand{\Approx}{{\it Pad}}
\newcommand{\Seq}{{\it Seq}}

\newcommand{\True}{{\it true}}
\newcommand{\False}{{\it false}}

\newcommand{\justified}{justified\xspace}
\newcommand{\notjustified}{unjustified\xspace}
\newcommand{\Justified}{Justified\xspace}
\newcommand{\Notjustified}{Unjustified\xspace}

\newcommand{\rdval}{{\it rdval}}
\newcommand{\wrval}{{\it wrval}}
\newcommand{\loc}{{\it var}}
\newcommand{\var}{{\it var}}

\newcommand{\imp}{\Rightarrow}
\newcommand{\expr}{\mathit{Exp}}
 
\newcommand{\restr}{\downharpoonright}
\newcommand{\ordupto}{\downarrow}
\newcommand{\ordabove}{\uparrow}

\newcommand{\observedWrites}{\mathit{E\!W}\!}
\newcommand{\OW}{\mathit{O\!W}\!}
\newcommand{\CW}{\mathit{C\!W}\!}
\newcommand{\val}{\mathit{val}}
\newcommand{\notT}{\hat{t}}
\newcommand{\pc}{\mathit{pc}}

\newcommand{\PaderStepRel}{\mathit{PaderbornStep}}

\newcommand*{\HardArrow}{\mathbin{\tikz
    [baseline=-0.15ex,-latex] \draw[->] (0pt,0.5ex) --
    (1.1em,0.5ex);}}%

\newcommand*{\SoftArrow}[1][]{\mathbin{\tikz
    [baseline=-0.15ex,-latex,densely dashed, ->] \draw[->] (0pt,0.5ex) --
    (1.1em,0.5ex);}}%

\newcommand*{\DoubleArrow}{\mathbin{\tikz
    [baseline=-0.25ex,-latex] \draw[->>] (0pt,0.5ex) --
    (1.1em,0.5ex);}}%

\newcommand{\seqarrow}{\DoubleArrow}
\newcommand{\seqtempord}{\DoubleArrow}
\newcommand{\seqcausord}{\prec}
\newcommand{\seqrestr}{\mathop{\downharpoonright}}

\begin{abstract}
  This paper develops an operational semantics for a release-acquire
  fragment of the C11 memory model with relaxed accesses. We show that
  the semantics is both sound and complete with respect to the
  axiomatic model. The semantics relies on a per-thread notion of
  observability, which allows one to reason about a weak memory C11
  program in program order. On top of this, we develop a proof
  calculus for invariant-based reasoning, which we use to verify the
  release-acquire version of Peterson's mutual exclusion algorithm.
\end{abstract}

\maketitle

\renewcommand{\shortauthors}{S. Doherty, B. Dongol, H. Wehrheim, and J. Derrick}

\newcommand{\eqrng}[2]{(\ref{#1}-\ref{#2})}
\newcommand{\refalg}[1]{Algorithm~\ref{#1}}
\newcommand{\refprop}[1]{Proposition~\ref{#1}}
\newcommand{\reffig}[1]{Figure~\ref{#1}}
\newcommand{\refthm}[1]{Theorem~\ref{#1}}
\newcommand{\reflem}[1]{Lem\-ma~\ref{#1}}
\newcommand{\refcor}[1]{Corollary~\ref{#1}}
\newcommand{\refsec}[1]{Section~\ref{#1}}
\newcommand{\refex}[1]{Example~\ref{#1}}
\newcommand{\refdef}[1]{Definition~\ref{#1}}
\newcommand{\reflst}[1]{Listing~\ref{#1}}
\newcommand{\refchap}[1]{Chapter~\ref{#1}}
\newcommand{\reftab}[1]{Table~\ref{#1}}

\newcommand{\LE}{\mathit{LE}}
 
\newcommand{\WSC}{{\sf WSC}}
\newcommand{\USC}{{\sf U}}
\newcommand{\RSC}{{\sf RSC}}

\newcommand{\AVar}{{\it AVar}}

\renewcommand{\SC}{{\sf SC}\xspace}
\newcommand{\ltsc}{{\sf sc}\xspace}
\newcommand{\PInv}{{\it PInv}}

\newcommand{\action}[3]{\ensuremath{
\begin{array}[t]{l@{~}l}
\multicolumn{2}{l}{#1}\\
\textsf{Pre:}&#2\\
\textsf{Eff:}&#3
\end{array}
}}

\renewcommand{\iff}{\Leftrightarrow}

\tikzset{
    mo/.style={dashed,->,>=stealth,thick,black!20!purple},
    hb/.style={solid,->,>=stealth,thick,blue},
    sw/.style={solid,->,>=stealth,thick,black!50!green},
    rf/.style={dashed,->,>=stealth,thick,black!50!green},
    fr/.style={dashed,->,>=stealth,thick,red}
 }

\newcommand{\alib}{{\cal M}}

\newcommand{\emptyseq}{\langle~\rangle}
\newcommand{\ord}{{\it ord}}

\newcommand{\restrict}{\downharpoonright}

\newcommand{\ES}{\mathsf{ES}}

\newcommand{\bbA}{\mathbb{A}}
\newcommand{\bbB}{\mathbb{B}}
\newcommand{\bbC}{\mathbb{C}}
\newcommand{\bbD}{\mathbb{D}}
\newcommand{\causalord}[1]{\prec_{#1}}
\newcommand{\causalC}{\causalord{\bbC}}
\newcommand{\causalLC}{\causalord{{\cal L}, \bbC}}

\newcommand{\po}{{\sf po}}

\newcommand{\libf}{\mathcal}
\newcommand{\Prog}{{\it Prog}}
\newcommand{\Comm}{{\it Com}}
\newcommand{\Exp}{{\it Exp}}
\newcommand{\PComm}{Comm_{\parallel}}
\newcommand{\whilestep}[1]{\stackrel{#1}{\longrightarrow}}
\newcommand{\parastep}[1]{\stackrel{#1}{\rightarrow_{\parallel}}}

\newcommand{\whileEnd}{\mathbf{end}}

\newcommand{\traceArrow}[1]{\stackrel{#1}{\longrightarrow}}
\newcommand{\ltsArrow}[1]{\stackrel{#1}{\Longrightarrow}}

\newcommand{\enter}{{\it enter}}
\newcommand{\critical}{{\it critical}}
\newcommand{\exit}{{\it exit}}

\newcommand{\Act}{{\sf Act}}

\newcommand{\kwtag}{{\it tag}}
\newcommand{\tid}{{\it tid}}
\newcommand{\act}{{\it act}}
\newcommand{\Op}{A}

\newcommand{\emptymap}{emptymap}
\newcommand{\emptyord}{empord}

\section{Introduction}

Intensive research on the correctness of shared-memory concurrent
programs over the last three decades has resulted in a variety of
tools and techniques. However, the vast majority of these
have been developed on the assumption of \emph{sequential
  consistency}~\cite{DBLP:journals/tc/Lamport79}. Programs running on modern
hardware execute using \emph{weak memory models}
\cite{DBLP:journals/computer/AdveG96}, requiring many of these
techniques to be reworked.

This paper is focused on the C11 memory model, which has been the
topic of several recent papers (e.g.,
\cite{DBLP:conf/pldi/LahavVKHD17,DBLP:conf/popl/KangHLVD17,DBLP:conf/popl/LidburyD17,DBLP:conf/ecoop/KaiserDDLV17,DBLP:conf/popl/BattyDW16,DBLP:conf/vmcai/DokoV16,DBLP:conf/oopsla/NienhuisMS16,DBLP:conf/pdp/HeVQF16,DBLP:conf/popl/LahavGV16,DBLP:conf/popl/BattyDG13,DBLP:conf/popl/BattyOSSW11}).
Typically the C11 memory model is described using an axiomatic
semantics
\cite{DBLP:conf/popl/BattyDG13,DBLP:conf/popl/BattyDW16,DBLP:conf/popl/BattyOSSW11,DBLP:conf/pldi/LahavVKHD17}
via a two-step procedure. (1)~Construct {\em candidate executions} of
a program comprising low-level (e.g., read/write operations) in which
reads may return an arbitrary value. (2)~Apply a number of axioms over
the memory model to rule out invalid candidate executions. Such axioms
may state, for instance, that every read is validated by a write that
has written the value read. Of particular interest are axioms that
exclude certain cycles from arising.  However precise, axiomatic
definitions are unsuitable for program verification (in particular,
those involving invariant-based reasoning), which requires one to
consider the step-wise execution of a program. There has therefore
been a substantial effort to develop an operational semantics: for
weak memory models in general
\cite{DBLP:conf/ecoop/KaiserDDLV17,DBLP:conf/popl/KangHLVD17,LahavV15}
and for C11
specifically~\cite{DBLP:conf/oopsla/NienhuisMS16,DBLP:conf/popl/LidburyD17}.

Our key goal in this paper is to develop an operational model that
supports verification of weak memory C11 programs.  Like many
programming languages, C11 has several advanced features, e.g.,
speculation, that contributes to the complexity of the logics for
reasoning about them. Some operational models
(e.g.,~\cite{DBLP:conf/oopsla/NienhuisMS16}) attempt to deal with the
full complexity of the language and its behaviour.  Other models focus
on a well-behaved and well-understood fragment (e.g.,~\cite{LahavV15,
  DBLP:conf/popl/KangHLVD17}).  In order to support an intuitive
verification method, we take the latter course.  We do \emph{not}
handle some-forms of speculation (thin-air reads), release sequences,
non-atomic accesses or sequentially consistent accesses.  This leaves
us with the so-called \emph{RAR
  fragment}~\cite{DBLP:conf/popl/BattyDW16} of C11 (see
\refsec{sec:rc11-memory-model}), where $\ltsb \cup \ltrf$ is acyclic,
and thus dependencies between operations are easier to manage. All
read/write/update operations are either \emph{relaxed} or
\emph{synchronised} via release-acquire annotations. Acyclicity of
$\ltsb \cup \ltrf$ precludes behaviours allowed by hardware
architectures such as Power \cite{DBLP:conf/fm/LahavV16}. Thus, to
ensure programs proved correct by our logic remain sound, one must
ensure adequate fencing of independent instructions during compilation
(see \cite{DBLP:conf/fm/LahavV16} for details).

This paper comprises three main contributions. The first contribution
is an operational semantics for the RAR fragment that we prove to be
both \textit{sound} and \textit{complete} with respect to the
axiomatic definition. Our semantics (like
\cite{PodkopaevSN16,DBLP:conf/popl/KangHLVD17}) allows each thread to
have its own (per-thread) \emph{observations} of memory. We build on
the recently proposed \emph{extended coherence
  order}~\cite{DBLP:conf/pldi/LahavVKHD17} (which is the transitive
closure of the \emph{communication relation}
in~\cite{AlglaveMT14}). The extended coherence order describes the
order of reads and writes to a variable (see \refex{ex:eco}), which in
turn enables one to define how events may be introduced in a valid C11
execution without violating validity of the axioms.

We combine the extended coherence order with the causality relation of
C11 (formalised by \emph{happens-before}) to define the set of writes
already \emph{encountered} by each thread. This set is in turn used to
define the writes observable by the thread (see
\refsec{sec:observ-event-semant}). Our operational semantics naturally
builds on observability: reads are validated on-the-fly (as opposed to
a post-hoc manner in the axiomatic semantics). Thus, each state
constructed using the transition relations of our operational
semantics is a valid C11 state (see
\refsec{sec:soundn-compl}). Moreover, we show that any candidate
execution that is valid according to the axiomatic semantics can be
generated by our operational semantics. 

The second contribution is a verification technique that builds on the
operational semantics to enable inductive reasoning over the program
steps.  One difficulty in using an operational semantics of
weak-memory to support verification is the fact that the state spaces
of such operational models are far more complicated than the state
space that one would use for a verification over sequentially
consistent memory, where the shared store can be represented using a
simple mapping from variables to values. We address this issue by
developing a notation that builds on conventional reasoning (over
sequentially consistent memory).  For example, we include assertions
that ensure a thread will read a particular value in a C11 state and
assertions that ensure happens-before order between writes to
different variables. The former is analogous to equations on variables
and their values in the conventional setting; the latter has no direct
analogue in a sequentially consistent setting (the closest analogue is
the use of auxiliary variables \cite{DBLP:books/garland/Owicki75} to
record whether certain operations have already occurred).

Our third contribution is the demonstration of the utility of our verification method by proving the mutual exclusion
property of a C11 version of Peterson's algorithm \cite{PetersonBlog}.

\section{Command Language} 
\label{sec:comm-lang-unint}
This section describes our command language and defines its
\emph{uninterpreted operational semantics}; namely, an operational
semantics that generate the read, write or update \emph{action} for
each step of the corresponding command. These actions are in turn used
to generate state transitions in \refsec{sec:paderb-semant-c11}, where
the reads and writes are interpreted in a C11 state. Such a decoupled
approach is inspired by the approach taken by Lahav et
al.~\cite{DBLP:conf/popl/LahavGV16}.

\subsection{Syntax}

The syntax of commands (for a single thread) is defined by the
following grammar, where $\Exp$ and $\Comm$ define expressions and
commands, respectively. We assume that $\ominus$ is a unary operator
(e.g., $\neg$), $\otimes$ is a binary operator (e.g., $\land$,
$\lor$), $B$ is an expression (of type $\Exp$) that evaluates to a
boolean, $x$ is a variable (of type $\Var$) and $n$ is a value (of
type $\Val$).  \medskip

\begin{tabular}[t]{r@{~}l}
  $\Exp$   ::= & ${\sf Val} \mid \Exp ^{\sf A}  \mid \ominus Exp \mid Exp \otimes Exp$ \\[1mm]
  $\Comm$ ::= & $\kwskip \mid x.\kwswap(n)^{\sf RA} \mid x := \expr \mid x :=^{\sf R} \expr 
                \mid$ \\
               & $\Comm ; \Comm \mid \kwif~B\ \kwthen\ \Comm\ \kwelse\ \Comm \mid $\\
               & $\kwwhile\ B\ \kwdo\ \Comm$  
\end{tabular} \medskip

Commands have their standard meanings.  
The only exceptions are the synchronising annotations, \emph{release}
${\sf R}$, \emph{acquire} ${\sf A}$ and \emph{release-acquire}
${\sf RA}$ (which we describe in detail below), and the command
\kwswap, which generates a read-modify-write update event, atomically
swapping the variable $x$ with value $n$. Note that (for simplicity),
we only present a release-acquire version of the \kwswap\ operation,
but leave in the {\sf RA} annotation for emphasis. Furthermore, we
assume unannotated accesses are \emph{relaxed}, i.e., data races do
not give rise to undefined behaviour; however it is straightforward to
extend the semantics to incorporate non-atomic accesses (which
potentially generate undefined behaviour).

\newcommand{\defn}{\mathrel{\widehat{=}}}

\newcommand{\flag}{\mathit{flag}}
\newcommand{\turn}{\mathit{turn}}


\begin{example}[Peterson's algorithm]
  The running example for this paper will be the classic Peterson's
  mutual exclusion algorithm for two threads (see
  \refalg{alg:petersons-ra}) implemented using release-acquire
  annotations (this algorithm is taken from \cite{PetersonBlog}).  As
  with Peterson's original algorithm, variable $\flag_i$ is used to
  indicate whether thread $i$ intends to enter its critical section
  and a shared variable $\turn$ is used to ``give way'' when both
  threads intend to enter their critical sections at the same time.

  The difference in the C11 implementation is with the synchronisation
  annotations. The flag variable is set to $\True$
  (line~\ref{set-flag}) using relaxed atomics (which does not induce
  any synchronisation), but is set to $\False$ (line \ref{unset-flag})
  using a release annotation. The intention of the latter is to
  synchronise this write to $\flag$ with the read of $\flag$ at
  line~\ref{busy-wait} in the other thread.  The value of $\turn$ is
  set using a $\kwswap$ command, which induces release-acquire
  synchronisation. 
  Note that the read of $\turn$ within the guard of the busy wait loop
  (line~\ref{busy-wait}) is relaxed. However, as we shall see, the
  algorithm still satisfies the mutual exclusion property.

\begin{algorithm}[!t]
  \caption{Peterson's algorithm with release-acquire}\label{alg:petersons-ra}
  \small
  \raggedright $\textbf{Init:}\ \ \flag_1 = \False \land \flag_2 = \False \land \turn = 1$\\
  
\algrenewcommand\algorithmicindent{0.75em} 
\begin{varwidth}[t]{0.5\columnwidth}
  \begin{algorithmic}[1] \small
    \Thread{1}
    \State $\flag_1$ := $\True$ ; \label{set-flag}
    \State $\turn$.\kwswap(2)$^{\sf RA}$ ; \label{swap-turn} 
    \MyWhile {
      \begin{tabular}[t]{@{}l@{}}
        ($\flag_2 \!=\! \True)^{\sf A}$\\
        ${} \land \turn$ = 2
      \end{tabular}
    } \label{busy-wait}
    \Statex \qquad 
    \textsf{\textbf{do}} \Skip
    \EndMyWhile 
    \State Critical section ; \label{critical-section}
    \State $\flag_1$ :=$^{\sf R}$ \False ; \label{unset-flag}
    \EndThread
  \end{algorithmic}
\end{varwidth}
\quad
\begin{varwidth}[t]{0.5\columnwidth}
  \begin{algorithmic}[1] \small
    \Thread{2}
    \State $\flag_2$ := \True ; 
    \State $\turn$.\kwswap(1)$^{\sf RA}$ ; 
    \MyWhile {
      \begin{tabular}[t]{@{}l@{}}
        ($\flag_1 \!= \!\True)^{\sf A}$\\
        {} ${}\land \turn$ = 1 
      \end{tabular}
    }
    \Statex \qquad 
    \textsf{\textbf{do}} \Skip
    \EndMyWhile
    \State Critical section ;
    \State $\flag_2$ :=$^{\sf R}$ \False ;
    \EndThread
  \end{algorithmic}
\end{varwidth}
\end{algorithm}

\end{example}

\subsection{Uninterpreted semantics}
\label{sec:unint-semant}
The uninterpreted operational semantics of commands is given by a
relation
$\mathop{\longrightarrow} \subseteq \Comm \times \Act_\tau
\times \Comm$, where
\[
  \Act = \bigcup_{x \in \Var; m, n \in \Val} \{
  \begin{array}[t]{@{}l@{}}
    rd(x, n), rd^{\sf A}(x,
    n), wr(x, n), \quad \\
    \hfill wr^{\sf R}(x, n), upd^{\sf RA}(x, m, n)\}
  \end{array}
\]
$\tau \notin \Act$ is a \emph{silent action} and
$\Act_\tau = \Act \cup \{\tau\}$.  We write $C \whilestep{a} C'$ for
$(C, a, C') \in \mathop{\longrightarrow}$.

\begin{figure*}[t]
  \centering \small 
 
  $ 
  \inference{ 
    x \in \fv(E) \quad n \in \Val \\ a = 
    rd(x, n) }
  {eval(E,a,E[n/x])}
  $
  \ \
  $
  \inference{ 
    x \in \fv(E) \quad n \in \Val  
    \\ a = 
    rd^{\sf A}(x, n)}
  {eval(E^{\sf A},a,E[n/x])}
  $ \medskip
  \ \ 
  $
  \inference{\fv(E) \neq \emptyset \\ eval(E,a,E')}
    {eval(\ominus E, a, \ominus E')}
  $   \ \ 
  $
  \inference{
    \fv(E_1)  \neq \emptyset 
    \\  eval(E_1,a,E_1')}
   {eval(E_1 \otimes E_2, a, E_1' \otimes E_2)}
  $
  \ \ 
  $
  \inference{
    \fv(E_1)  = \emptyset 
    \\  eval(E_2,a,E_2')}
   {eval(E_1 \otimes E_2, a, E_1 \otimes E_2')}
  $ \vspace{-3mm}

  \caption{Expression evaluation}
  \label{fig:expr-eval}

\end{figure*}  
An expression evaluation step is formalised by a relation
$eval(E, a, E')$, where $E, E'$ are expressions and $a$ is a
\emph{read action} that is generated by the evaluation step (see
\reffig{fig:expr-eval}). We assume $\fv(E)$ returns the set of free
variables in $E$. Note that $eval(E, a, E')$ is only defined when
\mbox{$\fv(E) \neq \emptyset$}. Moreover, in the presence of a binary
operator, expression evaluation is assumed to take place from left to
right. The notation $E[n/x]$ stands for expression $E$ with variable $x$ replaced by value $n$. 

\begin{figure*}[t]
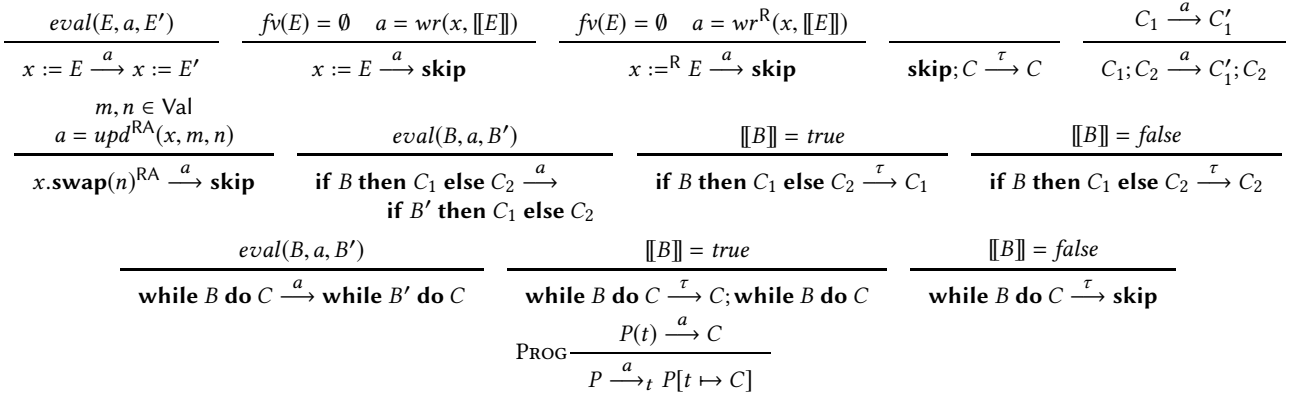

  \centering \small

  $ \inference{eval(E, a, E') }{x := E \whilestep{a} x:= E' } $ 
  \ \ 
  $ \inference{ \fv(E) = \emptyset \quad a = wr(x,[\![E]\!])}{x := E
    \whilestep{a} \kwskip } $
  \ \ 
  $ \inference{ \fv(E) = \emptyset \quad a = wr^{\sf R}(x,[\![E]\!])}{x
    :=^{\sf R} E \whilestep{a} \kwskip } $
  \ \ 
  $\inference{}{\kwskip ; C \whilestep{\tau} C}$
  \ \
  $\inference{C_1 \whilestep{a} C_1'}{C_1 ; C_2 \whilestep{a}
    C_1' ; C_2}$
  \medskip

  $\inference{m, n \in \Val \\ a = upd^{\sf RA}(x,m,n) }
  {x.\kwswap(n)^{\sf RA} \whilestep{a} \kwskip}$
  \ \ 
  $ \inference{eval(B,a,B')}{
    \begin{array}[t]{@{}l@{}}
      \kwif\ B\ \kwthen\ C_1\ \kwelse\ C_2
      \whilestep{a}\\
      \qquad\quad  \kwif\ B'\ \kwthen\ C_1\ \kwelse\ C_2
    \end{array}
  } $
  \ \ 
  $ \inference
  {[\![B]\!] = \True}
  {\kwif \ B\ \kwthen\
    C_1\ \kwelse\ C_2 \whilestep{\tau} C_1} $ 
  \ \
  $ \inference
  {[\![B]\!] = \False}
  {\kwif\ B\ \kwthen\
    C_1\ 
    \kwelse\ C_2 \whilestep{\tau} C_2} $

  \medskip
  
  $ \inference{eval(B,a,B')}{
    \begin{array}[t]{@{}l@{}}
      \kwwhile\ B\ \kwdo\ C
      \whilestep{a}  \kwwhile\ B'\ \kwdo\ C
    \end{array}
  } $
  \ \ 
  $ \inference{[\![B]\!] = \True}
  {
    \begin{array}[t]{@{}l@{}}
      \kwwhile\ B\ \kwdo\ C  \whilestep{\tau}  C ; \kwwhile\ B\ \kwdo\ C
    \end{array}
  }$ 
  \ \
  $ \inference{[\![B]\!] = \False}
  {\kwwhile\ B\ \kwdo\ C  \whilestep{\tau}  \kwskip}$ 
  \ \ \ $\text{\sc Prog}
  \inference{P(t) \whilestep{a} C} {P \whilestep{a}_t P[t
    \mapsto C]}$
    \vspace{-3mm}
    
  \caption{Uninterpreted operational semantics of
    commands and programs}
  \label{fig:comm-sem}
\end{figure*}

The uninterpreted operational semantics for commands is given in
\reffig{fig:comm-sem}.  Again, most of these rules are
straightforward. We assume $[\![E]\!]$ denotes the value of
(variable-free) expression $E$. An assignment $x := E$ generates a
read action whenever $\fv(E) \neq \emptyset$ and a write action
whenever $\fv(E) = \emptyset$. A $\kwswap$ command generates an update
action, and guard evaluation either generates a read or a silent
action.

Note that the uninterpreted operational semantics allows any value to
be read. Thus, we have the following property.
\begin{proposition} For all $m, m' \in \Val$, if
  $C \xrightarrow{rd(x,m)} C'$, then $C \xrightarrow{rd(x,m')} C'$; if
  $C \xrightarrow{rd^{\sf A}(x,m)} C'$, then
  $C \xrightarrow{rd^{\sf A}(x,m')} C'$; and if
  $C \xrightarrow{upd^{\sf RA}(x,m,n)} C'$, then
  $C \xrightarrow{upd^{\sf RA}(x,m',n)} C'$.
\end{proposition}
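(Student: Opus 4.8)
The plan is to prove all three implications uniformly by rule induction on the derivation of the hypothesised transition, carried out simultaneously with an induction on the expression-evaluation judgement $eval$ of \reffig{fig:expr-eval}; this mutual induction is forced on us because the two read actions are emitted inside $eval$ and are then threaded outward through the command rules of \reffig{fig:comm-sem}. I would take as base cases the three rules that actually emit an action---the two $eval$ axioms producing $rd(x,n)$ and $rd^{\sf A}(x,n)$, and the $\kwswap$ axiom producing $upd^{\sf RA}(x,m,n)$---and treat every other rule (the assignment rule, sequential composition, the conditional- and loop-guard rules, and the unary and binary operator rules of $eval$) as a congruence that merely copies the emitted action from a premise to its conclusion. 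The single observation driving every case is that the read value enters each emitting rule only as an arbitrary element of $\Val$, constrained by nothing but membership in $\Val$; hence the identical rule instance fires when $m$ is replaced by $m'$.

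For the update clause this already delivers the statement with the same target $C'$: the only rule that can produce an $upd^{\sf RA}(x,m,n)$ action is the $\kwswap$ axiom, whose source is $x.\kwswap(n)^{\sf RA}$ and whose target is $\kwskip$ irrespective of the read value $m$, so the witness for $m'$ lands at the very same $\kwskip$, and the sequential-composition congruence carries this through without disturbing the target. For the two read clauses I would instead start from a derivation witnessing $C \whilestep{rd(x,m)} C'$ and replay it with $m'$ substituted for $m$ at the node introducing the read action. Every congruence step replays verbatim, since each branches only on the shape of its premise's action and never on the value that action carries, so the induction hypothesis applies and the derivation reassembles into a transition labelled $rd(x,m')$ (and symmetrically for $rd^{\sf A}$).

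The hard part will be confirming that this replay lands at the \emph{identical} target $C'$ rather than at some value-dependent relative of it---exactly the point at which a careless argument is tempted to reinterpret $C'$. Since the structural congruences introduce no dependence on the read value of their own (each simply transplants the result supplied by its premise into its conclusion), the entire question of whether the target moves as the read value changes is localised to the single $eval$ axiom that introduces the action. The remaining, and decisive, step is therefore to analyse that axiom and establish that the continuation it hands back is the same for $m$ and for $m'$; once this is secured, the congruences propagate the equality of targets up to the top-level $C'$ and all three clauses close together.
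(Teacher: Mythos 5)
Your proof cannot close in the form you propose, and the failure is located precisely at the step you deferred as ``decisive''. The two $eval$ axioms of \reffig{fig:expr-eval} return the continuation $E[n/x]$, which \emph{does} depend on the value read: since their premises require $x \in \fv(E)$, for $m \neq m'$ the continuations $E[m/x]$ and $E[m'/x]$ are distinct expressions, and your (correct) observation that all other rules are value-insensitive congruences means this difference is transplanted verbatim up to the top-level target. Concretely, $y := x \whilestep{rd(x,1)} (y := 1)$, but the only $rd(x,2)$-labelled transition from $y := x$ is to $y := 2$, so a replay with $m'$ in place of $m$ lands at a \emph{different} command whenever $m \neq m'$; the same happens for acquiring reads and for guard evaluation in conditionals and loops. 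Only your update clause holds with the identical target, since the $\kwswap$ axiom's target is $\kwskip$ independently of the value $m$ read.

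For comparison: the paper offers no proof at all --- the proposition is stated as an immediate consequence of the remark that the uninterpreted semantics places no constraint on the value read --- and it is evidently intended as a \emph{receptiveness} property: if $C \whilestep{rd(x,m)} C'$ then there exists some $C''$ with $C \whilestep{rd(x,m')} C''$, where $C''$ is obtained from the given derivation by carrying $m'$ instead of $m$ through the emitting $eval$ axiom (so that, e.g., $E[m/x]$ becomes $E[m'/x]$). Read this way, your mutual induction on the $eval$ judgement and the command transition rules is exactly the right (and essentially only) argument, and every case you set up goes through. But as a proof of the literal statement with the same $C'$ in hypothesis and conclusion, the final step you hoped to secure --- that the $eval$ axiom ``hands back the same continuation for $m$ and for $m'$'' --- is false, so you should either weaken the conclusion to an existential over the target, or state explicitly how the target is transformed by the substitution of $m'$ for $m$.
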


For simplicity, we assume concurrency at the top level only. We let $T$ be
the set of all threads and use function of type $\Prog: T \to \Comm$
to model a program comprising multiple threads. The uninterpreted
operational semantics of a program is given by a relation
$\mathop{\longrightarrow} \subseteq \Prog \times T \times \Act_\tau
\times \Prog$ (using overloading). As before, we write
$P \whilestep{a}_t P'$ for
$(P, t, a, P') \in \mathop{\longrightarrow}$. An evaluation step of a
program $P$ is given by the rule {\sc Prog} (\reffig{fig:comm-sem}), which relies on the
uninterpreted operational semantics of a command to generate an action
$a$ and command $C$ from the command $P(t)$. The program after taking
a transition is the program $P$ but with $t$ mapped to the new command
$C$.


Since threads execute independently in the uninterpreted semantics, all actions
commute.
\begin{proposition}
\label{prop:program-commute}
  If $P \whilestep{a_1}_{t_1} P_1$ and
  $P_1 \whilestep{a_2}_{t_2} P'$ and $t_1 \neq t_2$, then there exists
  a $P_2$ such that $P \whilestep{a_2}_{t_2} P_2$ and
  $P_2 \whilestep{a_1}_{t_1} P'$.
\end{proposition}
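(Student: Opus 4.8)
The plan is to prove the claim by a direct inversion of the rule {\sc Prog}, exploiting the fact that a command step of thread $t$ touches only the $t$-component of the program map. First I would invert {\sc Prog} on each hypothesis. From $P \whilestep{a_1}_{t_1} P_1$ the rule forces a command step $P(t_1) \whilestep{a_1} C_1$ for some $C_1 \in \Comm$ with $P_1 = P[t_1 \mapsto C_1]$. Likewise, from $P_1 \whilestep{a_2}_{t_2} P'$ I obtain $P_1(t_2) \whilestep{a_2} C_2$ for some $C_2$ with $P' = P_1[t_2 \mapsto C_2]$.

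The key local observation is that since $t_1 \neq t_2$, overwriting the $t_1$-entry does not disturb the $t_2$-entry: $P_1(t_2) = P[t_1 \mapsto C_1](t_2) = P(t_2)$. Hence the command step witnessing the $t_2$-transition is already enabled in $P$, i.e.\ $P(t_2) \whilestep{a_2} C_2$. Applying {\sc Prog} I set $P_2 := P[t_2 \mapsto C_2]$ and obtain $P \whilestep{a_2}_{t_2} P_2$, the first half of the required diamond.

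For the second half, the same disjointness gives $P_2(t_1) = P[t_2 \mapsto C_2](t_1) = P(t_1)$, so the original command step $P(t_1) \whilestep{a_1} C_1$ replays from $P_2$, yielding $P_2 \whilestep{a_1}_{t_1} P_2[t_1 \mapsto C_1]$ by {\sc Prog}. It remains to check $P_2[t_1 \mapsto C_1] = P'$. Unfolding both sides, this is precisely the identity $P[t_2 \mapsto C_2][t_1 \mapsto C_1] = P[t_1 \mapsto C_1][t_2 \mapsto C_2]$, which holds because function updates at distinct indices commute (both sides are the map agreeing with $P$ everywhere except sending $t_1 \mapsto C_1$ and $t_2 \mapsto C_2$).

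I expect no genuine obstacle here: the content of the lemma is entirely the locality of command steps, and once inversion exposes $P_1$ and $P'$ as disjoint updates of $P$, the argument collapses to the commutativity of those updates. The one point worth stating carefully is that the action emitted by a command step depends only on $P(t)$ and not on the rest of the program, so that $a_1$ and $a_2$ reappear unchanged when the two steps are reordered --- this is immediate from the fact that {\sc Prog} passes the action through verbatim from the command semantics.
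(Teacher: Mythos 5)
Your proof is correct, and it matches the paper's reasoning: the paper states Proposition~\ref{prop:program-commute} without an explicit proof, remarking only that threads execute independently in the uninterpreted semantics, and your inversion of the {\sc Prog} rule plus the commutativity of function updates at distinct indices $t_1 \neq t_2$ is precisely the argument being appealed to. Nothing is missing --- the careful point you flag (that the emitted action depends only on $P(t)$, so $a_1$ and $a_2$ are preserved under reordering) is indeed the only detail worth making explicit.
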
 



\newlist{mylist}{enumerate*}{1}
\setlist[mylist]{label=(\arabic*)}

\renewcommand{\tag}{{\it tag}}

\newcommand{\Pad}{{\it RA}}

\section{An Operational Semantics for RAR C11}
\label{sec:paderb-semant-c11}

We now extend the semantics from \refsec{sec:comm-lang-unint} and
interpret read, write and update actions in the C11 memory
model. 
We develop an operational semantics that takes inspiration from the
axiomatic descriptions
\cite{DBLP:conf/popl/BattyDW16,DBLP:conf/popl/BattyOSSW11,DBLP:conf/pldi/LahavVKHD17}. In
\refsec{sec:soundn-compl}, we show that the operational model is in
fact equivalent to a reformulation (inspired by \cite{DBLP:conf/pldi/LahavVKHD17}) of the \emph{RAR fragment} of the RC11
semantics~\cite{DBLP:conf/popl/BattyDW16}. 


We formalise C11 states in \refsec{sec:c11-stat-observ} and define an
operational \emph{event semantics} based on observability
(\refsec{sec:observ-event-semant}). This event semantics in turn gives
rise to an \emph{interpreted semantics} (\refsec{sec:interpr-semant}).

\subsection{C11 States and Basic Orders}
\label{sec:c11-stat-observ}
The formalisation in this section follows the existing literature on axiomatic C11 semantics
\cite{DBLP:conf/popl/BattyOSSW11,DBLP:conf/pldi/LahavVKHD17}. First we give some
preliminary definitions.  \smallskip

\noindent {\bf Notation.}  
For an action $a \in \Act$, we let $\mathit{var}(a)\in \Var$ be the
variable read (or written to), $\rdval(a) \in \Val$ be the value read
and $\wrval(a)\in \Val$ be the value written. We extend actions to
\emph{events} of type $\Events = G \times \Act_\tau \times T$, where
$G$ is the set of \emph{tags} used to uniquely identify events in an
execution.  For an event $(g, a, t)$, where $g$ is a tag, $a$ is an
action, and $t$ is a thread identifier, we define $\tag(e) = g$,
$\act(e) = a$, $\tid(e) = t$, and (using lifting)
$\loc(e) = \loc(\act(e))$, $\wrval(e) = \wrval(\act(e))$,
$\rdval(e) = \rdval(\act(e))$. For a relation
$R \subseteq \Events \times \Events$, we let $R_{|t}$ and $R_{|v}$ be
the restriction of $R$ to events of thread $t$, and variable $v$,
respectively.

We let $\URA$ denote the RMW update events, and
distinguish the sets $\WR \supseteq \URA$ (write release),
$\RA \supseteq \URA$ (read acquire), $\WX$ (write relaxed) and $\RX$
(read relaxed). Finally, we define $\R = \RA \cup \RX$ (all reads) and
$\W = \WR \cup \WX$ (all writes).

\begin{definition}
  \label{def:c11-state}
  A {\em C11 state} is a triple
  $\bbD = ((D, \ltsb), \ltrf, \ltmo)$ comprising a set of events $D$
  paired with a \emph{sequenced-before} relation
  $\ltsb \subseteq D \times D$, a \emph{reads-from} relation
  $\ltrf \subseteq \W \times \R$ and a \emph{modification order}
  $\ltmo \subseteq \W \times \W$.
\end{definition}
We let $\Sigma$ denote the set of all C11 states. The three relations
in a C11 state $((D, \ltsb), \ltrf, \ltmo)$ reflect different
relationships between operations.  The sequenced-before relation
$\ltsb$ records the program order within one thread; $\ltsb_{|t}$ is a
strict total order for each thread $t$.  The reads-from relation
$\ltrf$ provides the justification for the values being read: every
read must have a corresponding action that writes the value being
read.  The modification order $\ltmo$ describes an ordering of the
writes on variables; $\ltmo_{|v}$ is a strict total order for each
variable $v \in \Var$. 



Weak memory models are often defined in terms of a
\emph{happens-before} order (denoted $\lthb$), which formalises a
notion of causality. In C11, an event occuring in a thread before
another event in the same thread induces \emph{sequenced-before} order
(denoted $\ltsb$), which in turn induces happens before order.
Moreover, reads-from edges induce happens-before order when the
corresponding actions in the edge are \emph{synchronising actions}
(i.e., a release and an acquire). This is formalised by an additional
\emph{synchronises-with} relation (denoted $\ltsw$). Formally, we
define
\begin{gather*}
  \ltsw = \ltrf \cap (\WR \times \RA) \qquad \qquad \lthb = (\ltsb
  \cup \ltsw)^+\ .
\end{gather*}
As is standard in the literature, we assume all variables are
initialised by a special thread $0 \in T$.
Define the set of {\em initialising writes} to be $\IW = \{w \in \W \mid \tid(w) = 0\}$.
The initial states of our operational model are those of the form
$\sigma_0 = ((I, \emptyset), \emptyset, \emptyset)$ where
$I \subseteq \IW$, and for each variable $x$, there is exactly one write
$w \in I$ such that $\var(w) = x$. For a state $\sigma = ((D, \_), \_,\_)$,
let $I_{\sigma} = D \cap \IW$.

The relation $\ltfr = (\ltrf^{-1} ; \ltmo) \backslash {\it Id}$ (where
$;$ is relational composition) is the ``from-read''
relation\footnote{$\ltfr$ is also referred to as ``reads-before''
  \cite{DBLP:conf/pldi/LahavVKHD17}.} 
that relates each read to all writes that are $\ltmo$-after the write
the read has read from. We must subtract ${\it Id}$ (identity) edges
from $\ltrf^{-1} ; \ltmo$ to cope with update events, which have the
potential to induce reflexivity in $\ltfr$
\cite{DBLP:conf/popl/BattyDW16,DBLP:conf/pldi/LahavVKHD17}.
\begin{example}
  \label{ex:observ-event-semant-0}
  An example C11 state is given below, where threads 1-4 have executed
  some actions.   Since the actions are unique, we
  elide the tags from each event, and we identify the thread id with
  the action itself, e.g., $wr_1(y,1)$ is the action $wr(y,1)$
  executed by thread~$1$. 
  \begin{center}
  \scalebox{0.85}{
      \begin{tikzpicture}[node distance=.5cm]      
        \node (j) at (3.75,5.25) {$wr_0(x,0),wr_0(y,0),wr_0(z,0)$};
        \node (i) at (7.5,3) {$upd_4^{\sf RA}(y,0,5)$};
        \node (h) at (7.5,4) {$rd_4(z,3)$};
        \node (g) at (0,4) {$upd_1^{\sf RA}(x,2,4)$};
        \node (b) at (2.5,3) {$wr_2^{\sf R}(x,2)$};
        \node (a) at (2.5,4) {$wr_2(y,1)$};
        \node (d) at (5,3) {$rd_3^{\sf A}(x,2)$};
        \node (c) at (5,4) {$wr_3(z,3)$};
        \path 
        (a) edge[hb] node[left] {$\ltsb$} (b)
        (c) edge[hb] node[left] {$\ltsb$} (d)
        (h) edge[hb] node[right] {$\ltsb$} (i)
        (b) edge[sw] node[below] {$\ltsw$} (d)
        (c) edge[rf] node[above]{$\ltrf$} (h)
        (d) edge[fr,out=210,in=-90] node[above]{$\ltfr$} (g)
        (b) edge[mo] node[pos=0.1,left,xshift=-2mm] {$\ltmo, \ltsw$} (g)
        (i) edge[mo,out=-140,in=-50] node[pos=0.2,above,xshift=-2mm]
        {$\ltmo,\ltfr$} (a.south east)
        (j) edge[hb] node[pos=0.8,right] {$\ltsb, \ltmo$}   (a.north)
        (j) edge[hb] node[pos=0.2,right,xshift=5mm] {$\ltsb$}   (h.north)
        (j) edge[hb] node[pos=0.4,left] {$\ltsb,\ltmo$}   (c.north)
        (j) edge[hb] node[above] {$\ltsb, \ltmo$} (g)
        (j) edge[mo,bend left=5] node[pos=0.7,below,xshift=-3mm] {$\ltmo$, $\ltrf$} (i);
      \end{tikzpicture}
    } 
  \end{center} \vspace{-2mm}
    
  \noindent The initialising writes are $\ltsb$-before all thread actions, but
  are not ordered amongst themselves. Relation $\ltsb$ also describes
  the order for each thread. 
  Relation
  $\ltmo$ describes the order of modifications for each variable. The
  unsynchronised read $rd_4(z, 3)$ is justified by the $\ltrf$ from
  $wr_3(z, 3)$, whereas the synchronised read $rd_3^{\sf A}(x, 2)$ is
  justified by the $\ltsw$ from $wr_2^{\sf R}(x,2)$ and fixed before $upd_1^{\sf RA}(x,2,4)$ via the $\ltfr$ relation. Update events are
  related by both $\ltmo$ and $\ltrf$ to the immediately preceding write, and possibly related to later
  writes/updates by $\ltmo$ and $\ltfr$. If the write being read
   is releasing, then an update induces an $\ltsw$ (e.g., see $upd_1^{\sf RA}(x,2,4)$). \hfill\qed
\end{example}

In addition, our semantics uses the {\em extended coherence
  order}\footnote{The non-transitive version of this order is commonly
  referred to as the {\sf com} order
  \cite{AlglaveMT14}.}~\cite{DBLP:conf/pldi/LahavVKHD17}, denoted
$\lteco$, which is an order that fixes the order of reads and writes
to each variable (see \refex{ex:eco} below). Formally we define:
\[ \lteco = (\ltfr \cup \ltmo \cup \ltrf)^+\] 


\begin{example}
  \label{ex:eco}
  For executions of a C11 program, $\lteco$ over a single variable
  takes the following form, where $w_1,\dots, w_5$ are writes and
  $r_1$, $r_1'$ etc are reads and $u$ is an update.

  \begin{center} \scalebox{0.9}{
      \begin{tikzpicture}[node distance=1.4cm]      
        \node (w1) {$w_1$};
        \node[right=of w1] (w2) {$w_2$};
        \node[right=of w2] (w3) {$w_3$};
        \node[right=of w3] (u)    {$u$};
        \node[right=of u] (w4) {$w_4$};
        \node[above right=0.4cm of w1,yshift=1cm]   (r1)   {$r_1$};
        \node[above right=0.4cm of w1,yshift=0.4cm] (r1')  {$r_1'$};
        \node[above right=0.4cm of w1,yshift=-0.2cm](r1'') {$r_1''$};
        \node[above right=0.4cm of w2,yshift=1cm]   (r2)   {$r_2$};
        \node[above right=0.4cm of w2,yshift=0.2cm] (r2')  {$r_2'$};
        \node[above right=0.4cm of w3,yshift=1cm]   (r3)   {$r_3$};
        \node[above right=0.4cm of u,yshift=1cm]   (r4)   {$r_4$};
        \node[above right=0.4cm of u,yshift=0.2cm] (r4')  {$r_4'$};
        \path 
        (w1)   edge[mo] node[below] {$\ltmo$} (w2)
        (w2)   edge[mo] node[below] {$\ltmo$} (w3) 
        (u)   edge[mo] node[below] {$\ltmo$} (w4)
        
        (w1)   edge[rf] node[left] {$\ltrf$} (r1)
        (w1)   edge[rf] (r1')
        (w1)   edge[rf] (r1'')
        (r1)   edge[fr] node[right] {$\ltfr$} (w2)
        (r1')  edge[fr] (w2)
        (r1'') edge[fr] (w2)
        
        (w2)   edge[rf] node[left] {$\ltrf$} (r2)
        (w2)   edge[rf] (r2')
        (r2)   edge[fr] node[right] {$\ltfr$} (w3)
        (r2')  edge[fr] (w3)

        (w3)   edge[rf,bend left] node[above] {$\ltrf$}(u)
        (w3)   edge[mo] node[below] {$\ltmo$}(u)
        (u)    edge[fr,bend left] node[right,pos=0.3] {$\ltfr$} (w4)

        (w3)   edge[rf] node[left] {$\ltrf$} (r3)
        (r3)   edge[fr] node[right] {$\ltfr$} (u)
        
        (u)   edge[rf] node[left] {$\ltrf$} (r4)
        (u)   edge[rf] (r4')
        (r4)   edge[fr] node[right] {$\ltfr$} (w4)
        (r4')  edge[fr] (w4)
        ; 
      \end{tikzpicture} }
  \end{center}
  Reads $r_1$, $r_1'$ and $r_1''$ read from the write $w_1$, inducing
  from-read edges to $w_2$ (the write that immediately follows $w_1$
  in $\ltmo$). The update $u$ induces an $\ltrf$ from $w_3$ (the write
  event immediately before $u$ in $\ltmo$) and an $\ltfr$ to $w_4$
  (the write event immediately after $u$ in $\ltmo$). \hfill \qed
\end{example}

\subsection{Event Semantics and Observability}
\label{sec:observ-event-semant}

Recalling that $\Sigma$ denotes the set of all possible C11 states and
$\W$ is the set of all writes (including updates), each step of the
event semantics is formalised by the transition relation
$\mathord{\strans{\ \ }_{\Pad}} \subseteq \Sigma \times \W_\bot \times
\Events \times \Sigma$ (see \reffig{fig:c11-opsem}), where we have
$\W_\bot = \W \cup \{\bot\}$ and $\bot \notin \W$. Again, we write
$\sigma \strans{w, e}_{\Pad} \sigma'$ for
$(\sigma, w, e, \sigma') \in \mathord{\strans{\ \ }_{\Pad}}$.

For each rule $\sigma \strans{w, e}_{\Pad} \sigma'$, $w$ is the
\emph{write being observed} by the event $e$. Strictly speaking, the
event semantics could be defined without the $w$. However, making this
observed write explicit is useful for the verification
(\refsec{sect:verif}).

\begin{figure}[t]
  \centering
  \small
  $ \inference[{\sc Read}] {a \in \{rd(x, n),
    rd^\mathsf{A}(x, n) \}  \quad  \loc(w) = x \quad \wrval(w)=n   \\
    w \in \OW_\sigma(t) \quad \ltrf' = \ltrf \cup
    \{(w,e)\}  \quad \ltmo' = \ltmo } {((D,
    \ltsb),\ltrf,\ltmo) \strans{w,e}_{\Pad} ((D, \ltsb) + e,\ltrf',\ltmo')}$
  \bigskip

  $ \inference[{\sc Write}] {
    a \in \{ wr(x,n),
    wr^{\sf R}(x,n)\} \quad w \in \OW_\sigma(t) \backslash
    \CW_\sigma \\ \loc(w) = x \quad  \ltrf' = \ltrf \quad
    \ltmo' = \ltmo[w,e] 
  }{((D,
    \ltsb),\ltrf,\ltmo) \strans{w, e}_{\Pad} ((D, \ltsb) + e,\ltrf',\ltmo')}$ \bigskip

  $\inference[{\sc RMW}] {
    a = upd^{\sf
      RA}(x,m,n) 
    \quad w \in \OW_\sigma(t) \backslash \CW_\sigma \quad \loc(w)
    = x \\ \wrval(w)=m \quad \ltrf' = \ltrf \cup \{(w,e)\} \quad
    \ltmo' = \ltmo[w,e]} {((D,
    \ltsb),\ltrf,\ltmo) \strans{w,e}_{\Pad} ((D, \ltsb) +
    e,\ltrf',\ltmo')}$

  \caption{Event semantics assuming
    $\sigma = ((D, \ltsb),\ltrf,\ltmo)$, $e = (g, a, t)$  and $g \notin tags(D)$}
  \label{fig:c11-opsem}
\end{figure}

We now describe each of the rules in \reffig{fig:c11-opsem}. Executing
each event $e$ updates $(D, \ltsb)$ to:
\begin{align*} 
  (D, \ltsb) + e
  & =
    \left(\begin{array}[c]{@{}l@{}}
      D \cup \{e\}, \\
      \ \ltsb \cup  (\{e' \in D  \mid \tid(e') \in \{\tid(e), 0\} \}
      \times \{e\} )
    \end{array}\right)
\end{align*} 
Thus, the initial writes are $\ltsb$-prior to every non-initialising
event.  Relations $\ltrf$ and $\ltmo$ are updated according to the
write events in $D$ that are observable to the thread executing the
given event. To this end, we must distinguish three sets of writes:
\emph{encountered writes} and \emph{observable writes}, which are
specific to each thread, and \emph{covered writes}, which are the set
of writes that are immediately followed, in reads-from order, by an
update event.

The set of \emph{encountered writes} are the writes that thread $t$ is
aware of (either directly or indirectly) in state
$\sigma=((D,\ltsb),\ltrf,\ltmo)$, and are given by:
\begin{align*}
  \observedWrites_\sigma(t) = \{w  \in \W \cap D \mid \exists e \in D.\
  \begin{array}[t]{@{}l@{}}
    \tid(e) = t \wedge {} \\
    (w,e) \in \lteco^?; \lthb^? \}
  \end{array}
\end{align*}
where $R^?$ is the reflexive closure of relation $R$.  Thus, for each
$w \in \observedWrites_\sigma(t)$, there must exist an event $e$ of
thread $t$ such that $w$ is either $\lteco$- or $\lthb$- or
$\lteco ; \lthb$-prior to $e$. Note that $\observedWrites_\sigma(t) = \emptyset$ if the thread $t$ has not executed any actions; as soon as the thread executes its first action, we have $I \subseteq \observedWrites_\sigma(t)$. 

From these, we determine the \emph{observable writes}, which are the
writes that thread $t$ can observe in its next read. These are defined
as:
\begin{align*}
  \OW_\sigma(t) = \{w  \in \W \cap D  \mid \forall w' \in \observedWrites_\sigma(t).\ (w,w') \notin \ltmo\}
\end{align*}
Thus, observable writes are not succeeded by any encountered write in
modification order, i.e., the thread has not seen another write
overwriting the value being read.

Finally, to guarantee {\em atomicity} of the update events, there
cannot be any write operations (in modification order) between the
write that an update reads from and the write of the update itself. We
therefore define the set of \emph{covered writes} as follows:
\begin{align*}
  \CW_\sigma = \{w \in \W \cap D \mid \exists u \in \URA.\ (w,u) \in \ltrf\}   
\end{align*}

\begin{example}
  \label{ex:observ-event-semant}
  Consider the C11 state $\sigma$ in \refex{ex:observ-event-semant-0}.
  Given that $I = \{wr_0(x,0),wr_0(y,0),wr_0(z,0)\}$ is the set of
  initialising writes, the encountered writes for each thread are as
  follows:
  \begin{align*}
    \observedWrites_\sigma(1)& = I \cup \{wr_2^{\sf R}(x,2), upd_1^{\sf RA}(x,2,4)\} \\
    \observedWrites_\sigma(2) &= I \cup \{
                                  wr_2(y,1),  
                                  wr_2^{\sf R}(x,2), 
                                  upd_4^{\sf RA}(y,0,5)\}
\\  
    \observedWrites_\sigma(3)& = I \cup\{ wr_2(y,1), wr_2^{\sf R}(x,2), wr_3(z,3),upd_4^{\sf RA}(y,0,5)\} \\     
    \observedWrites_\sigma(4)& = I \cup\{ wr_3(z,3), upd_4^{\sf RA}(y,0,5)\} 
  \end{align*} 
  
  \noindent The observable writes are hence 
  \begin{align*}
    \OW_\sigma(1) &= \{
                    \begin{array}[t]{@{}l@{}}
                      wr_0(y,0), wr_0(z,0), 
                      wr_2(y,1),  wr_3(z,3), \\
                      upd_1^{\sf RA}(x,2,4), 
                      upd_4^{\sf RA}(y,0,5)\}
                    \end{array}\\
    \OW_\sigma(2) & = \{wr_0(z,0), wr_2(y,1), wr_3(z,3), upd_1^{\sf RA}(x,2,4)
                    \} \\
    \OW_\sigma(3) & = \{wr_2(y,1),wr_2^{\sf R}(x,2), wr_3(z,3), upd_1^{\sf RA}(x,2,4)
                    \} \\ 
    \OW_\sigma(4) & =\{
                    \begin{array}[t]{@{}l@{}}
                      wr_0(x,0), wr_2(y,1),wr_2^{\sf R}(x,2), wr_3(z,3), \\
                      upd_1^{\sf RA}(x,2,4), 
                      upd_4^{\sf RA}(y,0,5)\}
                    \end{array}
  \end{align*}
  The covered writes are 
  $\CW_\sigma = \{wr_0(y,0), wr_2^{\sf R}(x,2)\}$.  \hfill \qed
\end{example}

Observable writes are used to resolve the read events in each
thread. Namely, a thread $t$ can read from any write event in
$\OW_\sigma(t)$. This is reflected in the {\sc Read} rule, where the
$\ltrf$ component is updated to record an $\ltrf$ from some observable
write $w$ to the read event $e$, provided $w$ writes to the variable
that $e$ reads and the value read matches the value written.

To explain the write and update semantics, we require some more formal
machinery.  The observable and covered writes together determine the
allowable updates to the $\ltmo$ relation after executing a write
event. Unlike SC, a write event to variable $x$ is not simply appended
to the end of $\ltmo_{|x}$. Instead we allow a thread $t$ that
performs a write $e$ (or update) to $x$ to insert $e$ after any
observable write $w$ in $\ltmo_{|x}$ that is not a covered write. This
condition is sufficient to ensure no cyclic dependencies arise as a
result of performing the write.

Given that $R[x]$ is the relational image of $x$ in $R$, we define
$R_{\Downarrow x} = \{x\} \cup R^{-1}[x]$ 
to be the set of all elements in $R$ that relate to $x$ (inclusive).
The insertion of a write event $e$ directly after a write $w$ in
$\ltmo$ is given by
\[ \textstyle\ltmo[w,e] =
  \begin{array}[t]{@{}l@{}}
    \ltmo  \cup 
    (\ltmo_{\Downarrow w} \times \{e\}) \cup (\{e\} \times
    \ltmo[w]) 
  \end{array}
\]

The rules {\sc Write} and {\sc RMW} update $\ltmo$ in the same
way. For the write event $e$ executed by thread $t$, they pick some
$w$ that writes to the same variable as $e$, is observable to $t$ and
not covered by an update event, then insert $e$ immediately after $w$
in $\ltmo$.


\begin{example}
  For the execution in \refex{ex:observ-event-semant}, no thread may
  introduce a write between $wr_1^{\sf R}(x,2)$ and $upd_1^{\sf RA}(x, 4,5)$,
  or between $wr_0(y,0)$ and $upd_5^{\sf RA}(y,0,7)$.
\end{example}

\subsection{Interpreted Semantics}
\label{sec:interpr-semant}
We now combine the event semantics with the uninterpreted semantics to
give an interpreted semantics for the language in
\refsec{sec:comm-lang-unint} overall. We give two generic rules that
allows different memory models to be plugged in for the event
semantics.

To this end, we define a {\em configuration} to be a pair $(P,\sigma)$,
consisting of a program $P$ and a state $\sigma$ of the memory
model. The command part of a configuration triggers events that are
agnostic to values. However, the memory model will only allow certain
values in read events. This idea is captured by the following two
rules combining the uninterpreted program semantics (i.e., rule {\sc
  Prog}) from \refsec{sec:unint-semant} and an event semantics in some
memory model $M$:
\begin{center}
$
\inference{P \whilestep{\tau}_t P'}{(P,\sigma)
  \ltsArrow{\bot,\tau}_M (P',\sigma)}$ \medskip
\quad
$\inference{P
  \whilestep{a}_t P' \quad \sigma \strans{w,e}_M \sigma' \\
  \act(e) = a \quad \tid(e) = t} {(P,\sigma) \ltsArrow{w,e}_{M}
  (P',\sigma')}
$
\end{center}
The first rule describes a $\tau$-step and does not change the state.
The second states that a thread can execute action $a$ in the current
state $\sigma$ only if the event semantics of the memory model in
consideration permits it.



%

\begin{example} Consider the state of Peterson's algorithm
  (\refalg{alg:petersons-ra}) in RA C11 that results when thread 1 has
  reached the guard at line~\ref{busy-wait}, and thread 2 is about to
  execute line~\ref{swap-turn}. Execution of this step introduces the
  boxed event $upd_2^{\sf RA}(\turn,2,1)$. (We use the box for
  emphasis; it does not carry any special semantic meaning.)
  \begin{center} \scalebox{0.9}{
    \begin{tikzpicture}[node distance=3cm]      
      \node (i2)  {$wr_0(\turn,1)$};
      \node (i1) [left of=i2,xshift=1cm]{$wr_0(\flag_1,\False)$};
      \node (i3) [right of=i2,xshift=-1cm] {$wr_0(\flag_2,\False)$}; 
      \node (p0a) [below of=i1,yshift=1.5cm,xshift=-1cm] {$wr_1(\flag_1, \True)$};
      \node (p0b) [below of=p0a,yshift=1.5cm] {$upd_1^{\sf RA}(\turn,1,2)$}; 
      \node (p1a) [below of=i3,yshift=1.5cm,xshift=1cm] {$wr_2(\flag_2,\True)$};
      \node (p1b) [below of=p1a,yshift=1.5cm] {\fbox{$upd_2^{\sf RA}(\turn,2,1)$}}; 
      \path (i1) edge[hb] node[left] {$\ltsb, \ltmo$} (p0a)
      (p0a) edge[hb] node[left] {$\ltsb$} (p0b)
      (p1a) edge[hb] node[right] {$\ltsb$} (p1b)
      (i1)  edge[hb] node[above] {$\ltsb$} (p1a)
      (i2)  edge[hb] node[above] {$\ltsb$} (p1a)
      (i2)  edge[hb] node[above] {$\ltsb$} (p0a)
      (i3)  edge[hb] node[right] {$\ltsb, \ltmo$} (p1a)
      (i3)  edge[hb] node[below] {$\ltsb$} (p0a)
      (i2) edge[mo] node[right] {$\ltmo, \ltrf$} (p0b)
      (p0b) edge[sw] node[below] {$\ltmo, \ltsw, \ltfr$}    (p1b); 
   \end{tikzpicture} }
\end{center}
In the state {\em without} the boxed event, thread 2 can read from 
$wr_0(\turn,1)$ via a read event, but it cannot do so via an update because $wr_0(\turn,1)$ is
covered by the existing update
$upd_1^{\sf RA}(\turn,1,2)$. Hence the update of thread~1 (when the
event in the box occurs) updates $turn$ from $2$ to $1$, which creates
$\ltmo$, $\ltsw$ and $\ltfr$ edges from $upd_1^{\sf RA}(\turn,1,2)$.

Now consider a continuation from the state with the boxed event, where
the threads read the values in their respective guards. Thread $2$ has
encountered $wr_1(\flag_1,\True)$, and hence, is no longer able to
observe $wr_0(\flag_1,\False)$. Similarly, since thread 2 has
encountered $upd_2^{\sf RA}(\turn,2,1)$ it is no longer able to
observe $wr_0(\turn, 1)$ or $upd_1^{\sf RA}(\turn,1,2)$. We therefore
conclude that thread $2$'s guard will evaluate to true, causing it to
spin at line~\ref{busy-wait}. 
In contrast, thread $1$ can read from either $wr_0(\flag_2, \False)$
or $wr_2(\flag_2, \True)$ since it has not yet encountered the event
$wr_2(\flag_2, \True)$. Similarly, since it has not yet encountered
$upd_2^{\sf RA}(\turn,2,1)$, it can read from both
$upd_1^{\sf RA}(\turn,1,2)$ and $upd_2^{\sf RA}(\turn,2,1)$. Thread 1
therefore could spin at line~\ref{busy-wait} or exit the busy
loop. Note that once it has read a new value for $\flag_2$ or $\turn$,
the previous value (in $\ltmo$-order) can no longer be read.
\end{example}

This example demonstrates how the basic synchronisation principle of
Peterson's algorithm is guaranteed by the release-acquire
annotations. Namely,
(1) the updates on $\turn$ are totally ordered via $\lthb$ due to the
release-acquire annotation on statement $\kwswap$, and (2) the thread
that is first to execute $\kwswap$, may miss to see that the other
thread has set its flag.






\section{Validation of Operational Semantics}
\label{sec:valid-oper-semant}

We now justify our operational semantics by showing it to be sound and
complete with respect to an existing axiomatic version of the C11
memory model. There are several versions of the C11 axiomatic
semantics that might be regarded as both standard and complete
\cite{DBLP:conf/popl/BattyOSSW11, DBLP:conf/popl/BattyDW16,
  DBLP:conf/pldi/LahavVKHD17}.  Our semantics deals only with the
release, acquire and relaxed annotations on operations. We call this
the {\em RAR} fragment of C11.  The standard C11 semantics also
specifies the behaviour of operations carrying {\em sequentially
  consistent} and {\em non-atomic} annotations. We ignore these
annotations here.  Our semantics closely resembles the RAR fragment of
\cite{DBLP:conf/popl/BattyDW16} and \cite{DBLP:conf/pldi/LahavVKHD17}.
Like \cite{DBLP:conf/popl/BattyDW16}, we use the convention that
update operations are represented as a single event, rather than a
read/write pair. Like \cite{DBLP:conf/pldi/LahavVKHD17}, we adopt the
constraint that $\ltsb \cup \ltrf$ is acyclic, and make use of the
extended coherence order.\footnote{In the appendix,
  we prove that our axiomatic model is equivalent to a variant of the
  RAR fragment of \cite{DBLP:conf/popl/BattyDW16}. This proof is
  supported by a mechanisation in Memalloy
  \cite{DBLP:conf/popl/WickersonBSC17}, which shows our models is
  equivalent to the RAR fragment for models upto size 7.}
The axiomatic semantics is given in
\refsec{sec:rc11-memory-model}. Soundness and completeness of the
memory model is presented in \refsec{sec:soundn-compl}.

\subsection{Background: RAR Fragment of RC11}
\label{sec:rc11-memory-model}

Axiomatic semantics start with {\em pre-executions}, which are
candidates for valid C11 executions. A number of axioms are used to
define which of these candidates are considered real executions.
Pre-executions only contain a set of events and program order (as
represented by the sequenced-before relation).  We call such a pair
$(D,\ltsb)$ a {\em pre-execution state}.  New events can be added to
pre-execution states using the $+$ operator in the same way as in
\reffig{fig:c11-opsem}. Thus, if
$(D,\ltsb)\strans{\bot, e}_\PreExec (D',\ltsb')$ then
$(D',\ltsb') = (D, \ltsb) + e$.  These pre-execution steps are
combined with the steps of a program as before, i.e., using the rules
in \refsec{sec:interpr-semant}, i.e., we replace $\strans{w,e}_{M}$ by
$\strans{\bot, e}_\PreExec$. Since the first event in
$\strans{\bot, e}_\PreExec$ is always $\bot$ (no write events are
observed), we write $\strans{\ e\ }_\PreExec$ for
$\strans{\bot, e}_\PreExec$.

\begin{proposition}
  \label{prop:PE-commute}
  If $\gamma \strans{e_1}_{\PreExec} \gamma_1$
  and $\gamma_1 \strans{e_2}_{\PreExec} \gamma'$ and
  $\tid(e_1) \neq \tid(e_2)$, then there exists a $\gamma_2$ s.t.
  $\gamma \strans{ e_2}_{\PreExec} \gamma_2$ and
  $\gamma_2 \strans{e_1}_{\PreExec} \gamma'$.
\end{proposition}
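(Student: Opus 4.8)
The plan is to exploit the fact that a pre-execution step is \emph{deterministic} in its event: by definition $\gamma \strans{e}_\PreExec \gamma'$ holds (with $\gamma = (D,\ltsb)$) exactly when $\gamma' = \gamma + e$ and the freshness side condition $\tag(e)\notin tags(D)$ from \reffig{fig:c11-opsem} is met. Writing $\gamma = (D,\ltsb)$, the hypotheses therefore pin down $\gamma_1 = \gamma + e_1$ and $\gamma' = \gamma_1 + e_2$. Proving the proposition thus reduces to exhibiting $\gamma_2 := \gamma + e_2$, checking that the two reverse steps $\gamma \strans{e_2}_\PreExec \gamma_2$ and $\gamma_2 \strans{e_1}_\PreExec (\gamma_2 + e_1)$ are enabled, and verifying the diamond-closing equality $(\gamma + e_2) + e_1 = (\gamma + e_1) + e_2 = \gamma'$.

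Enabledness is immediate from freshness: $\tag(e_2)\notin tags(D)$ because $tags(D)\subseteq tags(D\cup\{e_1\})$ while the forward step $\gamma_1\strans{e_2}_\PreExec\gamma'$ required $\tag(e_2)\notin tags(D\cup\{e_1\})$; and $\tag(e_1)\notin tags(D\cup\{e_2\})$ because $\tag(e_1)\notin tags(D)$ (from $\gamma\strans{e_1}_\PreExec\gamma_1$) together with $e_1\neq e_2$ (the events differ, as $\tid(e_1)\neq\tid(e_2)$). For the equality itself, the event-set component of both sides is plainly $D\cup\{e_1,e_2\}$, so all the work sits in the $\ltsb$ component. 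Here I would unfold the definition of $+$ and use the key observation that the predecessor set attached to a freshly inserted event $e$, namely $\{e'\in D_{\mathrm{cur}}\mid \tid(e')\in\{\tid(e),0\}\}$, does \emph{not} pick up the other new event. Concretely, when $e_2$ is inserted after $e_1$ we have $\tid(e_1)\notin\{\tid(e_2),0\}$, so $e_1$ is excluded and $e_2$'s predecessor set collapses to $\{e'\in D\mid \tid(e')\in\{\tid(e_2),0\}\}$, independent of $e_1$; symmetrically for $e_1$ inserted after $e_2$. Consequently both orders yield exactly the edge set $\ltsb\cup(\{e'\in D\mid \tid(e')\in\{\tid(e_1),0\}\}\times\{e_1\})\cup(\{e'\in D\mid \tid(e')\in\{\tid(e_2),0\}\}\times\{e_2\})$, whence $\gamma'' = \gamma'$.

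The one genuinely load-bearing hypothesis — and the only point where a naive argument can go wrong — is the treatment of the initialising thread $0$. The exclusion step above uses not just $\tid(e_1)\neq\tid(e_2)$ but also that neither event lies in thread $0$: if, say, $\tid(e_1)=0$, then inserting $e_2$ after $e_1$ \emph{would} create the edge $(e_1,e_2)$, whereas the reverse order would not, breaking the equality. This is harmless in context because pre-execution steps are always driven by program steps (rule \textsc{Prog}), and thread $0$ only contributes the initialising writes already present in the initial pre-execution state rather than taking any step; hence $\tid(e_1),\tid(e_2)\in T\setminus\{0\}$ throughout, mirroring the implicit assumption in \refprop{prop:program-commute}. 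I would state this non-initialising convention explicitly at the outset, so that both $\tid(e_1)\notin\{\tid(e_2),0\}$ and $\tid(e_2)\notin\{\tid(e_1),0\}$ follow from the single hypothesis $\tid(e_1)\neq\tid(e_2)$.
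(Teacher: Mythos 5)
Your proposal is correct, and it is in essence the argument the paper intends: the paper states \refprop{prop:PE-commute} without proof, treating it as immediate from the fact that a pre-execution step is determined by the $+$ operator, and your unfolding of $+$ (event sets plainly union to $D \cup \{e_1, e_2\}$, and the $\ltsb$ edges added for each fresh event depend only on $D$ because the other fresh event is excluded from its predecessor set) is exactly the calculation being elided. Two small remarks. First, your observation about thread $0$ is a genuine subtlety the paper glosses over: if $\tid(e_1) = 0$ the diamond would indeed fail, since inserting $e_2$ after $e_1$ creates the edge $(e_1, e_2)$ but the reverse order does not; your resolution — that pre-execution steps are generated only through rule {\sc Prog}, so neither event belongs to the initialising thread, whose writes sit in the initial state — is the right one and matches the implicit convention behind \refprop{prop:program-commute}. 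Second, a cosmetic slip in your freshness check: $\tid(e_1) \neq \tid(e_2)$ gives $e_1 \neq e_2$ but not by itself $\tag(e_1) \neq \tag(e_2)$; the tag distinctness you need follows instead from the side condition of the second forward step, namely $\tag(e_2) \notin tags(D \cup \{e_1\})$, which you already invoked in the preceding clause, so the proof stands once that justification is substituted.
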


\noindent Once a candidate pre-execution $(D,\ltsb)$ is computed, it
is augmented with the relations $\ltrf$ and $\ltmo$.


\begin{definition} 
  \label{def:legal-execution}
  A C11 execution $((D, \ltsb), \ltrf, \ltmo)$ is {\em valid} iff
  each of the following axioms hold:
  
  \smallskip
  \noindent {\sc SB-Total.} Sequenced-before is a total order over the events
  of each (non-initialising) thread and orders all initialising writes
  before all other events.  Formally, for any $e, e' \in D$,
  $$
  \begin{array}[t]{@{}l@{}}
    ((e, e') \in \ltsb \imp \tid(e) = 0 \lor \tid(e) = \tid(e')) \land {} \\ 
    (\tid(e) = 0 \land \tid(e') \noteq 0 \imp (e,e') \in \ltsb) \land {} \\
    (\tid(e) \!\neq\! 0 \land \tid(e) \!=\! \tid(e') \land  e \!\neq\! e' \imp 
    (e, e') \in \ltsb \cup \ltsb^{-1})\ . \\
  \end{array}
  $$
  
  \smallskip
  \noindent {\sc MO-Valid.} Modification order is a strict order on
  $\W \cap D$ consisting of a disjoint union of relations
  $\{\ltmo_{|x}\}_{x\in \Var}$ which are themselves total. That is,
  for any $w, w' \in \W \cap D$,
  $$
  \begin{array}[t]{@{}l@{}}
    ((w, w') \in \ltmo \imp \loc(e) = \loc(e')) \land {} \\
    (\tid(w) = 0 \land \tid(w') \neq 0 \land \loc(w) = \loc(w') \imp
    \quad \\
    \hfill (w,w') \in \ltmo) \land {} \\
     (\tid(w) \neq 0 \land \tid(w') \neq 0 \land {} \\
    \qquad \loc(w) = \loc(w') \land w \neq w'  
    \imp (w, w') \in \ltmo \cup \ltmo^{-1})\ . 
    \\
  \end{array}
  $$
  
  \smallskip
  \noindent {\sc RF-Complete.} Each read matches exactly one write in
  the execution, i.e., for every $e \in \R \cap D$ there is exactly
  one $w \in \W \cap D$ such that $(w, e) \in \ltrf$, and for every
  $(e, e') \in \ltrf$,
  \[
    e \in \W \land e' \in \R \land \loc(e) = \loc(e') \land \wrval(e) = \rdval(e').
  \]
  
  \smallskip \noindent {\sc No-Thin-Air.}
  The relation $\mathord\ltsb \cup \ltrf$ is acyclic. 
  
  \smallskip \noindent {\sc Coherence.}  The relations $\lthb; \lteco^?$
  and $\lteco$ are irreflexive.
\end{definition} 

\begin{definition}
  A pre-execution state $\gamma$ is \emph{justifiable} iff there exist
  relations $\ltrf$ and $\ltmo$ such that $(\gamma, \ltrf, \ltmo)$ is
  valid.
\end{definition}

\subsection{Soundness and Completeness}
\label{sec:soundn-compl}
Having defined a new operational semantics for C11, the next step is
now the comparison with the existing axiomatic semantics.  In the
following, we prove the before given operational and axiomatic
semantics to be equal.  We start by showing that the executions of the
operational semantics are all consistent.

\begin{theorem}
  \label{thm:completeness}
  Let $\sigma=((D,\ltsb),\ltrf,\ltmo)$ be a C11 state reachable from
  $\sigma_0$ using relation\ \  $\strans{\ \ \ }_{\Pad}$. Then $\sigma$
  satisfies {\sc SB-Total}, {\sc MO-Valid}, {\sc RF-Complete}, {\sc
    No-Thin-Air} and {\sc Coherence}.
\end{theorem}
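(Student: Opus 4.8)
The plan is to proceed by induction on the number of $\strans{\ \ }_{\Pad}$-steps used to reach $\sigma$ from $\sigma_0$. For the base case, $\sigma_0 = ((I,\emptyset),\emptyset,\emptyset)$ has empty $\ltsb$, $\ltrf$ and $\ltmo$ and contains exactly one initialising write per variable, so all five axioms hold vacuously: {\sc SB-Total} and {\sc MO-Valid} reduce to constraints on a set carrying no non-trivial order, {\sc RF-Complete} holds as there are no reads, and {\sc No-Thin-Air} and {\sc Coherence} hold because $\ltsb\cup\ltrf$, $\lteco$ and $\lthb$ are all empty. For the inductive step I assume $\sigma=((D,\ltsb),\ltrf,\ltmo)$ satisfies the five axioms and consider a step $\sigma \strans{w,e}_{\Pad} \sigma'$, where $e=(g,a,t)$ is a fresh event. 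The analysis splits on which of {\sc Read}, {\sc Write} and {\sc RMW} fired, but in every case the governing structural fact is that $e$ is added by the $+$ operator with only \emph{incoming} $\ltsb$ edges (from thread $t$ and thread $0$), and that any new $\ltrf$ edge is the single edge $(w,e)$ directed \emph{into} $e$.

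Four of the axioms are then routine. {\sc SB-Total} is immediate from the definition of $(D,\ltsb)+e$: the new edges point only from thread-$t$ or thread-$0$ events to $e$, every initialising write is placed before $e$, and since $\ltsb_{|t}$ was total it remains total with $e$ as its new maximum. {\sc MO-Valid} is unchanged by {\sc Read} (no write is added and $\ltmo'=\ltmo$), while for {\sc Write}/{\sc RMW} the operator $\ltmo[w,e]$ inserts $e$ immediately after $w$ in $\ltmo_{|\loc(e)}$, preserving strictness and totality on $\loc(e)$ (using $\loc(w)=\loc(e)$ from the premise). {\sc RF-Complete} is preserved because {\sc Read}/{\sc RMW} add exactly the one edge $(w,e)$ with $\loc(w)=\loc(e)$ and $\wrval(w)=\rdval(e)$ guaranteed by the premises, while $e$ had no incoming $\ltrf$ before; {\sc Write} leaves $\ltrf$ untouched. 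For {\sc No-Thin-Air}, all edges of $\ltsb'\cup\ltrf'$ incident to $e$ are incoming (even as an update, $e$ is a write that no later read yet reads from), so $e$ is a sink in $\ltsb'\cup\ltrf'$; hence no cycle passes through $e$, and acyclicity follows from the hypothesis. As a corollary, since $\ltsw\subseteq\ltrf$ we get that $\lthb=(\ltsb\cup\ltsw)^+$ is irreflexive, a fact I reuse below.

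The main obstacle is {\sc Coherence}, i.e.\ irreflexivity of $\lteco$ and of $\lthb;\lteco^?$; this is precisely what the side-conditions $w\in\OW_\sigma(t)$ and $w\notin\CW_\sigma$ are engineered to secure. Since both relations were irreflexive in $\sigma$, any new violation must traverse a newly added edge and hence pass through $e$. I first dispatch $\lteco$-irreflexivity: the only new out-edges of $e$ in $\lteco$ are the $\ltfr$ edges $(e,w')$ with $(w,w')\in\ltmo$ (for {\sc Read}/{\sc RMW}) and the $\ltmo$ edges $\{e\}\times\ltmo[w]$ (for {\sc Write}/{\sc RMW}), while its only new in-edges come from $(w,e)$ and $\ltmo_{\Downarrow w}\times\{e\}$; a cycle through $e$ would therefore force a pre-existing $\lteco$-path from some $\ltmo$-successor of $w$ back to $w$, contradicting irreflexivity of $\lteco$ on $\sigma$ together with totality of $\ltmo_{|\loc(e)}$ from {\sc MO-Valid}.

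The harder part is irreflexivity of $\lthb;\lteco^?$. The crucial observation is that for a pure read $e$ is also a sink in $\lthb$ (no outgoing $\ltsb$, and, not being a release write, no outgoing $\ltsw$), and for {\sc Write}/{\sc RMW} its only outgoing $\lteco$ edges reach writes $\ltmo$-after $w$. Thus a new reflexive instance $a\,\lthb\,b\,\lteco^?\,a$ must route $\lthb$ into $e$ and $\lteco$ out of $e$, yielding a chain of the shape $w'\,\lteco^?\,a\,\lthb\,e$ with $(w,w')\in\ltmo$. I then case-split on the last edge entering $e$ along $a\,\lthb\,e$. If it is an $\ltsb$ edge from a genuine thread-$t$ event $e'$, then $w'$ is $\lteco^?;\lthb^?$-before the thread-$t$ event $e'$ already present in $\sigma$, so $w'\in\observedWrites_\sigma(t)$; since $(w,w')\in\ltmo$ this contradicts $w\in\OW_\sigma(t)$. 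If it is the synchronising edge $w\,\ltsw\,e$ (so $w\in\WR$, $e\in\RA$), then $a\,\lthb^?\,w$, and composing with $w\,\ltmo\,w'\,\lteco^?\,a$ produces a pre-existing reflexive instance of $\lthb;\lteco$ (or of $\lteco$), contradicting the hypothesis. The remaining possibility, that the last edge is an $\ltsb$ edge from an initialising write, is handled via {\sc MO-Valid}: initialising writes are $\ltmo$-minimal and unique per variable, so no $\ltmo$-successor $w'$ of $w$ can be $\lteco^?$-below an initialising write without contradicting irreflexivity of $\lteco$. The {\sc RMW} case additionally invokes $w\notin\CW_\sigma$ to rule out an update already reading from $w$, ensuring the insertion preserves the atomicity structure of $\lteco_{|\loc(e)}$. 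I expect the bookkeeping of the observability sets against $\lteco$ and $\lthb$ in the synchronising sub-case to be the genuinely delicate point; it is the $\ltfr$ edges freshly created by $e$, rather than $e$ itself, where cycles threaten to appear.
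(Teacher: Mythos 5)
Your overall strategy is the same as the paper's: induction on the length of the execution, with {\sc SB-Total}, {\sc MO-Valid} and {\sc RF-Complete} read off the definitions of $+$ and $\ltmo[w,e]$, {\sc No-Thin-Air} from the observation that the fresh event $e$ is a sink of $\ltsb' \cup \ltrf'$, and {\sc Coherence} by arguing that any new reflexive instance must thread through $e$ and then contradicting the side condition $w \in \OW_\sigma(t)$ via the encountered-writes set. Your treatment of the synchronising sub-case is in fact cleaner than the paper's: you compose $a \mathrel{\lthb^?} w$, $(w,w') \in \ltmo$ and $w' \mathrel{\lteco^?} a$ into a pre-existing reflexive instance of $\lthb ; \lteco^?$ (or of $\lteco$), whereas the paper at the corresponding point asserts $w' \in \observedWrites_{\sigma_i}(t)$, which is dubious there because the only thread-$t$ event on that path is $e$ itself, not yet in $D_i$.

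There is, however, one concrete slip in your $\lteco$-irreflexivity argument. You claim that the only new in-edges of $e$ are $(w,e)$ and $\ltmo_{\Downarrow w} \times \{e\}$, but a {\sc Write} or {\sc RMW} step also creates fresh $\ltfr$ edges $(r,e)$ for every read $r$ that reads from a write $\ltmo$-before-or-equal $w$, since $\ltfr = (\ltrf^{-1};\ltmo) \setminus \mathit{Id}$ is recomputed against $\ltmo' = \ltmo[w,e]$. Consequently your reduction ``a cycle through $e$ forces a pre-existing $\lteco$-path from some $\ltmo$-successor of $w$ back to $w$'' fails in exactly one configuration: if an update $u$ reads from $w$ itself, then $(u,e) \in \ltfr'$ and $(e,u) \in \ltmo'$ (as $u \in \ltmo[w]$) close a two-edge cycle consisting entirely of new edges, leaving no pre-existing path to contradict. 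This cycle is precisely what the premise $w \notin \CW_\sigma$ exists to exclude, and it is needed for the plain {\sc Write} rule just as much as for {\sc RMW}; you invoke $\CW_\sigma$ only for {\sc RMW}, and in the wrong part of the argument (the $\lthb;\lteco^?$ case, where the $\OW$-based contradiction already suffices). For all other readers $r$ of writes strictly $\ltmo$-before $w$, your reduction does go through, because $(r,w') \in \ltfr_i$ already holds for each $w' \in \ltmo[w]$ and the putative cycle collapses to a reflexive instance of $\lteco_i$. To be fair, the paper's own write and update cases list the $(r,e) \in \ltfr_{i+1}$ edges but are equally terse about where $\CW_\sigma$ is used, so you should treat this as a gap in both texts that your write-up, unlike the paper's, makes explicit enough to detect and easy to repair.
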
 

We next show that all consistent executions of a program are reachable in our 
operational semantics. We do so in two steps. 
First, we consider the runs of a program on the memory model. 
Since the axiomatic semantics in its pre-execution allows for reads before the appropriate 
writes, not every sequence of events possible for pre-executions is also possible in the 
operational semantics. 

\begin{example}
  \label{ex:reordering}
Consider the following simple program with two threads.
\[ \kwthread\ 1\!\!: z:=x\ \qquad \qquad \kwthread\ 2\!\!: x:= 5 \]

\noindent We have mapping $P_0 = \{1 \mapsto z:=x, 2 \mapsto x:=5\}$.
The following pre-execution is possible:
\[ \delta_0 \xRightarrow{rd_1(x,5)}_\PreExec \delta_1
  \xRightarrow{wr_1(z,5)}_\PreExec \delta_2
  \xRightarrow{wr_2(x,5)}_\PreExec \delta_3
\]
where $\delta_i = (P_i, \gamma_i)$. The pre-execution state $\delta_3$
can be justified using the following C11 state

\begin{center} 
\begin{tikzpicture}[node distance=.5cm]      
      \node (a) at (0,1) {$wr_2(x,5)$};
      \node (b) at (2,1) {$rd_1(x,5)$};
      \node (c) at (4,1) {$wr_1(z,5)$}; 
      \path 
      (a) edge[rf] node[above] {$\ltrf$} (b)
      (b) edge[hb] node[above] {$\ltsb$} (c); 
\end{tikzpicture} 
\end{center} 

\noindent The sequence of events is however not possible in the
$\strans{\ \ \ }_{\Pad}$ semantics since we cannot have a read without
the prior write that it reads from, and hence the first transition
cannot be emulated.  Still, the operational semantics can reach the same
final C11 state by executing
\[
  \begin{array}[t]{@{}l@{}}
    (P_0, \sigma_0) \xRightarrow{wr_2(x,5)}_{\Pad}  (P_1',\sigma_1') \xRightarrow{rd_1(x,5)}_{\Pad} \qquad \qquad   \\
    \hfill (P_2', \sigma_2') \xRightarrow{wr_1(z,5)}_{\Pad}
    (P_3,\sigma_3)
  \end{array}
\] 
\noindent which is also a sequence of steps in $\xRightarrow{\ \ \ }_\PreExec$. \hfill $\Box$
\end{example} 

\noindent The ``reordering'' of events described in
\refex{ex:reordering} is always possible: for every sequence of steps
of pre-executions, we can find a corresponding permutation of these
steps in which reads are ordered after their writes (and the program
order within threads is preserved).

Putting together Propositions~\ref{prop:program-commute} and
\ref{prop:PE-commute}, we have the following result. 
\begin{proposition}
  \label{prop:program-PE-commute}
  If $(P, \gamma) \ltsArrow{e_1}_{\PreExec} (P_1, \gamma_1)$ and
  $(P_1, \gamma_1) \ltsArrow{e_2}_{\PreExec} (P', \gamma')$
  where $\tid(e_1) \neq \tid(e_2)$, then there exists a program $P_2$
  and a pre-execution state $\gamma_2$ such that
  $(P, \gamma) \ltsArrow{ e_2}_{\PreExec} (P_2, \gamma_2)$ and
  $(P_2, \gamma_2) \ltsArrow{ e_1}_{\PreExec} (P', \gamma')$.
\end{proposition}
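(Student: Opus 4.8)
The plan is to decompose each combined configuration step into its program component and its pre-execution component, commute each component separately using the two commutativity results already established, and then reassemble the commuted components into configuration steps in the opposite order.

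First I would invert the combined action rule on each hypothesis. From $(P, \gamma) \ltsArrow{e_1}_{\PreExec} (P_1, \gamma_1)$ the only applicable rule yields program and pre-execution steps $P \whilestep{a_1}_{t_1} P_1$ and $\gamma \strans{e_1}_{\PreExec} \gamma_1$ with $a_1 = \act(e_1)$ and $t_1 = \tid(e_1)$; likewise $(P_1, \gamma_1) \ltsArrow{e_2}_{\PreExec} (P', \gamma')$ yields $P_1 \whilestep{a_2}_{t_2} P'$ and $\gamma_1 \strans{e_2}_{\PreExec} \gamma'$ with $a_2 = \act(e_2)$ and $t_2 = \tid(e_2)$. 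The side hypothesis $\tid(e_1) \neq \tid(e_2)$ then gives exactly $t_1 \neq t_2$, which is what both underlying lemmas require.

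Next I would apply the two commutativity propositions to the decomposed steps. Proposition~\ref{prop:program-commute}, applied to $P \whilestep{a_1}_{t_1} P_1$ followed by $P_1 \whilestep{a_2}_{t_2} P'$ with $t_1 \neq t_2$, supplies a program $P_2$ with $P \whilestep{a_2}_{t_2} P_2$ and $P_2 \whilestep{a_1}_{t_1} P'$. Independently, Proposition~\ref{prop:PE-commute}, applied to $\gamma \strans{e_1}_{\PreExec} \gamma_1$ followed by $\gamma_1 \strans{e_2}_{\PreExec} \gamma'$ with $\tid(e_1) \neq \tid(e_2)$, supplies a pre-execution state $\gamma_2$ with $\gamma \strans{e_2}_{\PreExec} \gamma_2$ and $\gamma_2 \strans{e_1}_{\PreExec} \gamma'$. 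The crucial observation is that both propositions reuse the \emph{same} actions/events and threads and merely reorder them, so the associations $a_i = \act(e_i)$ and $t_i = \tid(e_i)$ are untouched. Taking $(P_2, \gamma_2)$ as witness and applying the combined action rule twice then gives $(P, \gamma) \ltsArrow{e_2}_{\PreExec} (P_2, \gamma_2)$ and $(P_2, \gamma_2) \ltsArrow{e_1}_{\PreExec} (P', \gamma')$, since the required side conditions $\act(e_2) = a_2$, $\tid(e_2) = t_2$ and $\act(e_1) = a_1$, $\tid(e_1) = t_1$ all still hold.

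I expect no serious obstacle, as the mathematical content resides entirely in the two cited propositions. The only point demanding care is that the program layer and the memory (pre-execution) layer are commuted by two logically independent lemmas, so I must confirm that the witness program $P_2$ and the witness state $\gamma_2$ can be paired into a single configuration $(P_2, \gamma_2)$ with no further compatibility condition. This holds precisely because the interpreted rule couples the two layers only through the label pair $(\act(e), \tid(e))$, and that label is preserved under reordering of independent steps.
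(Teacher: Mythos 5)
Your proof is correct and takes exactly the paper's approach: the paper justifies this proposition precisely by ``putting together'' Propositions~\ref{prop:program-commute} and~\ref{prop:PE-commute}, i.e., inverting the combined interpreted rule into its program and pre-execution components, commuting each layer with the respective proposition, and reassembling, as you do. Your closing observation that the two layers are coupled only through the labels $\act(e)$ and $\tid(e)$, which are preserved under the reordering, is the right justification for why the witnesses $P_2$ and $\gamma_2$ pair into a configuration with no further compatibility condition.
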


This proposition is used to prove a permutation theorem for
independent elements. We say that sequence $e_1 e_2 \dots e_n$ is a
\emph{linearization} of a strict order $\prec$ iff
$\dom(\prec) \cup \ran(\prec) = \{e_1, e_2, \dots, e_n\}$ and for any $e_i$, $e_j$,
we have $e_i \prec e_j \imp i < j$.

\begin{lemma}
  \label{lem:permutation}
  Let
  $Q = (P_0,\gamma_0) \xRightarrow{e_1}_\PreExec (P_1,\gamma_1)
  \xRightarrow{e_2}_\PreExec \ldots \xRightarrow{e_k}_\PreExec
  (P_k,\gamma_k)$ and $\gamma_k = (D_k, \ltsb_k)$. Then for
  every 
  linearization $f_1, \ldots , f_k$ of 
  $\ltsb_k$, there exist programs $P_1', \ldots, P_{n-1}'$
  and pre-execution states $\gamma_1', \ldots, \gamma_{n-1}'$ such
  that
  \[
    (P_0,\gamma_0) \xRightarrow{f_1}_\PreExec (P_1',
    \gamma_1') \xRightarrow{f_2}_\PreExec \ldots
    \xRightarrow{f_k}_\PreExec (P_k,\gamma_k)\ .
  \]
\end{lemma}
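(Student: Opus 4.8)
The plan is to deduce the lemma from Proposition~\ref{prop:program-PE-commute} by a standard adjacent-transposition (``bubble-sort'') argument. First I would note that the generation order $e_1,\ldots,e_k$ is itself a linearization of $\ltsb_k$: the $+$ operator only ever adds $\ltsb$-edges from events already present to the freshly added event, so $(e_i,e_j)\in\ltsb_k$ forces $i<j$. Thus $e_1,\ldots,e_k$ and the prescribed sequence $f_1,\ldots,f_k$ are two linearizations of the same finite strict order $\ltsb_k$, and it suffices to transform the run $Q$ into a run labelled by $f_1,\ldots,f_k$ while preserving its endpoints $(P_0,\gamma_0)$ and $(P_k,\gamma_k)$.

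The combinatorial heart is the elementary fact that any two linearizations of a finite poset are connected by finitely many transpositions of adjacent, incomparable elements. Concretely, if $g_1,\ldots,g_k$ is the current sequence and $i$ is the least index with $g_i\neq f_i$, then $f_i=g_j$ for some $j>i$, and I bubble $g_j$ leftward into position $i$ by successive swaps of consecutive entries. Each such swap exchanges $g_j=f_i$ with some entry $g_l=f_m$ where $m>i$. Because the current sequence is a linearization we have $(g_j,g_l)\notin\ltsb_k$, and because $f$ is a linearization placing $f_i$ before $f_m$ we have $(g_l,g_j)\notin\ltsb_k$; hence $g_j$ and $g_l$ are $\ltsb_k$-incomparable and the swapped sequence is again a linearization. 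Since $\ltsb_{|t}$ is a strict total order for every thread $t$, two distinct $\ltsb_k$-incomparable events must belong to different threads, so $\tid(g_j)\neq\tid(g_l)$ --- precisely the hypothesis required by Proposition~\ref{prop:program-PE-commute}.

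The proof is then an induction on the number of transpositions produced by the bubble-sort. The base case is $Q$ itself. For the inductive step, a single transposition of two adjacent different-thread events in a valid pre-execution run is realised, by Proposition~\ref{prop:program-PE-commute}, by another valid run with the \emph{same} initial and final configurations, the run segments outside the swapped pair being untouched; in particular the terminal state stays $\gamma_k=(D_k,\ltsb_k)$. Performing the transpositions in order therefore yields, after finitely many steps, a run $(P_0,\gamma_0)\xRightarrow{f_1}_\PreExec(P_1',\gamma_1')\xRightarrow{f_2}_\PreExec\cdots\xRightarrow{f_k}_\PreExec(P_k,\gamma_k)$, from which the required intermediate programs $P_i'$ and states $\gamma_i'$ are read off.

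I expect the main obstacle to be the bookkeeping of the middle paragraph: maintaining the invariant ``the current sequence is a linearization of $\ltsb_k$ labelling a valid run that ends in $\gamma_k$'' and checking that each swap demanded by the bubble-sort is between events that are simultaneously adjacent in the run and of distinct threads, so that Proposition~\ref{prop:program-PE-commute} genuinely applies. The one case needing care is the initialising writes: they all carry thread $0$ and so are not separated by $\ltsb_k$, but they are $\ltsb_k$-minimal and already supplied by $\gamma_0$, hence form a common prefix of every linearization and are never the subject of a transposition; every swapped pair thus consists of two genuinely concurrent events of distinct non-zero threads.
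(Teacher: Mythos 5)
Your proposal is correct and takes essentially the same route as the paper's own proof: the paper likewise percolates each $f_i$ leftward through the run by repeated applications of Proposition~\ref{prop:program-PE-commute}, using the fact that both the generation order $e_1,\ldots,e_k$ and $f_1,\ldots,f_k$ linearize $\ltsb_k$ to conclude that every swapped adjacent pair is of distinct threads. Your packaging via the standard adjacent-transposition fact for poset linearizations (with incomparability implying distinct threads, since $\ltsb_{|t}$ is total per thread) is only a mild reorganization of the paper's argument.
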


We now show that for every justifiable pre-execution there is an
execution of the C11 semantics that ends in the C11 state justifying
the pre-execution.  The theorem uses a notion that restricts
pre-executions and C11 executions to a set of events. For a set of
events $E \subseteq D$, we define:
\begin{align*}
  (D, \ltsb)_{\downarrow E} & = (E, \ltsb \cap (E \times E)) \\
  (\gamma,\ltrf,\ltmo)_{\downarrow E} & = (\gamma_{\downarrow E}, \ltrf \cap (E \times E), \ltmo \cap (E \times E))
\end{align*}
In the completeness proof, we assume that the given pre-execution
sequence 
$(P_0,\gamma_0) \xRightarrow{e_1}_\PreExec (P_1,\gamma_1)
\xRightarrow{e_2}_\PreExec \ldots \xRightarrow{e_k}_\PreExec
(P_k,\gamma_k)$ has been reordered such that $e_1 \dots e_k$ is a
linearization of $\ltsb_k \cup \ltrf_k$, where $\ltrf_k$ is the
reads-from relation used in the justification of $\gamma_k$. Such a
linearization is possible since $\ltsb_k \cup \ltrf_k$ is acyclic (axiom {\sc No-Thin-Air}). 


\begin{theorem}
  \label{thm:soundness}
  Suppose
  $(P_0,\gamma_0) \xRightarrow{e_1}_\PreExec (P_1,\gamma_1)
  \xRightarrow{e_2}_\PreExec \ldots \xRightarrow{e_k}_\PreExec
  (P_k,\gamma_k)$ such that $\gamma_k = (D_k, \ltsb_k)$ is justifiable
  with justification $\sigma_k= (\gamma_k,\ltrf_k,\ltmo_k)$ and
  $e_1, \dots, e_k$ is a linearization of $\ltsb_k \cup \ltrf_k$. Then
 $ (P_0,\sigma_0) \xRightarrow{e_{1}}_{\Pad} (P_1, \sigma_1)
    \xRightarrow{e_{2}}_{\Pad} \ldots \xRightarrow{e_k}_{\Pad}
    (P_k,\sigma_k)$, where
  $\sigma_i = (\gamma_k,\ltrf_k,\ltmo_k)_{\downarrow \{e_1, \ldots, e_i
  \}}$, $0 < i < k$.
\end{theorem}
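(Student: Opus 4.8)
The plan is to prove the statement by induction on the prefix length $i$, establishing that the operational run executes exactly $e_1, \dots, e_i$ and reaches $\sigma_i = (\sigma_k)_{\downarrow E_i}$, where $E_i = \{e_1, \dots, e_i\}$. Since the initialising writes $I$ are $\ltsb_k$-minimal, any linearisation of $\ltsb_k \cup \ltrf_k$ begins with an enumeration of $I$, so $I \subseteq E_i$ for all $i \geq |I|$ and the base case $i = |I|$ reduces to checking $(\sigma_k)_{\downarrow I} = \sigma_0$: the restriction of $\ltsb_k$ to $I$ is empty (pre-executions are built from $(I,\emptyset)$ using only the $+$ operator, which never relates two initialising writes), $\ltrf_k \cap (I \times I) = \emptyset$ (initialising events are pure writes), and $\ltmo_k \cap (I\times I) = \emptyset$ (there is exactly one initialising write per variable, and {\sc MO-Valid} relates only same-variable writes). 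For the step, the program-side transition is immediate: the pre-execution step $(P_{i-1},\gamma_{i-1}) \xRightarrow{e_i}_{\PreExec} (P_i, \gamma_i)$ decomposes via rule {\sc Prog} into $P_{i-1} \whilestep{\act(e_i)}_{\tid(e_i)} P_i$, and this same uninterpreted transition feeds the interpreted operational rule (any interleaved $\tau$-steps carry over verbatim, touching neither $\sigma$ nor the memory model). Hence all the work lies in exhibiting a witness $w$ and verifying $\sigma_{i-1} \strans{w, e_i}_{\Pad} \sigma_i$.

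First I would fix the witness and check the structural updates. Because $e_1 \dots e_k$ linearises $\ltsb_k \cup \ltrf_k$, every $\ltsb_k$-predecessor and every $\ltrf_k$-source of $e_i$ already lies in $E_{i-1}$; together with {\sc SB-Total} this gives $(D_{i-1},\ltsb_{i-1}) + e_i = (D_i,\ltsb_i)$, since the $\ltsb_k$-predecessors of $e_i$ inside $E_{i-1}$ are exactly the events of thread $\tid(e_i)$ or $0$. For the witness: if $e_i$ is a read or an update, take $w$ to be the unique write with $(w,e_i) \in \ltrf_k$ ({\sc RF-Complete}); if $e_i$ is a plain write to $x$, take $w$ to be the $\ltmo_k$-maximal $x$-write in $E_{i-1}$ that is $\ltmo_k$-below $e_i$ (nonempty, as it contains the initialising write to $x$). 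Using totality of $(\ltmo_k)_{|x}$ and the maximality of $w$, one checks the set-equalities $\{w' \in E_{i-1} \mid (w',e_i)\in\ltmo_k\} = (\ltmo_{i-1})_{\Downarrow w}$ and $\{w' \in E_{i-1} \mid (e_i,w')\in\ltmo_k\} = (\ltmo_{i-1})[w]$, which is exactly the claim $\ltmo_i = \ltmo_{i-1}[w,e_i]$; the identities $\ltrf_i = \ltrf_{i-1}$ (write) and $\ltrf_i = \ltrf_{i-1} \cup \{(w,e_i)\}$ (read/update) hold because no read in $E_{i-1}$ can read from the later event $e_i$, and $\ltmo_i = \ltmo_{i-1}$ when $e_i$ is a read. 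For updates one also checks that the $\ltrf$-witness coincides with the $\ltmo$-maximal write, using that an update reads from its immediate $\ltmo$-predecessor.

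The main obstacle is the pair of side conditions: that $w \in \OW_{\sigma_{i-1}}(\tid(e_i))$, and (for writes and updates) that $w \notin \CW_{\sigma_{i-1}}$. Both are proved by contradiction from {\sc Coherence}, noting that all $\sigma_{i-1}$-relations are sub-relations of their $\sigma_k$-counterparts, which lets me lift the hypotheses into $\sigma_k$. For observability, suppose some $\tilde w \in \observedWrites_{\sigma_{i-1}}(t)$ satisfies $(w,\tilde w)\in\ltmo_{i-1}$, witnessed by a thread-$t$ event $e'$ with $(\tilde w, e') \in \lteco_{i-1}^?;\lthb_{i-1}^?$. In all cases this yields $(e_i, \tilde w) \in \lteco_k$: directly from $(e_i,\tilde w) \in \ltfr_k$ when $e_i$ reads from $w$, and from $(e_i, \tilde w)\in\ltmo_k$ (forced by maximality of $w$) when $e_i$ is a write or update. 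Since $e' \in E_{i-1}$ is an earlier same-thread event, {\sc SB-Total} gives $(e',e_i)\in\ltsb_k\subseteq\lthb_k$. Writing $\tilde w \xrightarrow{\lteco_k^?} m \xrightarrow{\lthb_k^?} e'$ and using transitivity of $\lteco_k$ and $\lthb_k$, I obtain $(e_i,m)\in\lteco_k$ and $(m,e_i)\in\lthb_k$, so $m$ is $\lthb_k;\lteco_k^?$-reflexive, contradicting {\sc Coherence}.

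For the covered condition I would first record the atomicity property of valid states: if $u \in \URA$ and $(w,u)\in\ltrf_k$, then $(w,u)\in\ltmo_k$ and no write is strictly $\ltmo_k$-between $w$ and $u$ (otherwise $\ltrf_k^{-1};\ltmo_k$ composed with $\ltmo_k$ makes $\lteco_k$ reflexive, against {\sc Coherence}); equivalently $u$ is the immediate $\ltmo_k$-successor of $w$. Now assume $w$ is covered by some update $u \in E_{i-1}$. If $e_i$ is itself an update reading from $w$, then $e_i$ is also an immediate $\ltmo_k$-successor of $w$, so $u = e_i$ by uniqueness in the total order $(\ltmo_k)_{|x}$, impossible since $u \in E_{i-1}$ but $e_i \notin E_{i-1}$. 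If $e_i$ is a plain write, then $(w,e_i)\in\ltmo_k$ and immediacy of $u$ force $(u,e_i)\in\ltmo_k$, exhibiting an $x$-write $u \in E_{i-1}$ strictly $\ltmo_k$-between $w$ and $e_i$ and so contradicting the maximality with which $w$ was chosen. This discharges both side conditions, so the appropriate rule ({\sc Read}, {\sc Write} or {\sc RMW}) fires with witness $w$ and produces $\sigma_i$, completing the induction.
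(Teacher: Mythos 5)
Your proposal is correct and follows essentially the same route as the paper's proof: induction along the given linearization, with the two substantive obligations---that the witness $w$ is in $\OW_{\sigma_{i-1}}(\tid(e_i))$, and (for writes and updates) that $w \notin \CW_{\sigma_{i-1}}$---discharged by lifting the offending edges into $\sigma_k$ and contradicting the {\sc Coherence} axiom, exactly as the paper does via its three-case analysis of $(w',e) \in \lteco_k^?$ and $(e,e_{j+1}) \in \lthb_k^?$.

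One difference is worth recording, and it is in your favour. For a plain write the paper takes $w$ to be ``the immediate predecessor of $e_{j+1}$ in $\ltmo_k$,'' but since the linearization respects only $\ltsb_k \cup \ltrf_k$ and not $\ltmo_k$, that globally immediate predecessor need not lie in $D_j$ yet (two unordered writes to the same variable can be executed against their $\ltmo_k$-order), in which case the paper's insertion $\ltmo_j[w,e_{j+1}]$ would not even be well-typed. Your choice---the $\ltmo_k$-maximal $x$-write \emph{within the executed prefix} $E_{i-1}$ that is $\ltmo_k$-below $e_i$---is the correct reading, and your covered-write argument (deriving update atomicity from irreflexivity of $\lteco_k$, then contradicting prefix-maximality of $w$) adapts the paper's case split cleanly to this witness. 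You also make explicit several pieces of bookkeeping the paper leaves implicit: that $(D_{i-1},\ltsb_{i-1}) + e_i$ agrees with the restriction $(D_i,\ltsb_i)$, the set equalities showing $\ltmo_i = \ltmo_{i-1}[w,e_i]$, the coincidence of an update's $\ltrf$-source with its immediate $\ltmo$-predecessor, and the treatment of the initialising writes $I$ in the base case (the paper's restriction $(\sigma_k)_{\downarrow \{e_1,\ldots,e_i\}}$ is loose on whether $I$ is included; your convention makes the statement literally consistent, though note that in the operational semantics no $\xRightarrow{\ }_{\PreExec}$ or $\xRightarrow{\ }_{\Pad}$ transition can generate a thread-$0$ event, so those writes live in $\gamma_0$ and $\sigma_0$ rather than being executed as steps). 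None of this changes the overall strategy, but it repairs the points where the paper's argument is imprecise.
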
 

\newcommand{\clib}{{\cal P}}
\newcommand{\lrestr}{\downharpoonright}
\newcommand{\hbc}{\mathit{hbc}}
\newcommand{\hbo}[1]{\stackrel{#1}{\rightarrow}}
\newcommand{\detval}[1]{\stackrel{#1}{=}}
\newcommand{\last}{\mathit{last}}
\newcommand{\SW}{\mathit{SW}}
\newcommand{\covered}{\mathit{covered}}

\newcommand{\OV}{\mathit{OV\_replace}}

\section{Verification}
\label{sect:verif}

We now describe our verification method
(\refsec{sec:verification-method}), building on the operational
semantics. In \refsec{sect:case-study}, we apply it to our case study,
Peterson's mutual exclusion algorithm.

\subsection{Verification Method}
\label{sec:verification-method}
Our verification method is built around two kinds of assertions for
describing states of the operational semantics. The first kind, {\em
  determinate-value} assertions, are used to describe the values that
a read operation might return. As such, these assertions are analogous
to equations that specify the values of variables in a conventional
(i.e., sequentially consistent) setting in which the state of an
algorithm can be represented as a store that maps variables to
values. The second kind of assertion, {\em variable-ordering}
assertions, has no direct analogue in the conventional
setting. Variable-ordering assertions provide a way to describe how
information about a variable propagates between
threads.  

\smallskip 
\noindent {\bf Determinate-values.}
In the following, we assume that $\sigma = ((D, \ltsb), \ltrf, \ltmo)$
is a valid C11 state. We let $\sigma.\last(x)$ be the write or update
to $x$ in $D$ that is not succeeded by another write or update in
$\ltmo_{|x}$. Note that $\sigma.\last(x)$ is well-defined in any valid
state $\sigma$. 
When $X$ is a set of operations and $x$ is a variable,
$X_{|x} = \{e \in X \mid \loc(e) = x\}$.  For the determinate value
assertions, consider some thread $t$ and variable $x$.  In some
states of the operational semantics, there is exactly {\em one} write
that $t$ can read-from when reading $x$. This is true precisely when
$\OW_\sigma(t)_{|x} = \{\sigma.\last(x)\}$ (recall that
$\sigma.\last(x)$ is never covered, and so $\sigma.\last(x)$ can
always be observed in a transition).
Under such a condition, the value returned by a read of $x$ in thread
$t$ must be $\wrval(\sigma.\last(x))$. This ultimately provides us
with a weak memory analogue of an equation asserting that a given
variable has a given value in a conventional sequentially consistent
setting. 

\begin{definition}
\label{def:det-val}
  Let $t$ be a thread, $\sigma$ a state and $v$ a
  value.  The \emph{determinate-value} assertion
  $x \detval{\sigma}_t v$ holds iff
  \begin{align}
    \label{dva:1}
    \OW_\sigma(t)_{|x} & = \{\sigma.\last(x)\} \\
    \label{dva:2}
    v & =\wrval(\sigma.\last(x))\\
    \label{dva:3}
    \sigma.\last(x) & \in I_{\sigma} \cup \{e \mid \exists e'.\  \tid(e) = t \wedge (e, e') \in \sigma.\lthb^?\} 
  \end{align}
\end{definition}
\noindent  Condition \refeq{dva:3} states that $\sigma.\last(x)$ is either an
operation of the initialising thread, an operation of $t$, or
happens-before an operation of $t$.

\begin{example}
  \label{ex:det-val-assertion}
  To illustrate the determinate-value assertion, consider the two
  states below. In each case, assume there are writes (not shown) to
  variable $x$ that are $\ltmo$-prior to the write to $x$. Also assume
  that each write is the last write in $\ltmo$ order.

  \noindent
  \scalebox{0.85}{
     \begin{tikzpicture}    
       \node (a) at (0,3) {$wr_1(x,2)$};  
       \node (b) at (0,2) {$wr_1^{\sf R}(y,1)$};
       \node (c) at (2,3) {\fbox{$rd_2^{\sf A}(y,1)$}};  
       \path 
        (a) edge[hb] node[left] {$\ltsb$} (b)
        (b) edge[sw] node[below] {$\ltsw$} (c); 
     \end{tikzpicture}}
    \quad
             \scalebox{0.85}{
     \begin{tikzpicture}
       \node (a) at (1,3) {$wr_0(x,2)$};  
       \node (b) at (3,3) {$rd_1(x,2)$};
       \node (c) at (3,2) {$wr_1^{\sf R}(y,1)$};
       \node (d) at (5,3) {\fbox{$rd_2^{\sf A}(y,1)$}};  
       \path 
        (a) edge[mo] node[above] {$\ltrf$} (b)
        (b) edge[hb] node[left] {$\ltsb$} (c)
        (c) edge[sw] node[below] {$\ltsw$} (d);

   \end{tikzpicture}}

 \noindent
 For the state on the left, after the boxed operation, thread 2
 satisfies \mbox{$x \detval{\sigma}_2 2$}, but for the state on the
 right, thread 2 does not. In each case, the only write to $x$ that
 thread 2 can observe is the illustrated write to $x$, but thread 2
 satisfies a corresponding determinate value assertion only on the
 left state. This is because on the left we have
 $(wr_1(x,2), rd_2(x,2)) \in \lthb$, but the unsychronised $\ltrf$
 edge on the right means that there is no analagous $\lthb$ edge.
\end{example}
In our verification, determinate-value assertions
support clean interaction with variable-ordering assertions,
which we describe shortly. Note that because our operational model
prevents update operations reading from covered writes (see Section
\ref{sec:paderb-semant-c11}), i.e., are more restricted than read
operations, an update operation on a variable $x$ may only be able to
read from the last write to $x$ even if $x \detval{\sigma} v$ is false
for all $v$. Below, we show how to handle important instances of this
situation.


The next two lemmas are immediate from the definition
of~$\detval{\sigma}_t$. \reflem{lem:det-val-read} below ensures that
the value returned by a reading transition using the semantics in
\reffig{fig:c11-opsem} is consistent with the determinate-value
assertion. \reflem{lem:ow-agreement} ensures that when a
determinate-value assertion holds for two threads reading from the
same variable, they return the same values for the variable.
\begin{lemma}[Determinate-Value Read]
\label{lem:det-val-read}
For any {\sc Read} or {\sc RMW} transition
$(P,\sigma) \ltsArrow{m, e}_{\Pad} (P',\sigma')$, if
$\loc(e) \detval{\sigma}_{\tid(e)} v$, then $\rdval(e) = v$.
\end{lemma}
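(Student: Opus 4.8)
The plan is to reduce the interpreted transition to its underlying event-semantics step and then read the value off the observed write. By the second combining rule of \refsec{sec:interpr-semant}, the transition $(P,\sigma) \ltsArrow{m, e}_{\Pad} (P',\sigma')$ factors through an event step $\sigma \strans{m, e}_{\Pad} \sigma'$ with $\tid(e) = t$ and $\loc(e) = x$, where $m$ is the observed write. Since $e$ is a {\sc Read} or {\sc RMW} event, this step is an instance of the corresponding rule of \reffig{fig:c11-opsem}, and in both cases the side conditions force $m \in \OW_\sigma(t)$, $\loc(m) = x$, and the identity $\rdval(e) = \wrval(m)$. For a read action $rd(x,n)$ or $rd^{\sf A}(x,n)$ the premise $\wrval(m)=n$ gives $\rdval(e)=n=\wrval(m)$ directly; for an update $upd^{\sf RA}(x,m_0,n)$ the premise matching the old value $m_0$ to $\wrval(m)$ gives $\rdval(e)=m_0=\wrval(m)$. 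Thus the single identity $\rdval(e)=\wrval(m)$ covers both rules.

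The key step is then to identify the observed write $m$ with $\sigma.\last(x)$ using the hypothesis $x \detval{\sigma}_t v$. From $m \in \OW_\sigma(t)$ together with $\loc(m) = x$ we obtain $m \in \OW_\sigma(t)_{|x}$, and condition \refeq{dva:1} asserts $\OW_\sigma(t)_{|x} = \{\sigma.\last(x)\}$ (this singleton is well-defined since the determinate-value assertion presupposes $\sigma$ valid, and reachable states are valid by \refthm{thm:completeness}); hence $m = \sigma.\last(x)$. Combining this with the identity above and condition \refeq{dva:2}, which states $v = \wrval(\sigma.\last(x))$, yields $\rdval(e) = \wrval(m) = \wrval(\sigma.\last(x)) = v$, as required.

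No genuine obstacle arises here: the lemma is immediate from the definitions, and condition \refeq{dva:3} plays no role in this argument (it is needed only for the happens-before interaction exploited elsewhere, e.g. in \reflem{lem:ow-agreement}). The only point demanding care is notational: the figure's {\sc RMW} action $upd^{\sf RA}(x,m,n)$ reuses the symbol $m$ for the old value read, which clashes with the transition label $m$ denoting the observed write. I would therefore rename the update's old value (to $m_0$, say) so that both the {\sc Read} and {\sc RMW} cases are handled uniformly under the single equation $\rdval(e)=\wrval(m)$, after which the identification $m=\sigma.\last(x)$ closes the argument.
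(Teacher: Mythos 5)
Your proof is correct and takes essentially the same route as the paper's own (two-line) argument: both identify the observed write $m$ with $\sigma.\last(\loc(e))$ via condition \refeq{dva:1} of the determinate-value assertion and then conclude $\rdval(e) = \wrval(m) = v$ from the premises of the {\sc Read} and {\sc RMW} rules together with condition \refeq{dva:2}. Your additional care over the notational clash between the transition label $m$ and the old value in $upd^{\sf RA}(x,m,n)$, and your remark that \refeq{dva:3} is not needed here, are accurate elaborations of the same argument.
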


\begin{lemma}[Determinate-Value Agreement]
\label{lem:ow-agreement}
For threads $t, t'$, variable $x$, and values $v, v'$,
if $x \detval{\sigma}_{t} v$ and $x \detval{\sigma}_{t'} v'$
then $v = v'$, and thus $t$ and $t'$ agree on the value of $x$.
\end{lemma}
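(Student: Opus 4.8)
The plan is to prove Lemma~\ref{lem:ow-agreement} directly from Definition~\ref{def:det-val}, reducing the agreement of two threads to a statement about the single object $\sigma.\last(x)$. The key observation is that the determinate-value assertion $x \detval{\sigma}_t v$ pins down $v$ entirely in terms of $\sigma.\last(x)$ via condition~\refeq{dva:2}, and crucially $\sigma.\last(x)$ is a property of the state $\sigma$ and the variable $x$ alone --- it does not depend on the thread $t$.

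First I would unfold both hypotheses. From $x \detval{\sigma}_{t} v$, condition~\refeq{dva:2} gives $v = \wrval(\sigma.\last(x))$. From $x \detval{\sigma}_{t'} v'$, the same condition gives $v' = \wrval(\sigma.\last(x))$. Here I would emphasise that $\sigma.\last(x)$ is well-defined and unique in the valid state $\sigma$: it is the (unique) write or update to $x$ in $D$ that is not $\ltmo_{|x}$-succeeded by another write or update, which exists and is unique because {\sc MO-Valid} forces $\ltmo_{|x}$ to be a strict total order over the writes to $x$. Since the two occurrences of $\sigma.\last(x)$ refer to the very same event, I conclude $v = \wrval(\sigma.\last(x)) = v'$.

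For the final clause --- that $t$ and $t'$ \emph{agree on the value} of $x$ --- I would appeal to \reflem{lem:det-val-read}: whenever either thread performs a {\sc Read} or {\sc RMW} transition on $x$ in state $\sigma$, the value returned equals its determinate value ($v$ for $t$, $v'$ for $t'$). Having just shown $v = v'$, both threads necessarily read the common value $\wrval(\sigma.\last(x))$, which is precisely what it means for them to agree on the value of $x$.

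I do not anticipate a genuine obstacle, since the result is flagged in the text as ``immediate from the definition of $\detval{\sigma}_t$''; the only point requiring a little care is making explicit that condition~\refeq{dva:1}, $\OW_\sigma(t)_{|x} = \{\sigma.\last(x)\}$, together with validity of $\sigma$, guarantees $\sigma.\last(x)$ is thread-independent so that the two value equations genuinely reference the same event. The remaining step is to note that conditions~\refeq{dva:1} and~\refeq{dva:3} play no role in forcing $v = v'$ beyond ensuring both assertions are well-formed; the equality is driven solely by~\refeq{dva:2}.
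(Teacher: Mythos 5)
Your proposal is correct and follows essentially the same route as the paper, whose one-line proof observes that both assertions force $\OW_{\sigma}(t)_{|x} = \OW_{\sigma}(t')_{|x} = \{\sigma.\last(x)\}$ and hence $v = v'$; your version makes the (correct) point slightly sharper by noting that the equality is really driven by condition~\refeq{dva:2} together with the thread-independence of $\sigma.\last(x)$, which is exactly the content of the paper's argument. Your additional appeal to \reflem{lem:det-val-read} to cash out ``agree on the value'' is sound but unnecessary, since that clause is just an informal gloss on $v = v'$.
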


\noindent 
Determinate-value assertions differ from their conventional
counterparts in that they are {\em relative to a particular thread}.
It is almost definitive of weak-memory systems that distinct threads
can have different views of the memory state.

\smallskip 
\noindent {\bf Variable-ordering.}
How can we ensure that distinct threads can agree on (or share)
sufficient determinate-value assertions to support a verification? We
address this problem by using another class of assertion: {\em
  variable-order} assertions, which orders two variables whenever the
last writes to the variables are causally (i.e., $\lthb$) ordered.

\begin{definition}
  Let $x, y$ be variables and $\sigma$ a state with $\lthb$ relation
  $\sigma.\lthb$.  The \emph{variable-order} assertion
  $x \hbo{\sigma} y$ holds iff $ (\sigma.\last(x), \sigma.\last(y)) \in \sigma.\lthb$ . 
\end{definition}

\noindent For example, the state $\sigma$ in the left of
Example~\ref{ex:det-val-assertion} without the boxed event satisfies
$x \hbo{\sigma} y$. When $x \hbo{\sigma} y$, a determinate-value
assertion $x \detval{\sigma}_t v$ can be ``copied'' to another thread
$t'$, whenever $t'$ performs an \emph{acquiring} read that reads-from
the last modification of $y$ and this write is \emph{releasing}.  It
is easy to see that in a state $\sigma'$ after such a synchronisation,
$\sigma'.\last(x)$ is happens-before an operation of $t'$, and thus
$x \detval{\sigma'}_{t'} v$.

\smallskip 
\noindent {\bf Inference rules.} Figure \ref{fig:det-val-ind} presents
a set of rules that precisely captures reasoning principles for
determinate-value and variable-order assertions. The ``copying'' of
determinate value assertions is captured in rule {\sf
  Transfer}\footnote{We show soundness of these proof rules in the
  appendix.}.  For the left state in
Example~\ref{ex:det-val-assertion} we can see this copying: when the
boxed event $rd_2^{\sf A}(y,1)$ occurs (leading to state $\sigma'$),
the determinate value assertion $x \detval{\sigma}_1 2$ is ``copied''
to thread~2 giving $x \detval{\sigma}_2 2$ by rule {\sf Transfer}.
Rule {\sf WOrd} shows how we introduce variable ordering assertions: a
variable ordering assertion can be introduced every time a thread
writes to one variable ($y$ in the rule), while having a determinate
value assertion on another variable ($x$ in the rule). Note that this
rule would not be sound, without Condition \refeq{dva:3} of Definition
\ref{def:det-val}: since the existence of an $\lthb$ edge from
$\sigma'.\last(x)$ to $\sigma'.\last(y)$.

\begin{figure*}[t]
  \centering
  $\inference[{\sc Init}]
  {\sigma_0 = ((I, \emptyset), \emptyset, \emptyset) \\ I \subseteq \IW}
  {x \detval{\sigma_0}_t \wrval(\sigma_0.\last(x))}$
  \ \  
  $ \inference[{\sc ModLast}] {x = \loc(e) \\  e \in \W_{|x} \\ m=\sigma.\last(x)} 
  {x \detval{\sigma'}_{\tid(e)} \wrval(e)}$
  \ \ 
  $\inference[{\sc Transfer}] {y = \var(e) \quad x \hbo{\sigma} y \\
  x \detval{\sigma}_{t} v \quad (m, e) \in \sw \\
  m = \sigma.\last(y)}
  {x \detval{\sigma'}_{\tid(e)} v}$
  \ \ 
      $\inference[{\sc UOrd}] {m \in \W_{{\sf R}|y} \\ e \in \URA_{|y} \\ x \hbo{\sigma} y }
  {x \hbo{\sigma'} y}$

  \bigskip 
  $\inference[{\sc NoMod}]
  {e \notin \W_{|x} \\ x \detval{\sigma}_t v}
  {x \detval{\sigma'}_t v}$
  \ \   
  $\inference[{\sc AcqRd}] {x = \loc(e) \quad e \in \RA_{|x} \\ m \in \WR_{|x} \quad
  m = \sigma.\last(x)}
  {x \detval{\sigma'}_{\tid(e)} \rdval(e)}$
  \ \ 
  $\inference[{\sc WOrd}] {x \neq y \quad e \in \W_{|y} \\
  x \detval{\sigma}_{\tid(e)} v \quad
  m = \sigma.\last(y)}
  {x \hbo{\sigma'} y}$
  \ \ 
  $\inference[{\sc NoModOrd}] {e \notin \W_{|\{x, y\}} 
    \\ x \hbo{\sigma} y}
  {x \hbo{\sigma'} y}$
  \vspace{-3mm}
  
  \caption{Rules for determinate-value and variable-order assertions.
    We assume $\sigma, m, e, \sigma'$ satisfy
    $(\_,\sigma) \ltsArrow{m, e}_{\Pad} (\_,\sigma')$. }

  \label{fig:det-val-ind}
\end{figure*}

\smallskip
\noindent{\bf Last modification transitions.}
Observe that the rules in Figure \ref{fig:det-val-ind} are all conditioned on
the modification that is observed in the transition being the last modification
to the given variable. Thus, we must be able to prove that a given read or
update observes the last modification. There are several ways to do this.
It is easy to see that if $x \detval{\sigma}_t v$
for some thread $t$ in some state $\sigma$ then $t$ can only read the last
write to $x$. We formalise this claim in Lemma \ref{lem:last-mod-trans} below,
and in our case study we show how to use it in verification.

Update operations provide another way to guarantee
that a given operation observes the last modification at a given
variable. Given a C11 state $((D, \_), \_, \ltmo)$, an {\em update-only} variable is
any variable $x$ such that for all modifications $m \in D$ with $x = \loc(m)$,
either $m$ is an update or $m \in \IW$.
Note that initially, every variable is an update only variable.
In the operational semantics, update-only variables have the property that 
any new update or write can only be added to the \emph{end} of the modification order.
This is a consequence of the fact that for such a variable, any modification
but the last is covered. Thus, we have the following lemma.
\begin{lemma}[Last Modification Transition]
\label{lem:last-mod-trans}
 Let $t = \tid(e)$ and
  $x = \var(e)$ for some event $e$. For any reachable transition
  $(P,\sigma) \ltsArrow{m, e}_\Pad (P',\sigma')$,
  $m = \sigma.\last(x)$ if either $x \detval{\sigma}_{t} v$, for
  some value $v$, or $x$ is an update only variable in $\sigma$.
\end{lemma}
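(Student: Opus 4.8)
The plan is to reduce the claim to the observability side conditions of the event semantics. Whatever the event $e$ does, the observed write $m$ in the transition $(P,\sigma) \ltsArrow{m, e}_{\Pad} (P',\sigma')$ is supplied by one of the rules {\sc Read}, {\sc Write} or {\sc RMW} of \reffig{fig:c11-opsem}, and in each case the side conditions already force $\loc(m) = x$ together with $m \in \OW_\sigma(t)$ (with the stronger $m \in \OW_\sigma(t) \setminus \CW_\sigma$ for {\sc Write} and {\sc RMW}). I would first record these two facts and then discharge the two hypotheses separately, each time showing that $\sigma.\last(x)$ is the only candidate compatible with the observability constraint. Throughout I use that reachable states are valid (\refthm{thm:completeness}), so that $\sigma.\last(x)$ is well defined and the axioms of \refdef{def:legal-execution} are available.

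For the first hypothesis, assume $x \detval{\sigma}_t v$. Condition \refeq{dva:1} gives $\OW_\sigma(t)_{|x} = \{\sigma.\last(x)\}$, and since $m \in \OW_\sigma(t)$ with $\loc(m) = x$ we have $m \in \OW_\sigma(t)_{|x}$, whence $m = \sigma.\last(x)$ at once. This argument is insensitive to the kind of $e$, so it is the branch to invoke for reads.

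For the second hypothesis, assume $x$ is update-only. The crux is the structural claim that every modification of $x$ other than $\sigma.\last(x)$ lies in $\CW_\sigma$. I would establish this from an \emph{atomicity invariant} maintained over all states reachable by $\strans{\ \ }_{\Pad}$: for every update $u \in \URA \cap D$, the unique write $w$ with $(w,u) \in \ltrf$ (unique by {\sc RF-Complete}) is the immediate $\ltmo$-predecessor of $u$, i.e.\ there is no $w''$ with $(w,w''),(w'',u) \in \ltmo$. The base case is vacuous; in the inductive step a fresh update $e$ is placed by $\ltmo[w,e]$ directly after the write $w$ it reads from, so $w$ is initially immediate, and $w$ becomes covered the instant $e$ reads from it, while every subsequent {\sc Write}/{\sc RMW} inserts only after a write in $\OW \setminus \CW_\sigma$ and so can never slip a modification between $w$ and $e$. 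Given the invariant, for update-only $x$ with $\ltmo_{|x}$ ordered as $w_0, u_1, \dots, u_k$ (where $w_0 \in \IW$ and each $u_i \in \URA$), the invariant and {\sc RF-Complete} force each $u_{i+1}$ to read from its immediate predecessor $u_i$ (and $u_1$ from $w_0$), so every modification except $u_k = \sigma.\last(x)$ is covered. Since a {\sc Write} or {\sc RMW} transition requires $m \in \OW_\sigma(t) \setminus \CW_\sigma$ and $\sigma.\last(x)$ is then the only admissible modification of $x$, we conclude $m = \sigma.\last(x)$; this is exactly the setting in which update-only variables are advertised as useful, namely for write and update events.

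The main obstacle is the atomicity invariant underlying the second case; everything else is a direct reading of the side conditions. The delicate point is to verify that \emph{no} transition — in particular a {\sc Write} or {\sc RMW} onto $x$ that inserts after some \emph{other} observable write — can ever fall $\ltmo$-between an update and the write it reads from. This is precisely where the $\setminus \CW_\sigma$ restriction does the work: one must track that a write stays covered once an update has read from it, which in turn relies on the fact that $\ltmo[w,e]$ never reorders existing $\ltmo$ edges but only threads $e$ in immediately after $w$.
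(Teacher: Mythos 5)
Your proof is correct and follows essentially the same two-case route as the paper's: the determinate-value case is immediate from condition \refeq{dva:1} of Definition~\ref{def:det-val}, and the update-only case combines the side condition $m \in \OW_\sigma(t) \setminus \CW_\sigma$ of the {\sc Write}/{\sc RMW} rules with the fact that every non-last modification of an update-only variable is covered. The only difference is one of rigour in your favour: the paper merely asserts that coveredness fact, whereas you prove it via an inductive atomicity invariant (each update sits immediately $\ltmo$-after the write it reads from, and a covered write blocks later insertions), and you correctly flag that the update-only branch pertains to write/update transitions --- a restriction the paper's one-line proof (``because $m$ is not covered'') also tacitly assumes, since a {\sc Read} may observe a covered write.
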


\noindent In other cases, other kinds of invariants can be used to guarantee
this last-modification property.

\begin{example}
Consider the following message-passing interaction between two
threads:

\begin{minipage}[t]{1.0\columnwidth}
  {\bf Init:} $f =0 \wedge d=0$
  
$\begin{array}{l@{\qquad \qquad}l} 
       \text{\bf thread}\ 1 & \text{\bf thread}\ 2\\
       1:\quad d :=5; & 1:\quad \text{\bf while } !f^{\sf A} \ \text{\bf do  skip};  \\
       2:\quad f :=^{\sf R} 1; & 2:\quad r := d; 
     \end{array}$
\end{minipage}

\smallskip \noindent Here, thread 1 sets the data variable $d$ to $5$,
and then indicates that the data is ready by setting the flag variable
$f$ to $1$. Thread 2 awaits this condition, and then consumes the
data. In order to show that this simple program is correct, we must be
able to prove that thread 2 always reads the correct value at line 2.

We sketch a proof that for any state $\sigma'$, where thread 2 is at
line 2, we have $d \detval{\sigma'}_2 5$. First note that this program
satisfies the invariant that for each write $w$ satisfying
$\var(w) = f$ and \mbox{$\wrval(w) = 1$}, $w$ is a releasing write of
thread 1 and \mbox{$\last(f) = w$}.  Using rules {\sf ModLast} and
{\sf WOrd}, after executing line 2 of thread 1, the resulting state
$\sigma$ satisfies $d \detval{\sigma}_1 5$ and $d \hbo {\sigma} f$.
This fact, along with the invariant above satisfy the premises of the
{\sf Transfer} rule where $x$ is $d$ and $y$ is $f$.  Thus, when the
loop exits, into state $\sigma'$, we have $d \detval{\sigma'}_2 5$, as
required.
\end{example}
Equipped with these techniques, we now
 show that Peterson's algorithm with the synchronisation annotations as 
given in Section \ref{sec:comm-lang-unint} guarantees mutual exclusion.


\subsection{An Example Verification: Peterson's Algorithm}
\label{sect:case-study}

We turn now to the verification of the version of Peterson's Mutual Exclusion algorithm
given in Algorithm \ref{alg:petersons-ra}. Our verification consists
of proving a {\em mutual exclusion invariant} (Theorem \ref{theor:mutex}) stating that
there is no reachable state in which both processes are in their respective
critical sections.

To state our invariants, we make use of an auxiliary {\em program counter} function,
which for each thread, returns the line number of Algorithm \ref{alg:petersons-ra}
that the thread is currently executing.  
More precisely, for each configuration $(P, \sigma)$ of Peterson's algorithm, and $t$  a thread with $t \in \{1, 2\}$, the
expression $P.\pc_t$ returns $i$ when 
$P(t)$ is the part of the program starting on line $i$.

The mutual exclusion property for   \refalg{alg:petersons-ra} is proved in \refthm{theor:mutex}, which relies on the following invariants. \begin{eqnarray}
  &\text{$turn$ is an update-only location} \label{pete-prop:update-only}\\
  &\mathit{turn} \detval{\sigma}_1 2 \vee\mathit{turn} \detval{\sigma}_2 1  \label{pete-prop:turn}\\
  &P.\pc_t \in \{\ref{swap-turn}, \ref{busy-wait}, \ref{critical-section}, \ref{unset-flag}\} \implies \mathit{flag}_{t} \detval{\sigma}_t \mathit{true}   \label{pete-prop:flag}\\
  &P.\pc_t \in \{\ref{busy-wait}, \ref{critical-section}, \ref{unset-flag}\} \implies \mathit{flag}_t \hbo{\sigma}\mathit{turn}   \label{pete-prop:flag-turn-ord} \\
&\begin{array}[t]{@{}l@{}}
  P.\pc_t \in \{\ref{busy-wait}, \ref{critical-section}, \ref{unset-flag}\} \wedge  P.\pc_{\notT} \in \{\ref{busy-wait}, \ref{critical-section}, \ref{unset-flag}\} \implies  \\ 
     \ \qquad \qquad \qquad \qquad \mathit{flag}_{\notT} \detval{\sigma}_t \mathit{true} \vee\mathit{turn} \detval{\sigma}_{\notT} t
\end{array}
   \label{pete-prop:cross}\\
  &P.\pc_t = \ref{critical-section} \wedge  P.\pc_{\notT} \in \{\ref{busy-wait}, \ref{critical-section}, \ref{unset-flag}\} \implies \mathit{turn} \detval{\sigma}_{\notT} t
  \label{pete-prop:crit}\\
  &P.\pc_t = \ref{set-flag} \implies \mathit{flag}_t \detval{\sigma} \mathit{false}
  \label{pete-prop:quies}
\end{eqnarray}
As in the classical (sequentially consistent) setting, we prove that these invariants hold for the initial configuration and for each transition of the algorithm. For space reasons we only provide details for one of these cases, i.e., where the first test at line \ref{busy-wait} is evaluated to false, causing it to enter the critical section.\footnote{The full proof is available in the extended version.}

\begin{proof}
We consider the first test at line \ref{busy-wait},
$\mathit{flag_t} = \mathit{false}$,
in the case where the test fails (the success
case is very simple).
Let $(P', \sigma')$ be the configuration after the step in question
Assume that $P.\pc_t = \ref{busy-wait}$, $P'.\pc_t = \ref{critical-section}$,
and $e = R_t(\mathit{flag}_{\notT}, \mathit{false})$.

Because $e$ is not a write and the value of $\pc_{\notT}$ does not change,
it is easy to use the {\sc NoMod} and {\sc NoModOrd} rules
to show that each invariant except for \refeq{pete-prop:crit}
is preserved. We now prove that
\refeq{pete-prop:crit} is preserved. We do so by proving that $\mathit{turn} \detval{\sigma'}_{\notT} t$
under the assumption that $P'.\pc_{\notT} \in \{\ref{busy-wait}, \ref{critical-section}, \ref{unset-flag}\}$.
Because $P.\pc_{\notT} = P'.\pc_{\notT}$, we have $P.\pc_{\notT} \in \{\ref{busy-wait}, \ref{critical-section}, \ref{unset-flag}\}$. Furthermore, by Lemma~\ref{lem:det-val-read} and
the fact that $e = R_t(\mathit{flag}_{\notT}, \mathit{false})$ and $m = \sigma.\last(\mathit{flag}_{\notT})$
the assertion $\mathit{flag}_{\notT} \detval{\sigma}_t \mathit{true}$ is false. Thus by Invariant \ref{pete-prop:cross},
we have $\mathit{turn} \detval{\sigma}_{\notT} t$.
Then, from rule {\sc NoMod}, and the fact that 
$e$ is not a write, we have $\mathit{turn} \detval{\sigma'}_{\notT}  t$,
as required.
\end{proof}

These invariants are sufficient to prove that Peterson's Algorithm satisfies the mutual exclusion property.
\begin{theorem}[Mutual exclusion]
\label{theor:mutex}
For each reachable configuration $(P, \sigma)$, $P.\pc_1 \neq 5 \vee P.pc_2 \neq 5$.
\end{theorem}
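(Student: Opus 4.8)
The plan is to treat \refthm{theor:mutex} as a short corollary of the seven invariants \refeq{pete-prop:update-only}--\refeq{pete-prop:quies}, whose inductive verification carries essentially all of the work. First I would establish that every one of these invariants holds at the initial configuration $(P_0,\sigma_0)$ and is preserved by each transition $(P,\sigma)\ltsArrow{m,e}_{\Pad}(P',\sigma')$ of \refalg{alg:petersons-ra}, proceeding exactly as in the sequentially consistent setting but replacing ordinary assignment axioms with the inference rules of \reffig{fig:det-val-ind}. The excerpt already discharges one busy-wait case; the remaining cases are organised by the line executed and by which of the two threads moves.

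Granting the invariants, the theorem itself is immediate. Suppose, for contradiction, that some reachable $(P,\sigma)$ places both threads in their critical sections, i.e.\ $P.\pc_1 = \ref{critical-section}$ and $P.\pc_2 = \ref{critical-section}$. Instantiating invariant \refeq{pete-prop:crit} with $t=1$ (so $\notT=2$) is licensed, since $P.\pc_2 = \ref{critical-section}\in\{\ref{busy-wait},\ref{critical-section},\ref{unset-flag}\}$, and yields $\turn \detval{\sigma}_2 1$; instantiating it symmetrically with $t=2$ yields $\turn \detval{\sigma}_1 2$. But \reflem{lem:ow-agreement} forces any two determinate-value assertions for $\turn$ to agree, whence $1=2$, a contradiction. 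Hence no reachable configuration places both threads in the critical section, which is the assertion of \refthm{theor:mutex}.

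The substance therefore lies in the inductive step, and I expect it to be the main obstacle. Most steps are routine: those touching neither $\flag$ nor $\turn$ are closed by rules \textsc{NoMod} and \textsc{NoModOrd}, the relaxed flag write at \ref{set-flag} re-establishes \refeq{pete-prop:flag} through \textsc{ModLast}, and \textsc{WOrd} introduces the ordering $\flag_t \hbo{\sigma} \turn$ of \refeq{pete-prop:flag-turn-ord}. The delicate point is the release--acquire swap at \ref{swap-turn}. Here I would first use invariant \refeq{pete-prop:update-only} together with \reflem{lem:last-mod-trans} to guarantee that the update observes $\sigma.\last(\turn)$; only with that in hand do rules \textsc{ModLast} and \textsc{UOrd} apply, the former refixing the determinate value of $\turn$ and the latter carrying the variable ordering across the update so that \refeq{pete-prop:flag-turn-ord} survives.

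The crux is the preservation of the coupled invariants \refeq{pete-prop:cross} and its refinement \refeq{pete-prop:crit}, which encode Peterson's core argument that the thread performing its swap second either loses the turn or, by acquiring the earlier swap, sees the other flag set. Because $\turn$ is update-only and the swaps are release--acquire, the two swaps are totally $\lthb$-ordered and the later one reads-from and synchronises with the earlier; when the thread in the role of $\notT$ swaps, its own write fixes the right disjunct $\turn \detval{\sigma'}_{\notT} t$, whereas when the thread in the role of $t$ performs the (necessarily second) swap, rule \textsc{Transfer}---applied with the ordering $\flag_{\notT}\hbo{\sigma}\turn$ from \refeq{pete-prop:flag-turn-ord} and the value $\flag_{\notT}\detval{\sigma}_{\notT}\True$ from \refeq{pete-prop:flag}, both already holding for the earlier swapper---propagates the left disjunct $\flag_{\notT}\detval{\sigma'}_t\True$. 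Checking that these transfers are exactly the ones sanctioned by the rule premises, in particular that the observed write is releasing and is the last modification, is where the weak-memory reasoning concentrates; the passage from \refeq{pete-prop:cross} to \refeq{pete-prop:crit} at a failing guard test then follows as in the case already treated, via \reflem{lem:det-val-read}.
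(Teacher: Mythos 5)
Your proof of \refthm{theor:mutex} is exactly the paper's: assuming $P.\pc_1 = P.\pc_2 = \ref{critical-section}$, the two symmetric instantiations of invariant \refeq{pete-prop:crit} give $\mathit{turn} \detval{\sigma}_2 1$ and $\mathit{turn} \detval{\sigma}_1 2$, which \reflem{lem:ow-agreement} rules out. Your surrounding sketch of the inductive invariant proof also tracks the paper's appendix case analysis (including the use of \reflem{lem:last-mod-trans} with update-only $\mathit{turn}$, and rules {\sc ModLast}, {\sc UOrd} and {\sc Transfer} at the \kwswap\ step), so the approach is essentially identical.
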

\begin{proof}
Assume 
that $P.\pc_1 = 5$ and \mbox{$P.\pc_2 = 5$.} Then, by
Property \refeq{pete-prop:crit} above, we have $\mathit{turn} \detval{\sigma}_1  2$
and $\mathit{turn} \detval{\sigma}_2 1$.
But this is impossible by Lemma~\ref{lem:ow-agreement}. Thus,
$P.\pc_1 \neq 5$ or $P.\pc_2 \neq 5$. 
\end{proof}

\section{Conclusion and Related Work}

We have developed an operational semantics for the RAR fragment of the
C11 memory model, which has been shown to be both sound and complete with
respect to the axiomatic description. Thus, every state generated by
the operational semantics is guaranteed to be one allowed by the
axiomatic semantics. Moreover, any execution that is valid with
respect to the axiomatic semantics can be generated by the operational
semantics. Our semantics relies on a thread-local view of observability\footnote{Our notion of observability differs from those defined in \cite{DBLP:conf/ecoop/KaiserDDLV17,DBLP:conf/fm/FavaSS18}.}, which is defined 
in terms of $\lteco$ and $\lthb$ orders. We have developed a proof technique for our operational semantics with a notation that follows conventional proofs of  sequentially consistent memory as much as possible. Finally, we have applied this technique to an example verification. 



There is a large body of related work; here, we provide a brief snapshot. There are several works aimed at
providing operational semantics for a larger subset of C11, including models that aim to address the so-called thin-air problem
(that we rule out by the {\sc No-Thin-Air} axiom), which invariably lead to more complex semantics.  \citet{DBLP:conf/oopsla/NienhuisMS16} provide a semantics that supports inductive reasoning, but they are forced to consider an order that does not include $\ltsb$. This complicates a verification technique that follows program order.
\citet{DBLP:conf/popl/KangHLVD17} develop an operational model aimed at handling cycles in $\ltsb \cup \ltrf$.
Again, their sophisticated model handles a larger subset of the C11 language, but at the cost of a more complicated
state space and transition relation.  \citet{DBLP:conf/popl/LahavGV16} provide an operational model for a stronger 
release-acquire model, where $\ltsb \cup \ltrf \cup \ltmo$ is required to be acyclic.

\citet{DBLP:conf/popl/KangHLVD17} provide a basic program logic for
proving invariants; using their semantics in verification remains an open problem. 
\citet{DBLP:conf/esop/JagadeesanPR10} develop an operational semantics for
capable of coping with out-of-order executions for the Java memory model. However, their work aims to support Java compiler optimisations and they do not consider program verification.
One avenue for future work is to see how our notions of determinate-value and
variable-ordering assertions might be applied to a more sophisticated semantics~\cite{DBLP:conf/esop/JagadeesanPR10,DBLP:conf/popl/KangHLVD17}.

Concurrent separation logic (CSL) provides a different approach to verification,
and several frameworks have been developed for dealing with C11-style weak memory
\cite{DBLP:conf/ecoop/KaiserDDLV17,DBLP:conf/vmcai/DokoV16,DBLP:conf/esop/DokoV17,DBLP:conf/popl/JungSSSTBD15}.
These frameworks typically deal with a fragment of C11 containing release/acquire operations
but that is not comparable to the fragment of our model. Weak-memory CSL
has been a very active area of research for several years, and we refer the reader to
the introduction of \cite{DBLP:conf/ecoop/KaiserDDLV17} for an excellent review.




\begin{acks}                            
  This work has been supported by EPSRC grants
  \grantnum{EPSRC}{EP/R032556/1} and \grantnum{EPSCR}{EP/M017044/1},
  and DFG grant \grantnum{DFG}{WE 2290/12-1}. 
  We thank John Wickerson for helpful discussions. We
  also thank our anonymous reviewers and shepherd (Viktor Vafeiadis)
  for their comments, which have helped improve the paper.


\end{acks}

\bibliography{references}

\clearpage
\appendix

\section{Proofs of \refsec{sec:soundn-compl}}

\noindent{\bf \refthm{thm:completeness}.} {\em 
  Let $\sigma=((D,\ltsb),\ltrf,\ltmo)$ be a C11 state reachable from
  $\sigma_0$ using relation $\strans{\ \ \ }_{\Pad}$. Then $\sigma$
  satisfies {\sc SB-Total}, {\sc MO-Valid}, {\sc RF-Complete}, {\sc
    No-Thin-Air} and {\sc Coherence}.
}

\begin{proof}[Proof of \refthm{thm:completeness}]
  
By induction on the number of steps executed to reach $\sigma$. 

\medskip \noindent {\bf Induction base.} The initial state $\sigma_0$ satisfies all conditions as all relations are 
empty and there are no read event in $\sigma_0$. 

\medskip \noindent {\bf Induction step.} Let $\sigma_i$ be a C11 state
reachable in $i$ steps that fulfils the C11 consistency
conditions.  
Let $\sigma_i \strans{\ e\ }_{C11} \sigma_{i+1}$.  We need to show that
$\sigma_{i+1}$ satisfies all conditions.

\smallskip\noindent
{\sc SB-Total}: Follows from definition of + and induction hypothesis.

\smallskip \noindent {\sc MO-Valid}: Follows from definition of $\ltmo[w,e]$ and induction hypothesis.

\smallskip \noindent {\sc RF-Complete}: Follows from rules {\sc Read} and {\sc RMW} and induction hypothesis. 

\smallskip \noindent {\sc No-Thin-Air}: Let
$\sigma_i = ((D_i,\ltsb_i),\ltrf_i,\ltmo_i)$, assume (by induction
hypothesis) that $\ltsb_i \cup \ltrf_i$ is acyclic and consider the
introduction of element $e$ to $\sigma_i$. For each rule in
\reffig{fig:c11-opsem}, $e$ is maximal in $\ltsb_{i+1}$. Thus
$\ltsb_{i+1} \cup \ltrf_i$ is acyclic. Moreover, if $e$ is a write
$\ltrf_{i+1} = \ltrf_i$, and if $e$ is a read or an update, $e$ is
maximal in $\ltrf_{i+1}$, and hence, $\ltsb_{i+1} \cup \ltrf_{i+1}$
is acyclic.

\smallskip \noindent {\sc Coherence}. Assume $\lthb_i ; \lteco_i^?$ is
irreflexive. Consider case distinction on the type of event
$e$.
\begin{itemize}
\item $e$ is a read event. This introduces edges
  $(e',e) \in \ltsb_{i+1}$, $(w,e) \in \ltrf_{i+1}$ and
  $(e,w') \in \ltfr_{i+1}$ for each $w'$ such that
  $(w,w') \in \ltmo_i$.  If we have a cycle in
  $\lthb_{i+1}; \lteco_{i+1}^?$, the cycle has to pass through $e$,
  and therefore also leave $e$ via some $(e, w') \in \ltfr_{i+1}$ edge
  (since these are the only outgoing edges from $e$). There are two
  cases:
 
  \begin{enumerate}
  \item There is a path with edges $(w', e'') \in \lteco_{i+1}^?$
    and $(e'', e') \in \lthb_{i+1}^?$ for some $e'' \in D_i$. 
    
  \item There is a path with edges $(w', e'') \in \lteco_{i+1}^?$ and
    $(e'', w) \in \lthb_{i+1}^?$ and $(w, e) \in \ltsw_{i+1}^?$ (due to
    the events $w$ and $e$ synchronising via {\sf R} and {\sf A}
    synchronisation). 
  \end{enumerate}
  Both cases potentially create a reflexive edge via the composition
  of edges $(e'', e) \in \lthb_{i+1}$ and $(e, e'') \in \lteco_{i+1}$.
  However, we now have $w' \in \observedWrites_{\sigma_i}(t)$ and
  since $(w,w') \in \ltmo$, we have $w \notin \OW_{\sigma_i}(t)$ and
  hence the edge $(w, e) \in \ltrf_{i+1}$ cannot exist, giving rise to
  a contradiction.

\item $e$ is a write event. This introduces edges
  $(e',e) \in \ltsb_{i+1}$, $(w,e) \in \ltmo_{i+1}$ and
  $(r,e) \in \ltfr_{i+1}$ for any read $r$ in $D_i$ that reads
  $\var(e)$. If $e$ is maximal in $\ltmo_{i+1}$ we are done as
  $\lthb_{i+1} ; \lteco_{i+1}^?$ cannot be reflexive. Otherwise, there
  must be an edge that leaves $e$ via an edge
  $(e, w') \in \ltmo_{i+1}$. We have two cases:
  \begin{enumerate}
  \item There is a path with edges $(w', e'') \in \lteco_{i+1}^?$ and
    $(e'', e') \in \lthb_{i+1}^?$. This potentially creates a
    reflexive edge, via the composition of edges
    $(e'', e) \in \lthb_{i+1}$ and $(e, e'') \in
    \lteco_{i+1}$. However, this would mean we have
    $w \notin \OW_{\sigma_i}(t)$, and hence the edge
    $(w, e) \in \ltmo_{i+1}$ cannot exist.
  \item There is a path with edges $(w', e'') \in \lteco_{i+1}^?$ and
    $(e'', w) \in \lthb_{i+1}$. This potentially creates a reflexive
    edge, via the composition of edges $(e'', w) \in \lthb_{i+1}$ and
    $(w, e'') \in \lteco_{i+1}$. However, this also means that we have
    edges $(w, w') \in \ltmo_i$, $(w', e'') \in \lteco_i^?$ and
    $(e'', w) \in \lthb_i$, i.e., $\lthb_i ; \lteco_i^?$ is reflexive,
    which is a contradiction.
  \item There is a path with edges $(w', e'') \in \lteco_{i+1}^?$ and
    $(e'', r) \in \lthb_{i+1}$. This case is similar to the one above.

  \end{enumerate}
\item $e$ is an update event. This introduces edges
  $(e', e) \in \ltsb_{i+1}$, $(w, e) \in \ltrf_{i+1}$,
  $(w, e) \in \ltmo_{i+1}$. The proof is similar to the proofs of the
  read and write cases.
\end{itemize} 
We now show $\lteco_{i+1}$ is irreflexive, assuming  $\lteco_{i}$ is
irreflexive. We perform case analysis on the type of event $e$.
\begin{itemize}
\item $e$ is a read event. This introduces $\lteco$ edges
  $(w,e) \in \ltrf_{i+1}$ and $(e,w') \in \ltfr_{i+1}$ for each $w'$
  such that $(w,w') \in \ltmo_{i}$. If $\lteco_{i+1}$ is reflexive, we
  must have an edge $(w', w) \in \lteco_{i+1}$. But this means we have
  edges $(w', w) \in \lteco_i$ and
  $(w, w') \in \ltmo_i \subseteq \lteco_{i+1}$, i.e., $\lteco_i$ is
  reflexive, which is a contradiction.
\item $e$ is a write event. This introduces $\lteco$ edges
  $(w,e) \in \ltmo_{i+1}$ and $(r,e) \in \ltfr_{i+1}$. If $e$ is
  maximal in $\ltmo_{i+1}$ we are done as $\lteco_{i+1}$ cannot be
  reflexive. Otherwise, there is an path that leaves $e$ via an edge
  $(e, w') \in \ltmo_{i+1}$. Then we either have a path with edge
  $(w', w) \in \lteco_{i+1}$ or a path with edge
  $(w', r) \in \lteco_{i+1}$. However, both contradict the assumption
  that $\lteco_i$ is irreflexive.
\item $e$ is an update event. This introduces $\lteco$ edges
  $(w,e) \in \ltrf_{i+1}$, $(w,e) \in \ltmo_{i+1}$, as well as
  $(e, w') \in \ltmo_{i+1}$ and $(e,w') \in \ltfr_{i+1}$ for each $w'$
  such that $(w,w') \in \ltmo_{i}$. If $\lteco_{i+1}$ is reflexive, we
  must have an edge $(w', w) \in \lteco_{i+1}$. But this means we have
  edges $(w', w) \in \lteco_i$ and
  $(w, w') \in \ltmo_i \subseteq \lteco_{i+1}$, i.e., $\lteco_i$ is
  reflexive, which is a contradiction. 
\end{itemize}
\end{proof}

\smallskip\noindent {\bf \reflem{lem:permutation}.} {\em 
  Let
  $Q = (P_0,\gamma_0) \xRightarrow{e_1}_\PreExec (P_1,\gamma_1)
  \xRightarrow{e_2}_\PreExec \ldots \xRightarrow{e_k}_\PreExec
  (P_k,\gamma_k)$ and $\gamma_k = (D_k, \ltsb_k)$. Then for
  every 
  linearization $f_1, \ldots , f_k$ of 
  $\ltsb_k$, there exist programs $P_1', \ldots, P_{n-1}'$
  and pre-execution states $\gamma_1', \ldots, \gamma_{n-1}'$ such
  that
  \[
    (P_0,\gamma_0) \xRightarrow{f_1}_\PreExec (P_1',
    \gamma_1') \xRightarrow{f_2}_\PreExec \ldots
    \xRightarrow{f_k}_\PreExec (P_k,\gamma_k)\ .
  \]}
\begin{proof}[Proof of \reflem{lem:permutation}]
  Let $F = f_1, \dots, f_k$ and $E = e_1, \dots, e_k$, and let
  $\delta_i$ represent the pair $(P_i, \gamma_i)$. Suppose $f_1$, the
  first element of $F$ is the element $e_i$, the $i$th element of
  $E$. We show that $Q$ can be transformed into a valid sequence of
  $\PreExec$ steps such that $e_i$ is the first event considered.  By
  definition, we have that $Q$ is the sequence:
  \begin{align*}
    \delta_0 \xRightarrow{e_1}_\PreExec \dots \delta_{i-2}\xRightarrow{e_{i-1}}_\PreExec \delta_{i-1}
    \xRightarrow{e_i}_\PreExec \delta_{i} \ldots \xRightarrow{e_k}_\PreExec 
    \delta_k\ .
  \end{align*}
  Since both $E$ and $F$ are a linearizations of $\ltsb_k$, we have
  the property:
  \begin{align}
    \label{eq:tid-prop}
    \forall j.\ 0 \le j \le i-1 \imp \tid(e_j) \neq tid(e_i)
  \end{align}
  By \refeq{eq:tid-prop}, we have in particular that
  $\tid(e_{i-1}) \neq \tid(e_i)$. Thus by
  \refprop{prop:program-PE-commute} there must exist a
  $\delta_{i-1} '$ such that
  $\delta_{i-2}\xRightarrow{e_{i}}_\PreExec \delta_{i-1}'
  \xRightarrow{e_{i-1}}_\PreExec \delta_{i}$, and hence, a valid
  pre-execution sequence
  \begin{align*}
    \delta_0 \xRightarrow{e_1}_\PreExec \dots \delta_{i-2}\xRightarrow{e_{i}}_\PreExec \delta_{i-1}'
    \xRightarrow{e_{i-1}}_\PreExec \delta_{i} \ldots \xRightarrow{e_k}_\PreExec
    \delta_k\ .
  \end{align*}
  Again by property \refeq{eq:tid-prop}, we have that
  $\tid(e_{i-2}) \neq \tid(e_i)$ and the process above can be
  repeated so that we obtain:
  \begin{align*}
    \delta_0 \xRightarrow{e_1}_\PreExec \dots \xRightarrow{e_{i}}_\PreExec \delta_{i-2}'
    \xRightarrow{e_{i-2}}_\PreExec \delta_{i-1}'
    \xRightarrow{e_{i-1}}_\PreExec \delta_{i} \ldots \xRightarrow{e_k}_\PreExec
    \delta_k\ .
  \end{align*}
  Further repeating this process, we obtain:
  \begin{align*}
    \begin{array}[t]{@{}l@{}}
      \delta_0 \xRightarrow{e_i}_\PreExec \delta_1' \xRightarrow{e_1}_\PreExec \delta_2'  \dots \delta_{i-2}'   
      \xRightarrow{e_{i-2}}_\PreExec \delta_{i-1}' 
      \xRightarrow{e_{i-1}}_\PreExec \qquad \\ \hfill \delta_{i} \xRightarrow{e_{i+1}}_\PreExec \ldots \xRightarrow{e_k}_\PreExec
      \delta_k\ .
    \end{array}
  \end{align*}
  which (since $e_i = f_1$) is equivalent to:
  \begin{align*}
    \begin{array}[t]{@{}l@{}}
    \delta_0 \xRightarrow{f_1}_\PreExec \delta_1' \xRightarrow{e_1}_\PreExec \delta_2' \dots \delta_{i-2}'    
    \xRightarrow{e_{i-2}}_\PreExec \delta_{i-1}'
      \xRightarrow{e_{i-1}}_\PreExec \qquad \\
      \hfill \delta_{i} \xRightarrow{e_{i+1}}_\PreExec  \ldots \xRightarrow{e_k}_\PreExec
    \delta_k\ .
    \end{array}
  \end{align*}
  We can now repeat the entire process for $f_2$ using the property
  and percolate the element it corresponds to in $E$ it to the correct
  position in $F$ since $f_2 \neq e_i$, i.e. $f_2$ corresponds to an
  element in $\{e_1, \dots e_k\} \backslash \{e_i\}$. Assuming that
  $f_2$ corresponds to position ${i'}$ in $E$, we have property
  $\forall j.\ 1 \le j \le i'-1 \imp \tid(e_j) \neq tid(e_{i'})$
  (analogous to \refeq{eq:tid-prop}), where the lower index is
  increased by $1$ and upper index is adjusted to $i' - 1$. Once $f_2$
  is in position, we can repeat for $f_3$ and so forth. \hfill \qed
\end{proof}

We now show that for every justifiable pre-execution there is an
execution of the C11 semantics that ends in the C11 state justifying
the pre-execution.  The theorem uses a notion that restricts
pre-executions and C11 executions to a set of events. For a set of
events $E \subseteq D$, we define:
\begin{align*}
  (D, \ltsb)_{\downarrow E} & = (E, \ltsb \cap (E \times E)) \\
  (\gamma,\ltrf,\ltmo)_{\downarrow E} & = (\gamma_{\downarrow E}, \ltrf \cap (E \times E), \ltmo \cap (E \times E))
\end{align*}
In the completeness proof, we assume that the given pre-execution
sequence
$(P_0,\gamma_0) \xRightarrow{e_1}_\PreExec (P_1,\gamma_1)
\xRightarrow{e_2}_\PreExec \ldots \xRightarrow{e_k}_\PreExec
(P_k,\gamma_k)$ has been reordered such that $e_1 \dots e_k$ is a
linearization of $\ltsb_k \cup \ltrf_k$, where $\ltrf_k$ is the
reads-from relation used in the justification of $\gamma_k$. Such a
linearization is possible since $\ltsb_k \cup \ltrf_k$ is acyclic (axiom {\sc No-Thin-Air}).

\smallskip \noindent {\bf \refthm{thm:soundness}.}{\em 
  Suppose
  $(P_0,\gamma_0) \xRightarrow{e_1}_\PreExec (P_1,\gamma_1)
  \xRightarrow{e_2}_\PreExec \ldots \xRightarrow{e_k}_\PreExec
  (P_k,\gamma_k)$ such that $\gamma_k = (D_k, \ltsb_k)$ is justifiable
  with justification $\sigma_k= (\gamma_k,\ltrf_k,\ltmo_k)$ and
  $e_1, \dots, e_k$ is a linearization of $\ltsb_k \cup \ltrf_k$. Then
  \[ (P_0,\sigma_0) \xRightarrow{e_{1}}_{\Pad} (P_1, \sigma_1)
    \xRightarrow{e_{2}}_{\Pad} \ldots \xRightarrow{e_k}_{\Pad}
    (P_k,\sigma_k) \] where
  $\sigma_i = (\gamma_k,\ltrf_k,\ltmo_k)_{\downarrow \{e_1, \ldots, e_i
  \}}$, $0 < i < k$.}
\begin{proof}[Proof of \refthm{thm:soundness}]
  By induction on the number of steps.

\noindent {\bf Base case.} The initial configurations agree and hence the claim holds for 0 steps.

\smallskip
\noindent {\bf Induction step.} Let the above claim hold for sequences
up to length $j$. We perform a case split on the type of event
$e_{j+1} \notin D_j$.
\begin{enumerate}
\item $e_{j+1}$ is a read event of a thread $t$, i.e.,
  $\tid(e_{j+1}) = t$. 
  Let $w$ be the write event that $e_{j+1}$ reads from, i.e.,
  $(w,e_{j+1}) \in \ltrf_k$. We known $w \in D_j$ since we consider
  elements in $\ltsb_k \cup \ltrf_k$ order. 
  
  We need to show that $w \in \OW_{\sigma_j}(t)$.  The proof is by
  contradiction. Assume $w \notin \OW_{\sigma_j}(t)$, then there
  exists a $w' \in \observedWrites_{\sigma_j}(t)$ such that
  $(w,w') \in \ltmo_j$. Hence $(e_{j+1}, w') \in \ltfr_{k}$ and there
  exists some $e$ 
  such that $(w', e) \in \lteco_{k}^?$ and $(e, e_{j+1}) \in \lthb_{k}^?$. There are three possibilities:
  \begin{itemize}
  \item $(w', e), (e, e_{j+1}) \in \emph{Id}$. This is an immediate contradiction since it implies $w' = e_{j+1}$. 
  \item $(w', e) \in \lteco_{k}$ and $(e, e_{j+1}) \in \emph{Id}$,
    i.e., $e = e_{j+1}$. This contradicts the assumption that
    $\lteco_{k}$ is irreflexive.

  \item $(w', e) \in \lteco_{k}^?$ and
    $(e, e_{j+1}) \in \lthb_{k}$. We then have
    $(e_{j+1}, e) \in \lteco_{k}$ resulting in a contradiction to
    the assumption that $\lthb_{k} ; \lteco_{k}^?$ is irreflexive. 
  \end{itemize}
  The contradictory scenario is depicted by the following diagram:
  \begin{center}
     \begin{tikzpicture}
       \node (a) at (6,2) {$e_{j+1}$}; 
       \node (b) at (4,2) {$e$};
       \node (c) at (2,2) {$w'$};
       \node (d) at (0,3) {$w$}; 
       \path 
       (d) edge[mo] node[below] {$\ltmo_{k}$} (c)
       (c) edge[hb] node[below] {$\lteco_{k}^?$} (b)
       (b) edge[hb] node[below] {$\lthb_{k}^?$} (a) 
       (d) edge[rf] node[pos=0.3,above] {$\ltrf_{k}$} (a)
       (a) edge[fr,bend right] node[above] {$\ltfr_{k}$} (c); 
     \end{tikzpicture}
   \end{center}

 \item Suppose $e_{j+1}$ is a write event and  
   $w$ is the immediate predecessor of $e_{j+1}$ in
   $\ltmo_k$. Note that
   $w$ may either be a write or an update event. We must show that it
   is possible to take a
   $\strans{e_{j+1}}_{\Pad}$ step such that
   $e_{j+1}$ is placed immediately after
   $w$. To this end, we must show that $w \in \OW_{\sigma_j}(t)$.

   Suppose not, i.e., $w \notin \OW_{\sigma_j}(t)$.  Then there exists
   an event $w' \in \observedWrites_{\sigma_j}(t)$ such that
   $(w,w') \in \ltmo$ and an event $e$ such that
   $(w',e) \in \lteco_{k}^?$ and $(e,e_{j+1}) \in
   \lthb_{k}^?$. Since we have assumed $w$ is an immediate
   predecessor of $e_{j+1}$ in $\ltmo_{k}$ and that
   $(w, w') \in \ltmo_{k}$, we must have $(e_{j+1}, w')$.
   There are three possibilities:
   \begin{itemize}
   \item $(w', e), (e, e_{j+1}) \in \emph{Id}$. This is an immediate
     contradiction since it implies $w' = e_{j+1}$ and we have assumed
     $w' \in D_j$, $e_{j+1} \notin D_j$.
   \item $(w', e) \in \lteco_{k}$ and $(e, e_{j+1}) \in \emph{Id}$,
     i.e., $e = e_{j+1}$. This contradicts the assumption that
     $\lteco_{k}$ is irreflexive.
     
   \item $(w', e) \in \lteco_{k}^?$ and
     $(e, e_{j+1}) \in \lthb_{k}$. We then have
     $(e_{j+1}, e) \in \lteco_{k}$ resulting in a contradiction to
     the assumption that $\lthb_{k} ; \lteco_{k}^?$ is irreflexive. 
   \end{itemize}
   
  The contradictory scenario is depicted by the following diagram:

      \begin{center}
        \begin{tikzpicture}
           \node (a) at (6,2) {$e_{j+1}$};
           \node (b) at (4,2) {$e$};
           \node (c) at (2,2) {$w'$}; 
           \node (d) at (0,3) {$w$}; 
           \path 
           (d) edge[mo] node[below] {$\ltmo_{k}$} (c)
           (c) edge[hb] node[below] {$\lteco_{k}^?$} (b)
           (b) edge[hb] node[below] {$\lthb_{k}^?$} (a) 
           (d) edge[mo] node[pos=0.3,above] {$\ltmo_{k}$} (a)
           (a) edge[mo,bend right] node[above] {$\ltmo_{k}$} (c); 
         \end{tikzpicture} 
       \end{center}
       We also need to show that $w$ selected is not covered. Again
       assume the contrary: there exists some update event $u$ such
       that $(w,u) \in \ltrf_k$. Then $(w,u) \in \ltmo_k$ as well.
       Hence there is an edge $(u,e_{j+1}) \in \ltfr_k$. Since the
       update $u$ and $e_{j+1}$ write to the same location, they need
       to be $\ltmo$-ordered. Here we have two cases: 
        \begin{itemize}
        \item If $(u,e_{j+1}) \in \ltmo_k$, then $w$ is not the immediate
          predecessor of $e_{j+1}$ in $\ltmo_k$.
        \item If $(e_{j+1},u) \in \ltmo_k$, then the $\ltfr_k$ edge
          and the $\ltmo_k$ edge together form a cyle, contradicting
          irreflexivity of $\lteco_k$.
        \end{itemize}
      \item Suppose $e_{j+1}$ is an update event  and $w$ is the
        immediate predecessor of $e_{j+1}$ in $\ltmo_k$. We must show
        that it is possible to take a\ $\strans{e_{j+1}}_{\Pad}$ step
        such that $e_{j+1}$ is placed immediately after $w$. This case
        is a combination of the read and write cases, namely if we
        assume $w \notin \OW_{\sigma_j}(t)$, then there must exist a
        $w'$ and $e$ as shown in the diagram below, which is a
        contradiction.
        \begin{center}
         \begin{tikzpicture}
           \node (a) at (6,2) {$e_{j+1}$};
           \node (b) at (4,2) {$e$};
           \node (c) at (2,2) {$w'$}; 
           \node (d) at (0,3) {$w$}; 
           \path 
           (d) edge[mo] node[below] {$\ltmo_{k}$} (c)
           (c) edge[hb] node[below] {$\lteco_{k}^?$} (b)
           (b) edge[hb] node[below] {$\lthb_{k}^?$} (a) 
           (d) edge[mo] node[pos=0.3,above] {$\ltmo_{k}$} (a)
           (a) edge[mo,bend right] node[above] {$\ltmo_{k}, \ltfr_k$} (c); 
         \end{tikzpicture} 
       \end{center}
      \end{enumerate} 
 \end{proof}


\section{Proofs for Section \ref{sect:verif}}


\subsection{Proofs of lemmas}

\begin{figure*}[t]
  \centering
  $\inference[{\sc Init}]
  {\sigma_0 = ((I, \emptyset), \emptyset, \emptyset) \\ I \subseteq \IW}
  {x \detval{\sigma_0}_t \wrval(\sigma_0.\last(x))}$
  \ \  
  $ \inference[{\sc ModLast}] {x = \loc(e) \\  e \in \W_{|x} \\ m=\sigma.\last(x)} 
  {x \detval{\sigma'}_{\tid(e)} \wrval(e)}$
  \ \ 
  $\inference[{\sc Transfer}] {y = \var(e) \quad x \hbo{\sigma} y \\
  x \detval{\sigma}_{t} v \quad (m, e) \in \sw \\
  m = \sigma.\last(y)}
  {x \detval{\sigma'}_{\tid(e)} v}$
  \ \ 
      $\inference[{\sc UOrd}] {m \in \W_{{\sf R}|y} \\ e \in \URA_{|y} \\ x \hbo{\sigma} y }
  {x \hbo{\sigma'} y}$

  \bigskip 
  $\inference[{\sc NoMod}]
  {e \notin \W_{|x} \\ x \detval{\sigma}_t v}
  {x \detval{\sigma'}_t v}$
  \ \   
  $\inference[{\sc AcqRd}] {x = \loc(e) \quad e \in \RA_{|x} \\ m \in \WR_{|x} \quad
  m = \sigma.\last(x)}
  {x \detval{\sigma'}_{\tid(e)} \rdval(e)}$
  \ \ 
  $\inference[{\sc WOrd}] {x \neq y \quad e \in \W_{|y} \\
  x \detval{\sigma}_{\tid(e)} v \quad
  m = \sigma.\last(y)}
  {x \hbo{\sigma'} y}$
  \ \ 
  $\inference[{\sc NoModOrd}] {e \notin \W_{|\{x, y\}} 
    \\ x \hbo{\sigma} y}
  {x \hbo{\sigma'} y}$
  \vspace{-3mm}
  
  \caption{Rules for determinate-value and variable-order assertions.
    We assume $\sigma, m, e, \sigma'$ satisfy
    $(\_,\sigma) \ltsArrow{m, e}_{\Pad} (\_,\sigma')$. }

  \label{fig:det-val-ind-appendix}
\end{figure*}





\smallskip\noindent {\bf \reflem{lem:det-val-read}. }(Determinate-Value Read) {\em
For any {\sc Read} or {\sc RMW} transition
$(P,\sigma) \ltsArrow{m, e}_{\Pad} (P',\sigma')$, if
$\loc(e) \detval{\sigma}_{\tid(e)} v$, then $\rdval(e) = v$.
}
\begin{proof}
  By the definition of $\loc(e) \detval{\sigma}_{\tid(e)} v$, we know
  $m = \sigma.\last(x)$. Both the {\sc Read} or {\sc RMW} transitions
  stipulate that the value read is the $\rdval(e) = \wrval(m) = v$.
\end{proof}

\smallskip\noindent {\bf \reflem{lem:ow-agreement}. }(Determinate-Value Agreement) {\em
For any threads $t, t'$ location $x$, and values $v, v'$,
if $x \detval{\sigma}_{t} v$ and $x \detval{\sigma}_{t'} v'$
then $v = v'$, and thus $t$ and $t'$ agree on the value of $x$.
}
\begin{proof}
  By $x \detval{\sigma}_{t} v$ and $x \detval{\sigma}_{t'} v'$, we
  have $\OW_{\sigma}(t) = \OW_{\sigma}(t') = \{\sigma.\last(x)\}$, and
  thus $v = v'$.
\end{proof}


\smallskip\noindent {\bf \reflem{lem:last-mod-trans}.} {\em Let $t = \tid(e)$ and
  $x = \var(e)$ for some event $e$. For any reachable transition
  $(P,\sigma) \ltsArrow{m, e}_\Pad (P',\sigma')$,
  $m = \sigma.\last(x)$ if either of the following conditions hold:
\begin{enumerate}
\item \label{last-mod-trans:detval} $x \detval{\sigma}_{t} v$, for
  some value $v$, or 
\item  \label{last-mod-trans:up-only}  $x$ is an update only location in $\sigma$.
\end{enumerate}}
\begin{proof} If property \ref{last-mod-trans:detval} holds, then
  $\OW_\sigma(t)_{|x} = \{\sigma.\last(x)\}$, and thus
  $m = \sigma.\last(x)$. If property \ref{last-mod-trans:up-only}
  holds, then every modification to $x$ is covered, except the
  last. Thus, because $m$ is not covered, $m = \sigma.\last(x)$.
\end{proof}

\subsection{Soundness of determinate-value and variable-order assertions}

For simplicity, we copy \reffig{fig:det-val-ind} as \reffig{fig:det-val-ind-appendix}. 
We refer to the set in condition \refeq{dva:3} of Definition~\ref{def:det-val} as the {\em happens-before cone} of $t$ in $\sigma$, and hence define: 
\[
 \sigma.\hbc(t) = I_\sigma \cup \{e \mid \exists e'.\  \tid(e) = t \wedge (e, e') \in \sigma.\lthb^?\}
\]

\begin{lemma} {\sc Init} is valid. 
\end{lemma}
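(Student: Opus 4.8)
The plan is to verify directly that all three conditions \refeq{dva:1}--\refeq{dva:3} of Definition~\ref{def:det-val} hold in the initial state $\sigma_0 = ((I, \emptyset), \emptyset, \emptyset)$, with $v$ instantiated to $\wrval(\sigma_0.\last(x))$ so that \refeq{dva:2} holds by construction. Since {\sc Init} is an axiom whose conclusion speaks directly about $\sigma_0$ (rather than a post-transition state $\sigma'$), there is no transition hypothesis and no inductive step to discharge; the whole argument is a computation inside $\sigma_0$.

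First I would observe that every derived relation in $\sigma_0$ is empty. From $\ltsb = \ltrf = \ltmo = \emptyset$ we get $\ltsw = \ltrf \cap (\WR \times \RA) = \emptyset$, hence $\lthb = (\ltsb \cup \ltsw)^+ = \emptyset$, and likewise $\ltfr = \emptyset$ and $\lteco = (\ltfr \cup \ltmo \cup \ltrf)^+ = \emptyset$. I would also recall the stipulation on initial states: $D = I \subseteq \IW$ consists solely of initialising writes, with exactly one write $w_x$ to each variable $x$. Hence $I_{\sigma_0} = D \cap \IW = I$, and since $\ltmo = \emptyset$ and $w_x$ is the unique write to $x$ in $D$, it is trivially $\ltmo_{|x}$-maximal, so $\sigma_0.\last(x) = w_x$ is well defined.

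For condition \refeq{dva:1}, the key point is that with $\ltmo = \emptyset$ the universal clause in the definition of $\OW_{\sigma_0}(t)$ holds vacuously for every encountered write $w'$, so $\OW_{\sigma_0}(t) = \W \cap D = I$ for every $t$ (independently of whether $\observedWrites_{\sigma_0}(t)$ is empty, as it is for $t \neq 0$). Restricting to $x$ and using that $w_x$ is the only write to $x$ in $I$ gives $\OW_{\sigma_0}(t)_{|x} = \{w_x\} = \{\sigma_0.\last(x)\}$, as required. Condition \refeq{dva:3} is then immediate, since $\sigma_0.\last(x) = w_x \in I = I_{\sigma_0}$ lies in the first disjunct of the happens-before cone.

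The only place the structure of the initial state genuinely matters is condition \refeq{dva:1}: one must notice that emptiness of $\ltmo$ makes \emph{every} write observable, so that the restriction to $x$ collapses to a singleton precisely because $\sigma_0$ is required to contain exactly one initialising write per variable. I expect this to be the single point needing care; the remainder is routine unfolding of the definitions.
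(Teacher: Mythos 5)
Your proof is correct and follows essentially the same route as the paper's: the paper likewise observes that $\ltmo = \emptyset$ forces $\OW_{\sigma_0}(t) = I$, that $|I_{|x}| = 1$ collapses the restriction to $\{\sigma_0.\last(x)\}$, that condition \refeq{dva:2} holds by definition, and that \refeq{dva:3} is immediate from $\sigma_0.\last(x) \in I$. Your version merely spells out the vacuity of the universal clause and the well-definedness of $\sigma_0.\last(x)$, which the paper leaves implicit.
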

\begin{proof} We have
  $\sigma_0 = ((I, \emptyset), \emptyset, \emptyset)$. We have three
  sub proofs
  \begin{enumerate}
  \item Since $\ltmo = \emptyset$, we have $\OW_{\sigma_0}(t) = I$,
    and also $|I_{|x}| = 1$ and hence
    $I_{|x} = \{\sigma_0.\last(x)\} = \OW_{\sigma_0}(t)_{|x}$.
  \item Trivial by definition.
  \item Immediate since $\sigma.\last(x) \in I$.
  \end{enumerate}
\end{proof}

\begin{lemma}[Establish Determinate-Values]For any reachable transition
  $(P,\sigma) \ltsArrow{m, e}_\Pad (P',\sigma')$, the rules {\sc NoMod}, {\sc ModLast}, {\sc Transfer} and {\sc AcqRead} from \reffig{fig:det-val-ind-appendix} hold. 
\end{lemma}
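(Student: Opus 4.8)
The plan is to prove each of the four rules separately, in each case discharging the three conditions \refeq{dva:1}--\refeq{dva:3} of Definition~\ref{def:det-val} for the post-state $\sigma'$ and the relevant thread. Before doing so I would record some preliminary facts about a single transition $(\_,\sigma) \ltsArrow{m,e}_{\Pad} (\_,\sigma')$. Since $\sigma'$ is obtained from $\sigma$ by the $+$ operator together with new $\ltrf$/$\ltmo$ edges incident on the fresh event $e$, everything grows monotonically: $\ltmo \subseteq \ltmo'$, $\ltrf \subseteq \ltrf'$, $\lthb \subseteq \lthb'$ and $I_\sigma \subseteq I_{\sigma'}$, hence $\observedWrites_\sigma(t) \subseteq \observedWrites_{\sigma'}(t)$ and $\sigma.\hbc(t) \subseteq \sigma'.\hbc(t)$ for every $t$. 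I would also isolate the workhorse observation: if $w$ is maximal in $\ltmo'_{|x}$ (so $w = \sigma'.\last(x)$) and $w \in \observedWrites_{\sigma'}(t')$, then $\OW_{\sigma'}(t')_{|x} = \{w\}$, because every other write to $x$ is $\ltmo'$-before the encountered $w$ and is therefore unobservable, while $w$ itself, being $\ltmo'_{|x}$-maximal, is never dominated. Finally, when $e$ is not a write to $x$, $\ltmo'_{|x} = \ltmo_{|x}$ so $\sigma'.\last(x) = \sigma.\last(x)$, whereas a write/update $e$ to $x$ that observes $m = \sigma.\last(x)$ is inserted by $\ltmo[m,e]$ immediately after $m$ and thus becomes the new $\sigma'.\last(x)$.

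The two ``static'' rules are then routine. For {\sc NoMod}, $e \notin \W_{|x}$ gives $\sigma'.\last(x) = \sigma.\last(x)$, so \refeq{dva:2} is inherited and \refeq{dva:3} follows from $\sigma.\hbc(t) \subseteq \sigma'.\hbc(t)$; for \refeq{dva:1} I would show $\OW_{\sigma'}(t)_{|x} \subseteq \OW_\sigma(t)_{|x} = \{\sigma.\last(x)\}$ using $\observedWrites_\sigma(t) \subseteq \observedWrites_{\sigma'}(t)$ and $\ltmo \subseteq \ltmo'$, the reverse inclusion being the maximality half of the workhorse. For {\sc ModLast}, $e \in \W_{|x}$ observing $m = \sigma.\last(x)$ makes $e = \sigma'.\last(x)$ with $\wrval(e) = \wrval(\sigma'.\last(x))$ (giving \refeq{dva:2}); since $e$ is $\tid(e)$'s own event we have $e \in \observedWrites_{\sigma'}(\tid(e))$ and $e \in \sigma'.\hbc(\tid(e))$ by reflexivity of $\lthb^?$, so the workhorse yields \refeq{dva:1} and \refeq{dva:3} is immediate.

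The interesting rules are {\sc AcqRd} and {\sc Transfer}, where a synchronises-with edge does the real work. For {\sc AcqRd} the transition is a {\sc Read}, so $e$ does not modify $x$ and $\sigma'.\last(x) = m$; the reads-from edge $(m,e)$ lies in $\WR \times \RA$ and hence in $\ltsw' \subseteq \lthb'$. Thus $m$ happens-before the $\tid(e)$-event $e$, placing $m$ in both $\observedWrites_{\sigma'}(\tid(e))$ and $\sigma'.\hbc(\tid(e))$; \refeq{dva:1} follows from the workhorse, \refeq{dva:2} from $\rdval(e) = \wrval(m)$ read off the {\sc Read} rule, and \refeq{dva:3} from the hb edge. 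For {\sc Transfer} I would first note that $x \hbo{\sigma} y$ forces $x \neq y$ (otherwise $(\sigma.\last(x),\sigma.\last(x)) \in \lthb$, contradicting irreflexivity of $\lthb$, which follows from {\sc Coherence} via Theorem~\ref{thm:completeness}), so $e$---a read/update of $y$---does not modify $x$ and $\sigma'.\last(x) = \sigma.\last(x)$. Then composing the assumed $(\sigma.\last(x),\sigma.\last(y)) = (\sigma.\last(x),m) \in \lthb$ with the new $(m,e) \in \ltsw'$ gives $(\sigma.\last(x),e) \in \lthb'$ with $\tid(e)$ the target thread, so $\sigma.\last(x)$ is again encountered by and in the hb-cone of $\tid(e)$; the three conditions follow as before, with \refeq{dva:2} inherited from $x \detval{\sigma}_t v$.

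I expect the {\sc Transfer} case to be the main obstacle: it is the only rule that transports a determinate value \emph{across} threads, so it is where the argument genuinely exploits release/acquire synchronisation, and it requires carefully chaining the pre-existing variable-ordering edge with the freshly created $\ltsw'$ edge while also justifying $x \neq y$. The remaining care is bookkeeping---checking that $m = \sigma.\last(y)$ is what makes the reads-from edge land on the last write of $y$ (so that the chain reaches exactly $\sigma.\last(x)$), and confirming that {\sc AcqRd} is applied only to genuine reads (an update to $x$ would change $\sigma'.\last(x)$ and is instead covered by {\sc ModLast}).
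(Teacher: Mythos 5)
Your proposal is correct and takes essentially the same route as the paper's own proof: a per-rule case analysis discharging conditions \eqrng{dva:1}{dva:3} of Definition~\ref{def:det-val}, using preservation of $\sigma.\last(x)$ under non-modifying steps, $\ltmo$-maximality of the last write to obtain the singleton observable-write set, and the fresh $\ltsw$ edge (composed with the prior variable-ordering edge in {\sc Transfer}, including the $x \neq y$ argument from irreflexivity of $\lthb$) to place $\sigma'.\last(x)$ in the happens-before cone of the target thread. Your explicit ``workhorse'' observation merely factors out a step the paper repeats in each case, and your restriction of {\sc AcqRd} to genuine reads is a sound tightening: the paper's proof cites both {\sc Read} and {\sc RMW} there, yet its claim $\sigma'.\ltmo_{|x} = \sigma.\ltmo_{|x}$ holds only for pure reads, with updates correctly handled by {\sc ModLast} as you note.
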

\begin{proof}\ 

\noindent {\sc NoMod}. This is easy to check since
    $\sigma.\last(x) = \sigma'.\last(x)$ and
    $\OW_{\sigma}(t)_{|x} = \OW_{\sigma'}(t)_{|x}$.

\medskip\noindent{\sc ModLast}. Since $m = \sigma.\last(x)$, the new
    modification $e$ is added to the end of $\ltmo_{|x}$, so that
    $\sigma'.\last(x) = e$. Because $e \in \OW_{\sigma'}(t)_{|x}$ and
    $e$ is $\ltmo$-after every other modification in
    $\sigma'.\ltmo_{|x}$, $\OW_{\sigma'}(t)_{|x}= \{e\}$. Finally, because
    $\tid(e) = t$, $e \in \{e \mid \exists e'.\  \tid(e) = t \wedge (e, e') \in \sigma.\lthb^?\}$.

\medskip\noindent{\sc Transfer}. 
First note that because $x \hbo{\sigma} y$ we have
    \begin{equation}
        (\sigma.\last(x), \sigma.\last(y)) \in \sigma.\lthb\ . \label{eq:transfer-1}
    \end{equation}
    Because $\lthb$ is irreflexive, we also have $x \neq y$, and thus by {\sc NoMod}: 
    \begin{equation}
        \sigma'.\last(x) = \sigma.\last(x)\ . \label{eq:transfer-prop}
    \end{equation}
    By \refeq{eq:transfer-1} and \refeq{eq:transfer-prop}, we have $(\sigma'.\last(x), \sigma.\last(y)) \in \sigma.\lthb$. Moreover, 
    \begin{align*}
        & (\sigma'.\last(x), \sigma.\last(y)) \in \sigma.\lthb 
        \\
      {}\Rightarrow{} & (\sigma'.\last(x), \sigma.\last(y)) \in \sigma'.\lthb
      && \text{because $\sigma.\lthb \subseteq \sigma'.\lthb$}
      \\
      {} \Rightarrow{} & 
        \begin{array}[t]{@{}l@{}}
        (\sigma.\last(y), e) \in \sigma'.\lthb \land {} \\
        (\sigma'.\last(x), e) \in \sigma'.\lthb 
      \end{array}   
      && 
      \text{\begin{tabular}[t]{@{}l@{}}
        assumption \\
        $(\sigma.\last(y), e) \in \ltsw$\end{tabular}}
    \end{align*}
Therefore, \begin{equation*}
      \sigma'.\last(x) \in \sigma'.\hbc(\tid(e))
    \end{equation*}
    This proves the third property of the determinate-value assertion.
    We prove the two remaining properties of the determinate-value
    assertion:
  \begin{itemize}
  \item  
    Since $\sigma'.\last(x) \in \sigma'.\hbc(\tid(e))$, we have
        $\sigma'.\last(x) \in \observedWrites_{\sigma}(t)$,
  and therefore $\sigma'.\OW(t)_{|x} = \{\sigma'.\last(x)\}$.
    
  \item 
  By \refeq{eq:transfer-prop},
    $\wrval(\sigma'.\last(x)) = v$ iff $\wrval(\sigma.\last(x)) = v$,
    which is true because $x \detval{\sigma}_{t'} v$.
  \end{itemize}
  
\medskip\noindent{\sc AcqRd}. 
We know $\sigma'.\ltmo_{|x} = \sigma.\ltmo_{|x}$ and thus
  $\sigma'.\last(x) = \sigma.\last(x)$. Therefore by the assumption $m = \sigma.\last(x)$, we have $m = \sigma'.\last(x)$.
  Because $m \in \observedWrites_{\sigma'}(t)$ and $m$ is maximal in $\sigma'.\ltmo_{|x}$ we have
  $\sigma'.\OW(t)_{|x} = \{m\} = \{\sigma'.\last(x)\}$ by definition of $\OW_{\sigma'}$. The fact that $\rdval(e) = \wrval(m)$ follows from the premises of rules {\sc Read} and {\sc RMW}. Finally, we have $(m, e) \in \ltsw \subseteq \sigma'.\lthb$
  thus $\sigma'.\last(x) \in \sigma.\hbc(t)$. 
\end{proof}

\begin{lemma}[Establish Location-Order]
For any reachable transition
  $(P,\sigma) \ltsArrow{m, e}_\Pad (P',\sigma')$, the rules {\sc WriteOrd} and {\sc NoModOrd} hold. 
\end{lemma}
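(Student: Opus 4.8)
The plan is to verify each rule by tracking how the step $(\_,\sigma) \ltsArrow{m,e}_\Pad (\_,\sigma')$ changes the $\ltmo$-maximal writes $\sigma.\last(x)$ and $\sigma.\last(y)$, and by exploiting that happens-before is monotone under event introduction, i.e.\ $\sigma.\lthb \subseteq \sigma'.\lthb$ (forming $(D,\ltsb)+e$ and updating $\ltrf,\ltmo$ only ever \emph{adds} $\ltsb$ and $\ltsw$ edges). Since the transition is reachable, \refthm{thm:completeness} makes $\sigma$ and $\sigma'$ valid, so both $\last$ functions are well defined. Rule {\sc NoModOrd} is then immediate: its premise $e \notin \W_{|\{x,y\}}$ says $e$ modifies neither $x$ nor $y$, so $\ltmo_{|x}$ and $\ltmo_{|y}$ are untouched and $\sigma'.\last(x)=\sigma.\last(x)$, $\sigma'.\last(y)=\sigma.\last(y)$; the hypothesis $(\sigma.\last(x),\sigma.\last(y))\in\sigma.\lthb$ survives into $\sigma'.\lthb$ by monotonicity, which is exactly $x \hbo{\sigma'} y$.

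The substantive case is {\sc WOrd} (the rule written {\sc WriteOrd} in the statement). Here $e \in \W_{|y}$ and $m=\sigma.\last(y)$, so $\ltmo[m,e]$ inserts $e$ immediately after the $\ltmo_{|y}$-maximal write, making $e$ the new last write: $\sigma'.\last(y)=e$; and since $x\neq y$, $\ltmo_{|x}$ is unchanged, so $\sigma'.\last(x)=\sigma.\last(x)$. It thus suffices to establish the \emph{new} edge $(\sigma.\last(x), e)\in\sigma'.\lthb$. This is precisely what the determinate-value premise $x \detval{\sigma}_{\tid(e)} v$ provides: by condition \refeq{dva:3} of \refdef{def:det-val}, $\sigma.\last(x) \in \sigma.\hbc(\tid(e))$, i.e.\ $\sigma.\last(x)$ is an initialising write, an event of thread $\tid(e)$, or $\lthb$-before some event $b$ with $\tid(b)=\tid(e)$. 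In each case the fresh program-order edges added by $+$ — which place every prior event of thread $\tid(e)$ and every initialising write $\ltsb$-before $e$ — close the gap: directly if $\sigma.\last(x)$ is an init write or lies on thread $\tid(e)$, and by composing $(\sigma.\last(x),b)\in\sigma.\lthb\subseteq\sigma'.\lthb$ with the new edge $(b,e)\in\sigma'.\ltsb$ otherwise. Either way $(\sigma.\last(x),e)\in\sigma'.\lthb$, giving $x \hbo{\sigma'} y$.

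The companion rule {\sc UOrd}, if it is to be discharged here too, follows the same template but routes the edge through synchronisation: $e\in\URA_{|y}$ reads from the releasing write $m\in\WR_{|y}$, so $(m,e)\in\ltsw\subseteq\sigma'.\lthb$; when $m=\sigma.\last(y)$ one has $\sigma'.\last(y)=e$ and composes $(\sigma.\last(x),m)\in\sigma.\lthb$ with $(m,e)\in\ltsw$, while if $m$ is not last then $\sigma'.\last(y)=\sigma.\last(y)$ and monotonicity alone suffices (note $x\neq y$ since $\lthb$ is irreflexive). I expect the only delicate point to be the {\sc WOrd} case, and precisely the appeal to condition \refeq{dva:3}: without it $\sigma.\last(x)$ need carry no causal connection to thread $\tid(e)$, the newly minted program-order edge into $e$ would not reach back to $\sigma.\last(x)$, and the rule would be unsound (as flagged after \reffig{fig:det-val-ind}). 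The remaining steps — that $\ltmo[m,e]$ leaves $\ltmo_{|x}$ alone for $x\neq\var(e)$ and makes $e$ maximal exactly when $m$ was — are routine from the definitions in \refsec{sec:observ-event-semant}.
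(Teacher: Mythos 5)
Your proof is correct and follows essentially the same route as the paper's: for {\sc WOrd} you case-split on condition \refeq{dva:3} (initialising write, same-thread event, or $\lthb$-before a same-thread event) and close each case via the $\ltsb$ edges added by $+$, while {\sc NoModOrd} follows from unchanged $\last$ functions plus monotonicity $\sigma.\lthb \subseteq \sigma'.\lthb$ --- exactly the paper's argument. Your treatment of {\sc UOrd} (the two cases $m = \sigma.\last(y)$ and $m \neq \sigma.\last(y)$, routing the new edge through $\ltsw$, and deriving $x \neq y$ from irreflexivity of $\lthb$) also matches the paper's proof of that rule verbatim in structure.
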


\begin{proof}\ 

\noindent  {\sc WriteOrd}. Note that $\sigma.\last(x) = \sigma'.\last(x)$ since $x \noteq y$ and $e \in \W_{|y}$. 
By $x \detval{\sigma}_{\tid(e)} v$, we have $\sigma.\last(x) \in \sigma.\hbc(\tid(e)) $. Expanding the definition of $\hbc$ and reformulating slightly, we see that
\[
\sigma.\hbc(\tid(e)) = 
\begin{array}[t]{@{}l@{}}
    I_\sigma \cup \{e' \mid \tid(e') = \tid(e)\} \cup {}\\
    \{e' \mid \exists e''.\  \tid(e'') = \tid(e) \wedge (e', e'') \in \sigma.\lthb\}
\end{array}
\]
Thus, there are three cases to consider. 
\noindent $1.$ 
$\sigma.\last(x) \in I_{\sigma}$. In this case, we get $x \hbo{\sigma'} y$ since we will have $$(\sigma'.\last(x), \sigma'.\last(y)) \in \sigma'.\ltsb \subseteq \sigma'.\lthb.$$ 

\smallskip\noindent $2.$ 
$\tid(\sigma.\last(x)) = \tid(e)$. By transition rules {\sc Write} and {\sc RMW} of our
operational semantics, $(\sigma.\last(x), e) \in \sigma'.\ltsb$ and hence
$(\sigma'.\last(x), e) \in \sigma'.\lthb$, or equivalently $x \hbo{\sigma'} y$.

\smallskip\noindent $3.$ There exists an $e'$ with $\tid(e') = \tid(e)$ and
$(\sigma.\last(x), e') \in \sigma.\lthb$. Because $\sigma'.\last(x) = \sigma.\last(x)$
and $\sigma.\lthb \subseteq \sigma'.\lthb$ we have $(\sigma'.\last(x), e') \in \sigma'.\lthb$.
By the modification transitions of our operational semantics we also have
$(e', e) \in \sigma'.\ltsb$, and thus (putting the two together) we have
$(\sigma'.\last(x), e) \in \sigma'.\lthb$ as required.

\medskip \noindent {\sc NoModOrd}. This is easy to check since because $e$ is not a modification
of $x$ or $y$ we have $\sigma'.\last(x) = \sigma.\last(x)$ and $\sigma'.\last(y) = \sigma.\last(y)$.
Now, because $(\sigma.\last(x), \sigma.\last(y)) \in \sigma.\lthb$
(the content of $x \hbo{\sigma} y$) and the fact that $\sigma.\lthb \subseteq \sigma'.\lthb$,
it follows that $(\sigma'.\last(x), \sigma'.\last(y)) \in \sigma'.\lthb$, or
equivalently $x \hbo{\sigma'} y$.

\medskip\noindent {\sc UpdOrd}. There are two cases to consider. First, assume
$m \neq \sigma.\last(y)$. In this case, $\sigma'.\last(y) = \sigma.\last(y)$.
Because $x \hbo{\sigma} y$, we have $(\sigma.\last(x), \sigma.\last(y)) \in \sigma.\lthb \subseteq \sigma'.\lthb$,
and thus $(\sigma.\last(x), \sigma'.\last(y)) \in \sigma'.\lthb$. Because $x \hbo{\sigma} y$ (by the irreflexivity
of $\lthb$), $x$ and $y$ are
distinct variables and thus $e \notin \W_{|x}$. Therefore, $\sigma'.\last(x) = \sigma.\last(x)$,
so $(\sigma.\last(x), \sigma'.\last(y)) \in \sigma'.\lthb$ as required.

Second, assume $m = \sigma.\last(y)$. In this case $\sigma'.\last(y) = e$. Furthermore, because
$m \in \W_{R|y}$ and $e$ is an update (which is acquiring) we have $(m, e) \in \ltsw$ and
therefore $(m, e) \in \sigma'.\lthb$. Because $x \hbo{\sigma} y$ we have
$(\sigma.\last(x), \sigma.\last(y)) \in \sigma.\lthb \subseteq \sigma'.\lthb$ and
thus, $(\sigma.\last(x), m) \in \sigma'.\lthb$ so $(\sigma.\last(x), e) \in \sigma'.\lthb$
by transitivity. Finally, because $\sigma'.\last(x) = \sigma.\last(x)$ and $\sigma'.\last(y) = e$
we have $(\sigma'.\last(x), \sigma'.\last(y)) \in \sigma'.\lthb$ as required.
\end{proof}

\newcommand{\irrefl}{{\it irrefl}}

\section{Relationship with Canonical C11}
\label{appendix-relationship}
In this section, we describe the relationship between
the version of the C11 semantics given in Section \ref{sec:valid-oper-semant}
and that of \cite{DBLP:conf/popl/BattyDW16}, on which it
is closely based. The semantics of \cite{DBLP:conf/popl/BattyDW16}
uses a notion of candidate execution as described below. We focus
on the relationship between our notion of {\em validity} (Definition \ref{def:legal-execution})
of this paper, and their notion of {\em consistency},
which we call {\em canonical consistency} in this appendix.
We prove that, for any candidate execution, validity
{\em without the {\sc NoThinAir} axiom} and a version of canonical
consistency (described below) are equivalent.

Batty et al \cite{DBLP:conf/popl/BattyDW16} use a notion of {\em candidate execution}
(Definition 7, \cite{DBLP:conf/popl/BattyDW16}),
which gives certain well-formedness conditions on executions.
For the purposes of this appendix, we define {\em candidate
executions} as follows:
\begin{definition}[Candidate Execution]
A tuple $((D, \ltsb), \ltrf, \ltmo)$ is a {\em candidate execution}
if it satisfies the conjunction of the conditions {\sc RF-Complete},
{\sc MO-Valid} and {\sc SB-Total} of Definition \ref{def:legal-execution}
in our current paper.
\end{definition}
Minor variations in presentation prevent us from claiming that
the definition just given is strictly equivalent to Definition 7 of
\cite{DBLP:conf/popl/BattyDW16}.
Principally, \cite{DBLP:conf/popl/BattyDW16} employs an equivalence
relation to determine when two operations are on the same thread,
whereas we index operations with a thread identifier.
Another difference is that Batty et al. \cite{DBLP:conf/popl/BattyDW16}
define the $\lthb$ relation such that initialising writes are
$\lthb$-prior to all other events, whereas we stipulate that
initialising writes are $\ltsb$-prior to all other events
(thus ensuring the $\lthb$-ordering indirectly).
With these caveats aside, the definition of candidate execution
given here is essentially the same as that of \cite{DBLP:conf/popl/BattyDW16}.


Let $(D, \ltsb, \ltrf, \ltmo)$ be a candidate execution.
As is true for all versions of the C11 memory model,
canonical consistency is defined in terms of the happens-before
relation, which in turn is defined in terms
of the synchronises-with relation. The synchronises-with
relation of \cite{DBLP:conf/popl/BattyDW16}, which we 
call {\em canonical synchronises-with} and denote
by $\ltsw_C$ is slightly larger
than our definition
\[
\ltsw \subseteq \ltsw_C
\]
The extra edges in $\ltsw_C$ relate to the
so-called {\em release sequences}, which we have ignored
in our presentation. The effect of this relaxation is
that our version defines a weaker semantics, with more valid executions.

The happens-before relation in \cite{DBLP:conf/popl/BattyDW16},
which we call {\em canonical happens-before} and denote $\lthb_C$,
is defined as follows
\[
\lthb_C = (\ltsb \cup (I \times \neg I) \cup \ltsw_C)^+
\]
where $\neg I$ is the complement of the set of initialising
writes. In our version of the semantics, $I \times \neg I \subseteq \ltsb$,
and thus $\ltsb \cup (I \times \neg I) = \ltsb$ so
\[
\lthb_C = (\ltsb \cup \ltsw_C)^+
\] 
similar to our definition. Thus, because $\ltsw \subseteq \ltsw_C$,
$\lthb \subseteq \lthb_C$.

We now present the definition of consistency given in
\cite{DBLP:conf/popl/BattyDW16} as it relates to 
the RAR fragment.
\begin{definition}[Canonical RAR Consistency]
A candidate execution $\bbD = (D, \ltsb, \ltrf, \ltmo)$ is 
canonically consistent if all the following conditions hold
\begin{align*}
    & \irrefl(\lthb_C) \tag{\sc HB-C} \\
    & \irrefl((\ltrf^{-1})^? ; \ltmo ; \ltrf^? ; \lthb_C) \tag{\sc COH-C} \\
    & \irrefl(\ltrf ; \lthb_C) \tag{\sc RF-C} \\
    & \irrefl(\ltrf \cup (\ltmo ;\ltmo ; \ltrf^{-1}) \cup (\ltmo ; \ltrf)) \tag{\sc UPD-C} 
\end{align*}
where $\lthb_C$ is defined from $\bbD$ as above.
\end{definition}

To account for the fact that $\ltsw \subseteq \ltsw_C$, and thus
$\lthb \subseteq \lthb_C$, we give a slightly weaker notion of 
canonical consistency, called {\em weak canonical RAR consistency},
which we prove equivalent to our notion
of validity. This
weaker condition is obtained from the stronger by replacing
$\lthb_C$ by $\lthb$. Also, to simplify presentation of
the proof, we split the condition \mbox{{\sc RF-C}} into two
conditions: one called {\sc RF}, and one called {\sc RFI}
that explicitly requires the irreflexivity of the
$\ltrf$ relation. This second change is purely presentational,
and does not change the strength of the semantics.
\begin{definition}[Weak Canonical RAR Consistency]
\label{def:weak-canon}
A candidate execution $\bbD = (D, \ltsb, \ltrf, \ltmo)$ is 
canonically consistent if all the following conditions hold
\begin{align*}
    & \irrefl(\lthb) \tag{\sc HB} \\
    & \irrefl((\ltrf^{-1})^? ; \ltmo ; \ltrf^? ; \lthb) \tag{\sc COH} \\
    & \irrefl(\ltrf ; \lthb) \tag{\sc RF} \\
    & \irrefl(\ltrf) \tag{\sc RFI} \\
    & \irrefl((\ltmo ;\ltmo ; \ltrf^{-1}) \cup (\ltmo ; \ltrf)) \tag{\sc UPD} 
\end{align*}
where $\lthb$ is defined from $\bbD$ as usual.
\end{definition}
As we shall see, the validity condition $\irrefl(\lteco^? ; \lthb)$
(as used in the coherence condition of Definition \ref{def:legal-execution}
in our paper)
captures the collective effect
of conditions HB, COH and RF. The condition UPD, which we sometimes
call {\em update atomicity}, requires that each update
appears in $\ltmo$-order immediately after the
write that the update reads from. As we shall see, the validity
condition $\irrefl(\lteco)$ implies update atomicity, and for any
candidate execution, the update atomicity property implies
$\irrefl(\lteco)$.

The following lemma follows easily from the fact that $\lthb \subseteq \lthb_C$.
\begin{lemma}
For any candidate execution $$\bbD = ((D, \ltsb), \ltrf, \ltmo),$$ if $\bbD$
is canonical consistent, then it is weakly canonical consistent.
\end{lemma}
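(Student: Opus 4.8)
The plan is to exploit the containment $\lthb \subseteq \lthb_C$, established immediately above from $\ltsw \subseteq \ltsw_C$, together with two elementary facts about irreflexivity. First, irreflexivity is \emph{monotone under inclusion}: if $R \subseteq S$ and $\irrefl(S)$, then $\irrefl(R)$, since any reflexive pair of $R$ is also one of $S$. Second, irreflexivity \emph{distributes over unions}: $\irrefl(A \cup B)$ holds iff both $\irrefl(A)$ and $\irrefl(B)$. Given these, each of the five conditions of Definition~\ref{def:weak-canon} falls out directly from a canonical condition, so I would simply discharge them one at a time.

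First I would handle the three conditions that mention $\lthb$. Since $\lthb \subseteq \lthb_C$, monotonicity gives {\sc HB}, i.e. $\irrefl(\lthb)$, from {\sc HB-C}. For {\sc COH} and {\sc RF} I would additionally use monotonicity of relational composition: from $\lthb \subseteq \lthb_C$ we obtain $(\ltrf^{-1})^? ; \ltmo ; \ltrf^? ; \lthb \subseteq (\ltrf^{-1})^? ; \ltmo ; \ltrf^? ; \lthb_C$ and $\ltrf ; \lthb \subseteq \ltrf ; \lthb_C$, so irreflexivity of each larger relation ({\sc COH-C}, {\sc RF-C}) transfers to the corresponding smaller one.

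The remaining weak conditions {\sc RFI} and {\sc UPD} are both extracted from the single canonical condition {\sc UPD-C}. Here I would invoke distributivity: the condition $\irrefl(\ltrf \cup (\ltmo ; \ltmo ; \ltrf^{-1}) \cup (\ltmo ; \ltrf))$ is equivalent to the conjunction of $\irrefl(\ltrf)$, which is exactly {\sc RFI}, and $\irrefl((\ltmo ; \ltmo ; \ltrf^{-1}) \cup (\ltmo ; \ltrf))$, which is exactly {\sc UPD}. This discharges all five obligations.

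There is no real obstacle in this direction: weakening $\lthb_C$ to $\lthb$ can only shrink each of the relations whose irreflexivity is required, and shrinking preserves irreflexivity. The proof is therefore essentially bookkeeping, the only point deserving attention being that {\sc UPD-C} packages together what the weak presentation separates into {\sc RFI} and {\sc UPD}. The reverse implication does not hold in general---the extra $\ltsw_C$ edges (from release sequences) make canonical consistency strictly stronger, admitting fewer executions---which is precisely why this lemma is coupled with the separate equivalence between \emph{weak} canonical consistency and the paper's own notion of validity.
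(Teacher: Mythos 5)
Your proposal is correct and takes essentially the same route as the paper, which dispatches this lemma in a single line by observing that $\lthb \subseteq \lthb_C$; your monotonicity-of-irreflexivity and union-distributivity bookkeeping simply makes that observation explicit, condition by condition. One small point in your favour: you correctly locate the origin of {\sc RFI} in the $\ltrf$ disjunct of {\sc UPD-C} (via distributivity of irreflexivity over union), whereas the paper's prose somewhat misleadingly describes the split as applying to {\sc RF-C}.
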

From now on, we consider only weak canonical consistency. Thus, when we refer
to properties {\sc HB}, {\sc COH}, {\sc RF}, and {\sc UPD} we mean those of {\em weak} canonical
consistency.

For the remainder of the section, we work towards proving the
following theorem
\begin{theorem}
For any candidate execution $$\bbD = ((D, \ltsb), \ltrf, \ltmo),$$ 
$\bbD$ is
weakly canonical consistent iff $\bbD$ satisfies
{\sc Coherence} of Definition \ref{def:legal-execution} on page \pageref{def:legal-execution}.
\end{theorem}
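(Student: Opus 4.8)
The plan is to split the two irreflexivity conditions defining {\sc Coherence}, namely $\irrefl(\lteco)$ and $\irrefl(\lteco^?;\lthb)$, and match them respectively against {\sc UPD} and against the conjunction {\sc HB} $\wedge$ {\sc COH} $\wedge$ {\sc RF}; the axiom {\sc RFI} is then immediate since $\ltrf \subseteq \lteco$. Throughout I would use the elementary rotation fact that $\irrefl(R;S)$ iff $\irrefl(S;R)$, together with {\sc RF-Complete} (each read reads from a unique write) and {\sc MO-Valid} ($\ltmo$ totally orders the writes at each location), both of which hold because $\bbD$ is a candidate execution.

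First I would prove $\irrefl(\lteco) \Leftrightarrow {\sc UPD}$. For ($\Leftarrow$) I would turn a violation of {\sc UPD} directly into a $\lteco$-cycle: a chain $w \xrightarrow{\ltmo} w' \xrightarrow{\ltmo} u$ with $(w,u)\in\ltrf$ yields $(u,w')\in\ltfr$ and $(w',u)\in\ltmo$, hence a cycle; and a pair $(w,u)\in\ltmo$ with $(u,w)\in\ltrf$ gives the cycle $w \xrightarrow{\ltmo} u \xrightarrow{\ltrf} w$ at once. For the harder ($\Rightarrow$) direction I would use a rank function: under {\sc UPD}, assign each write its position in $\ltmo_{|x}$ and each pure read the rank of the write it reads from, then check every generator of $\lteco$ is rank non-decreasing and strictly increasing except on $\ltrf$-edges into pure reads. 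Since a rank-preserving edge must end at a pure read, and a pure read emits only the strictly increasing $\ltfr$-edges, no cycle can be rank-preserving, so $\lteco$ is irreflexive. Update atomicity is exactly what forces an update $u$ reading from $w$ to have rank one greater than $w$, which makes the $\ltrf$- and $\ltfr$-edges touching updates strict.

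Next I would establish a normal-form lemma: under {\sc UPD}, $\lteco \subseteq \ltmo \cup \ltrf \cup \ltfr \cup (\ltmo;\ltrf) \cup (\ltfr;\ltrf)$. I would prove this by showing the right-hand set $\calR$ is closed under appending a single generator, using the reductions $\ltrf;\ltfr \subseteq \ltmo$ (from {\sc RF-Complete}), $\ltmo;\ltmo = \ltmo$ (transitivity of $\ltmo$), and $\ltfr;\ltmo \subseteq \ltfr$. The last inclusion is where {\sc UPD} is genuinely needed: the only way it could fail is for the composite to collapse to an identity pair, which would place a write strictly $\ltmo$-between an update and the write it reads from, i.e.\ a violation of update atomicity. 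I expect this normal form to be the main obstacle, precisely because of the bookkeeping around update events (which are simultaneously reads and writes) and the $\setminus {\it Id}$ in the definition of $\ltfr$.

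Finally I would prove $\irrefl(\lteco^?;\lthb) \Leftrightarrow {\sc HB} \wedge {\sc COH} \wedge {\sc RF}$. For ($\Rightarrow$): {\sc HB} is the $\lteco^?={\it Id}$ instance; {\sc RF} follows from $\ltrf \subseteq \lteco$ and rotation; and {\sc COH} follows by checking $(\ltrf^{-1})^?;\ltmo;\ltrf^? \subseteq \lteco$ by cases on the two optional factors, the degenerate (identity) cases reducing to a $\lthb$ self-loop and so handled by {\sc HB}. For ($\Leftarrow$): a reflexive $\lteco^?;\lthb$ either makes $\lthb$ reflexive (contradicting {\sc HB}) or, by rotation, gives a cycle $x \xrightarrow{\lteco} y \xrightarrow{\lthb} x$; substituting the normal form for the $\lteco$ edge lands its $\lteco$-part in one of $\ltmo$, $\ltrf$, $\ltfr \subseteq \ltrf^{-1};\ltmo$, $\ltmo;\ltrf$, or $\ltfr;\ltrf \subseteq \ltrf^{-1};\ltmo;\ltrf$, each of which, composed with the returning $\lthb$ edge, is an instance of {\sc COH} or {\sc RF}, a contradiction. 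Combining these two equivalences with the observation about {\sc RFI} yields both directions of the theorem.
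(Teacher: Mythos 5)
Your overall architecture tracks the paper's appendix proof closely: both routes establish a closed form for $\lteco$ under update atomicity (the paper's ``$\lteco$ cases'' lemma, proved by the same generator-appending closure argument over essentially the same family of inclusions), and both rewrite the conjunction of {\sc COH} and {\sc RF} as irreflexivity of $\lteco ; \lthb$ via the identity $\lteco = ((\ltrf^{-1})^? ; \ltmo ; \ltrf^?) \cup \ltrf$ (modulo identity-pair bookkeeping, where your remark that the degenerate cases are ``handled by {\sc HB}'' should read ``by {\sc HB} or {\sc RF}''). Your rank-function argument for irreflexivity of $\lteco$ is a genuinely different device from the paper's case analysis over the five summands of the closed form, but it is also exactly where your proposal has a real gap.

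The gap is that your first equivalence --- $\lteco$ irreflexive iff {\sc UPD} --- is false from right to left. A candidate execution may contain an update $u = upd^{\sf RA}(x,1,1)$ with $(u,u) \in \ltrf$: {\sc RF-Complete} only demands matching location and value ($\wrval(u) = \rdval(u)$), and neither disjunct of {\sc UPD} is violated, since any composite returning to $u$ would force a reflexive or cyclic $\ltmo$. Yet $(u,u) \in \ltrf \subseteq \lteco$, so $\lteco$ is reflexive. In your rank check, this is the one $\ltrf$-edge into a non-pure read that is rank-\emph{preserving} rather than strict: update atomicity forces $u$ to sit immediately $\ltmo$-after the write it reads from only when that write is \emph{distinct} from $u$, so your claim that all $\ltrf$-edges touching updates are strict fails. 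This is precisely why the paper splits the canonical axiom {\sc RF-C} into {\sc RF} and the separate axiom {\sc RFI} (irreflexivity of $\ltrf$), and why its proof that weak canonical consistency implies irreflexivity of $\lteco$ explicitly invokes irreflexivity of $\ltrf$ alongside {\sc UPD}; your proposal instead treats {\sc RFI} as a one-way freebie via $\ltrf \subseteq \lteco$, which covers only the forward direction of the theorem. The repair is cheap --- prove $\lteco$ irreflexive iff {\sc UPD} $\wedge$ {\sc RFI}, using {\sc RFI} to exclude the self-reading update --- since {\sc RFI} is available in the backward direction anyway. A smaller understatement: in your normal-form lemma, {\sc UPD} is needed for more than $\ltfr ; \ltmo \subseteq \ltfr$; the inclusions $\ltrf ; \ltmo \subseteq \ltmo$, $\ltrf ; \ltrf \subseteq \ltmo ; \ltrf$ and $\ltmo ; \ltfr \subseteq \ltmo$ all rely on the {\sc UPD}-derived fact that the middle event (necessarily an update) is $\ltmo$-adjacent to the write it reads from.
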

As we shall see, much of our proof is about reformulating the various relations and axioms
that make-up the canonical memory model.




The following lemma provides a more convenient form for the
{\sc UPD} property.
\begin{lemma}
\label{lem:upd-rewrite}
For any candidate execution $$\bbD = ((D, \ltsb), \ltrf, \ltmo),$$
the {\sc UPD} condition (that is,
$\irrefl((\ltmo ;\ltmo ; \ltrf^{-1}) \cup (\ltmo ; \ltrf))$) is equivalent to
$\irrefl(\ltfr ;\ltmo) \wedge \irrefl(\ltrf ; \ltmo)$.
\end{lemma}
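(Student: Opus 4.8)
The plan is to decompose the {\sc UPD} condition into independent irreflexivity requirements and then match each against the target relation. The first step uses the elementary fact that irreflexivity distributes over union, $\irrefl(R \cup S) \iff \irrefl(R) \wedge \irrefl(S)$. Applying this to {\sc UPD} reduces the goal to showing that $\irrefl(\ltmo ; \ltmo ; \ltrf^{-1}) \wedge \irrefl(\ltmo ; \ltrf)$ is equivalent to $\irrefl(\ltfr ; \ltmo) \wedge \irrefl(\ltrf ; \ltmo)$, which I would then establish conjunct by conjunct.

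The easy conjunct is $\irrefl(\ltmo;\ltrf) \iff \irrefl(\ltrf;\ltmo)$. This follows from the general principle that $\irrefl(R;S) \iff \irrefl(S;R)$: a self-loop $(a,a) \in R;S$ witnessed by an intermediate element $b$ yields a self-loop $(b,b) \in S;R$, and symmetrically. I would state this once and instantiate it with $R = \ltmo$ and $S = \ltrf$.

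The substantive conjunct is $\irrefl(\ltmo;\ltmo;\ltrf^{-1}) \iff \irrefl(\ltfr;\ltmo)$, which I would prove by unfolding both relations at a putative self-loop and checking that they are reflexive on exactly the same configuration. Unfolding, $(v,v) \in \ltmo;\ltmo;\ltrf^{-1}$ holds iff there are events $w,x$ with $v \xrightarrow{\ltmo} w \xrightarrow{\ltmo} x$ and $(v,x)\in\ltrf$; whereas, recalling $\ltfr = (\ltrf^{-1};\ltmo)\setminus {\it Id}$, we get $(x,x)\in\ltfr;\ltmo$ iff there are events $v,w$ with $(v,x)\in\ltrf$, $v\xrightarrow{\ltmo} w \xrightarrow{\ltmo} x$, and $x \neq w$. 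Both unfoldings describe the identical ``triangle'' $v \xrightarrow{\ltrf} x$ carrying an $\ltmo$-detour $v \xrightarrow{\ltmo} w \xrightarrow{\ltmo} x$ (which forces $x$ to be an update, since it is both read-from and $\ltmo$-related); the two relations merely name a different vertex of the triangle as the reflexive point. Hence such a triangle exists iff the first relation is reflexive, iff the second is, giving the equivalence.

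The one genuine subtlety — the main obstacle — is the side-condition $x \neq w$ contributed by the $\setminus {\it Id}$ in $\ltfr$, which has no overt counterpart in $\ltmo;\ltmo;\ltrf^{-1}$. I would discharge it using that $\bbD$ is a candidate execution: by {\sc MO-Valid}, $\ltmo$ is a strict, hence irreflexive, order, so $w \xrightarrow{\ltmo} x$ already forces $w \neq x$, making the inequality automatic so that the two configurations coincide exactly. (Equivalently, any offending instance with $w = x$ would produce a self-loop $(x,x)\in\ltmo$, contradicting irreflexivity of $\ltmo$.) Combining this with the easy conjunct yields $\mathrm{UPD} \iff \irrefl(\ltfr;\ltmo) \wedge \irrefl(\ltrf;\ltmo)$, as required.
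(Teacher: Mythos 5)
Your proof is correct and takes essentially the same approach as the paper: both decompose {\sc UPD} via distributivity of irreflexivity over union, dispatch $\irrefl(\ltmo;\ltrf) \iff \irrefl(\ltrf;\ltmo)$ by cyclic rotation, and discharge the $\setminus \mathit{Id}$ side condition in $\ltfr$ using irreflexivity of $\ltmo$ (the paper's case $y = x$ is exactly your $w = x$ contradiction). The only cosmetic difference is packaging: the paper first rotates $\irrefl(\ltmo;\ltmo;\ltrf^{-1})$ to $\irrefl(\ltrf^{-1};\ltmo;\ltmo)$ and then argues one direction via the inclusion $\ltfr;\ltmo \subseteq \ltrf^{-1};\ltmo;\ltmo$ and the other via a case split, whereas you unfold both relations at a putative self-loop and match the witnessing triangles directly.
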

\begin{proof}
First note that for any relations $r, s$ we have both
\begin{align*}
\irrefl(r ; s) & \iff \irrefl(s ; r) \\
\irrefl(r \cup s) & \iff \irrefl(r) \wedge \irrefl(s). 
\end{align*}
Using these equivalences,
{\sc UPD} is equivalent to $$\irrefl(\ltrf^{-1}; \ltmo ;\ltmo) \wedge \irrefl(\ltrf ; \ltmo).$$
It remains to show that $\irrefl(\ltrf^{-1}; \ltmo ;\ltmo)$
is equivalent to $\irrefl(\ltfr ;\ltmo)$. Because $\ltfr \subseteq \ltrf^{-1}; \ltmo$,
we have $$\ltfr ;\ltmo \subseteq \ltrf^{-1}; \ltmo ;\ltmo$$ and thus if 
$\irrefl(\ltrf^{-1}; \ltmo ;\ltmo)$ then $\irrefl(\ltfr ;\ltmo)$.

Finally, we show that if there is a cycle in $\ltrf^{-1}; \ltmo ;\ltmo$ then there
is also one in $\ltfr ;\ltmo$. Assume that $(x, x) \in \ltrf^{-1}; \ltmo ;\ltmo$.
Then there is some $y$ such that $(x, y) \in \ltrf^{-1}; \ltmo$
and $(y, x) \in \ltmo$. There are two cases to consider. In the first case,
$y = x$. But this is impossible because then we would have $(x, x) \in \ltmo$,
contrary to the irreflexivity of $\ltmo$. In the second case, $x \neq y$,
but then $(x, y) \in (\ltrf^{-1}; \ltmo) - \mathit{Id} = \ltfr$ so
$(x, x) \in \ltfr ;\ltmo$ and we are done.
\end{proof}

The first lemma says that each update operation can only read from its
immediate $\ltmo$ predecessor.
\begin{lemma}[Update orderings]
\label{lem:supporting-rel-props}
For any candidate execution $(D, \ltsb, \ltrf, \ltmo)$, satisfying {\sc UPD}
the following properties hold for any update $u \in D$ and event $x \in D$:
\begin{enumerate}[label=\roman*]
\item 
$$\label{lem:supporting-rel-props:1} (u, x) \in \ltfr \implies (u, x) \in \ltmo$$ 
\item 
$$\label{lem:supporting-rel-props:2} (x, u) \in \ltrf \implies (x, u) \in \ltmo$$ 
\end{enumerate}
\end{lemma}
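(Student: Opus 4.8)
The plan is to derive both orderings from the totality of $\ltmo$ on same-location writes (axiom {\sc MO-Valid}) together with the reformulation of {\sc UPD} supplied by \reflem{lem:upd-rewrite}, namely $\irrefl(\ltfr ; \ltmo) \wedge \irrefl(\ltrf ; \ltmo)$. Both parts follow the same template: identify the two events as writes to a common variable, invoke totality to obtain an $\ltmo$-comparison, and then rule out the unwanted direction on the grounds that it would close an $\ltfr ; \ltmo$- or $\ltrf ; \ltmo$-cycle.

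For part (i), suppose $(u, x) \in \ltfr$. Since $\ltfr = (\ltrf^{-1} ; \ltmo) \backslash \mathit{Id}$, there is a write $w$ with $(w, u) \in \ltrf$ and $(w, x) \in \ltmo$, and moreover $u \neq x$. By {\sc RF-Complete} and {\sc MO-Valid} we get $\loc(u) = \loc(w) = \loc(x)$ and $x \in \W$; as $u$ is an update it is itself a write. Thus $u$ and $x$ are distinct writes to the same variable, so {\sc MO-Valid} forces $(u,x) \in \ltmo$ or $(x,u) \in \ltmo$. The latter is impossible, since combined with $(u,x) \in \ltfr$ it yields $(u,u) \in \ltfr ; \ltmo$, contradicting $\irrefl(\ltfr ; \ltmo)$. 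Hence $(u,x) \in \ltmo$.

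Part (ii) runs symmetrically. Assuming $(x, u) \in \ltrf$ with $u$ an update, {\sc RF-Complete} gives $x \in \W$ and $\loc(x) = \loc(u)$, and $u$ is again a write, so $x$ and $u$ are same-location writes. Totality then leaves three possibilities: $x = u$, $(x,u) \in \ltmo$, or $(u,x) \in \ltmo$. In the last case $(x,u) \in \ltrf$ and $(u,x) \in \ltmo$ compose to give $(x,x) \in \ltrf ; \ltmo$, contradicting $\irrefl(\ltrf ; \ltmo)$, so that case is excluded and only $(x,u) \in \ltmo$ survives among the comparable cases. (Note this subsumes the situation where $x$ is an initialising write, since then {\sc MO-Valid} already orders $x$ before the non-initialising $u$.)

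The one place requiring care — and the main obstacle — is excluding the degenerate case $x = u$, i.e. an update reading from itself: because $\ltmo$ is a strict (hence irreflexive) order by {\sc MO-Valid}, $x = u$ would make the target conclusion $(x,u) \in \ltmo$ unattainable. I would discharge this by appealing to the irreflexivity of $\ltrf$ (property {\sc RFI}), which holds throughout the weak canonical consistency setting in which this lemma is applied and which forbids exactly the self-loop $(u,u) \in \ltrf$. Alternatively, it follows from the well-formedness of candidate executions, since the read component of an update is justified by a strictly $\ltmo$-earlier write rather than by $u$ itself. With $x = u$ ruled out, totality leaves only $(x,u) \in \ltmo$, completing part (ii).
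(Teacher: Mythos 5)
Your proof is correct and follows essentially the same route as the paper's: establish that $u$ and $x$ are same-location modifications so that {\sc MO-Valid} totality orders them, then refute the wrong direction because it would close a cycle in $\ltfr ; \ltmo$ (part i) or $\ltrf ; \ltmo$ (part ii), contradicting {\sc UPD} as reformulated in \reflem{lem:upd-rewrite}. Your extra care over the degenerate case $x = u$ in part (ii) addresses a point the paper's proof silently elides (its opening claim that $\ltmo$ ``must order $u$ and $x$'' presupposes $u \neq x$), and discharging it via {\sc RFI} is the right patch, though note that this strictly imports a hypothesis beyond the lemma's stated assumptions of candidate execution plus {\sc UPD}, since neither {\sc RF-Complete} nor {\sc UPD} alone forbids an update reading from itself.
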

\begin{proof}
Note first that $\ltmo$ must order $u$ and $x$
(in some direction). This is because $\var(u) = \var(x)$, $u$ is a modification,
$x$ is a modification (because it either has an incoming $\ltfr$ edge or an
outgoing $\ltrf$ edge) and $\ltmo$ is total
over modifications to the same location. Therefore, it is sufficient to derive a contradiction
from the assumption that $\ltmo$ orders the two operations the ``wrong'' way.

Assume for Property \ref{lem:supporting-rel-props:1} that
$(u, x) \in \ltfr$ and $(x, u) \in \ltmo$. But then $(u, u) \in \ltfr ; \ltmo$ contrary
to the {\sc UPD} property, as formulated in Lemma \ref{lem:upd-rewrite}.

Assume for Property \ref{lem:supporting-rel-props:2} that
$(x, u) \in \ltrf$ and $(u, x) \in \ltmo$. But then $(x, x) \in \ltrf ; \ltmo$ contrary
to the {\sc UPD} property, as formulated in Lemma \ref{lem:upd-rewrite}.
\end{proof}

We next need some properties about the structure of $\lteco$.
\begin{lemma}[Coherence inclusions]
\label{lem:rel-inclusions}
For any candidate execution $(D, \ltsb, \ltrf, \ltmo)$, that satisfies the {\sc UPD} property
the following inclusions hold:
\begin{enumerate}[label=\roman*]
\item \label{lem:rel-inclusions1} $\ltrf ; \ltfr \subseteq \ltmo$
\item \label{lem:rel-inclusions2} $\ltrf ; \ltmo \subseteq \ltmo$
\item \label{lem:rel-inclusions3} $\ltrf ; \ltrf \subseteq \ltmo ; \ltrf$
\item \label{lem:rel-inclusions4} $\ltmo ; \ltfr \subseteq \ltmo$
\item \label{lem:rel-inclusions6} $\ltfr ; \ltmo \subseteq \ltfr$
\item \label{lem:rel-inclusions7} $\ltfr ; \ltfr \subseteq \ltfr$
\end{enumerate}
\end{lemma}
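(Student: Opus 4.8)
The plan is to prove all six inclusions by a uniform \emph{type‑discipline} argument, exploiting that $\ltrf \subseteq \W \times \R$, $\ltmo \subseteq \W \times \W$, and $\ltfr \subseteq \R \times \W$. In any composition of two of these relations the shared intermediate event is at once a target of one relation and a source of the other; whenever this forces it to be simultaneously a read and a write it must be an update ($u \in \URA$), and then the two properties of \reflem{lem:supporting-rel-props} become available to rewrite the update's incident $\ltrf$ or $\ltfr$ edge as an $\ltmo$ edge. The only further ingredients are functionality of $\ltrf$ (from {\sc RF-Complete}), and the transitivity and irreflexivity of $\ltmo$.

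First I would dispatch the four inclusions whose right-hand side is (essentially) $\ltmo$. For (i), given $w \xrightarrow{\ltrf} r \xrightarrow{\ltfr} w'$, unfolding $\ltfr$ produces a write $v$ with $(v,r) \in \ltrf$ and $(v,w') \in \ltmo$; functionality of $\ltrf$ forces $v = w$, so $(w,w') \in \ltmo$ (this is the one case where the midpoint need only be a read). For (ii)--(iv) the shared midpoint is an update: a read-then-write point in $\ltrf ; \ltmo$ and $\ltrf ; \ltrf$, and a write-then-read point in $\ltmo ; \ltfr$. Applying the appropriate property of \reflem{lem:supporting-rel-props} converts the update's incident $\ltrf$ edge (in (ii) and (iii)) or its $\ltfr$ edge (in (iv)) into an $\ltmo$ edge; transitivity of $\ltmo$ then closes (ii) and (iv), while in (iii) the trailing $\ltrf$ edge is left intact, yielding the target $\ltmo ; \ltrf$.

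The genuinely delicate cases are (v) and (vi), where the conclusion must land in $\ltfr$ rather than $\ltmo$, and I expect the handling of the $\mathit{Id}$ subtraction in the definition of $\ltfr$ to be the main obstacle. For (v), given $r \xrightarrow{\ltfr} w \xrightarrow{\ltmo} w'$, unfolding $\ltfr$ gives a write $v$ with $(v,r) \in \ltrf$ and $(v,w) \in \ltmo$, hence $(v,w') \in \ltmo$ by transitivity; thus $(r,w') \in \ltrf^{-1} ; \ltmo$ is immediate, but to conclude $(r,w') \in \ltfr$ I must \emph{separately} verify $r \neq w'$. This is precisely where irreflexivity is needed: if $r = w'$ then $r$ is an update with $(w,r) \in \ltmo$, while $(r,w) \in \ltfr$ together with the first property of \reflem{lem:supporting-rel-props} gives $(r,w) \in \ltmo$, contradicting irreflexivity of $\ltmo$. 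Finally (vi) reduces to (v): the midpoint $u$ of $r \xrightarrow{\ltfr} u \xrightarrow{\ltfr} w'$ is an update, so its outgoing $\ltfr$ edge becomes an $\ltmo$ edge by \reflem{lem:supporting-rel-props}, giving $(r,w') \in \ltfr ; \ltmo \subseteq \ltfr$ by the inclusion just established. Beyond the $\mathit{Id}$-subtraction check in (v), I expect only the routine bookkeeping of which property of \reflem{lem:supporting-rel-props} applies in each case.
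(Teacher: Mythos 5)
Your proposal is correct and follows essentially the same route as the paper's proof: case (i) via functionality of $\ltrf$, cases (ii)--(iv) by recognising the midpoint as an update and applying \reflem{lem:supporting-rel-props}, case (v) by unfolding $\ltfr$, using transitivity of $\ltmo$, and separately ruling out the $\mathit{Id}$-degenerate case through the update-ordering lemma and irreflexivity of $\ltmo$, and case (vi) by reduction to (v). The only cosmetic difference is in the degenerate case of (v), where you compose the two $\ltmo$ edges in the opposite order before invoking irreflexivity; the argument is otherwise identical.
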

\begin{proof}
\begin{itemize}
\item \ref{lem:rel-inclusions1}) Consider $(x, y) \in \ltrf$ and $(y, z) \in \ltfr$.
Because $\ltrf$ is one-to-many, $\ltrf^{-1}(y) = x$. Because $(y, z) \in \ltfr$,
$(\ltrf^{-1}(y), z) \in \ltmo$. Therefore, $(x, z) \in \ltmo$ as required.

\item \ref{lem:rel-inclusions2}) Consider $(x, y) \in \ltrf$ and $(y, z) \in \ltmo$.
Because $y$ has an incoming $\ltrf$ edge it is a read, because it
has an outgoing $\ltmo$ edge, it is a modification, and so $y$ is an update.
Thus, by Lemma \ref{lem:supporting-rel-props}\ref{lem:supporting-rel-props:2},
$(x, y) \in \ltmo$ and then the result follows by transitivity.

\item \ref{lem:rel-inclusions3}) Consider $(x, y) \in \ltrf$ and $(y, z) \in \ltrf$.
Because $y$ has both incoming and outgoing $\ltrf$ edges, it is an update.
Thus, by Lemma \ref{lem:supporting-rel-props}\ref{lem:supporting-rel-props:2},
$(x, y) \in \ltmo$ and then the result is immediate.

\item \ref{lem:rel-inclusions4}) Consider $(x, y) \in \ltmo$ and $(y, z) \in \ltfr$.
Because $y$ has an incoming $\ltmo$ edge it is a modification, because it
has an outgoing $\ltfr$ edge, it is a read, and thus $y$ is an update.
Thus, by Lemma \ref{lem:supporting-rel-props}\ref{lem:supporting-rel-props:1},
$(y, z) \in \ltmo$ and now $(x, z) \in \ltmo$ by transitivity.

\item \ref{lem:rel-inclusions6}) Let $(x, z) \in \ltfr ; \ltmo$. Thus, there
is some $y \neq x$ such that $(x, y) \in \ltrf^{-1} ; \ltmo$ and $(y, z) \in \ltmo$.
Let $w$ be unique such that $(w, x) \in \ltrf$ and so $(w, y) \in \ltmo$.
By transitivity of $\ltmo$ we have $(w, z) \in \ltmo$ and thus $(x, z) \in \ltrf^{-1} ; \ltmo$.
It remains to show that $x \neq z$. Assume otherwise. Then $x$ is an update
(because it has both an incoming $\ltrf$ edge and an incoming $\ltmo$ edge).
Now, by Lemma \ref{lem:supporting-rel-props}\ref{lem:supporting-rel-props:1},
because $(x, y) \in \ltfr$, $(x, y) \in \ltmo$ and
thus $(x, z) \in \ltmo$. But now $x \neq y$ by the irreflexivity of $\ltmo$.

\item \ref{lem:rel-inclusions7}) Consider $(x, y) \in \ltfr$ and $(y, z) \in \ltfr$.
Because $y$ has an incoming $\ltfr$ edge it is a modification, and because
$y$ has an outgoing $\ltfr$ edge it is a read. Thus, $y$ is an update
and by Lemma \ref{lem:supporting-rel-props}\ref{lem:supporting-rel-props:1},
$(y, z) \in \ltmo$. So now $(x, z) \in \ltfr ; \ltmo$ so by Property
\ref{lem:rel-inclusions6} of this Lemma, $(x, z) \in \ltfr$.
\end{itemize}
\end{proof}

The next lemma presents a ``closed-form'' for the $\lteco$ relation,
in which $\lteco$ is defined without use of a transitive closure, providing
a simple set of cases that must be considered when analysing
the relation. This is inspired by a similar expression in
\cite{DBLP:conf/pldi/LahavVKHD17}.
\begin{lemma}[$\lteco$ cases]
\label{lem:eco-cases}
For any semi-consistent execution $(D, \ltsb, \ltrf, \ltmo)$, with update atomicity
\[
\lteco = \ltrf \cup \ltmo \cup \ltfr \cup  (\ltmo;\ltrf) \cup (\ltfr;\ltrf)
\]
\end{lemma}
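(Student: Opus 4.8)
The plan is to prove the two inclusions separately. Write $R = \ltfr \cup \ltmo \cup \ltrf$ so that $\lteco = R^+$ by definition, and abbreviate the right-hand side as $S = \ltrf \cup \ltmo \cup \ltfr \cup (\ltmo ; \ltrf) \cup (\ltfr ; \ltrf)$. The inclusion $S \subseteq \lteco$ is immediate: each of the five terms of $S$ is a composition of at most two of the generators $\ltrf, \ltmo, \ltfr$, and since $\lteco$ contains $R$ and is transitively closed, it absorbs any such composition. The real content is the reverse inclusion $R^+ \subseteq S$.

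For $R^+ \subseteq S$, the approach is to show that $S$ is closed under right-composition by each generator, i.e.\ $S ; \ltrf \subseteq S$, $S ; \ltmo \subseteq S$ and $S ; \ltfr \subseteq S$. Since $R \subseteq S$ trivially, a routine induction on the length of a composition witnessing membership in $R^+$ then yields $R^+ \subseteq S$: any element of $R^+$ is a product $r_1 ; \cdots ; r_n$ of generators, the prefix $r_1 ; \cdots ; r_{n-1}$ lies in $S$ by the inductive hypothesis, and appending the final generator keeps us in $S$ by the closure property. The fifteen products $T ; g$ (five terms $T$ of $S$ against three generators $g$) are each discharged using Lemma~\ref{lem:rel-inclusions} together with the transitivity facts $\ltmo ; \ltmo \subseteq \ltmo$ and $\ltfr ; \ltfr \subseteq \ltfr$. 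The ``absorbing'' inclusions $\ltrf ; \ltmo \subseteq \ltmo$, $\ltrf ; \ltfr \subseteq \ltmo$, $\ltmo ; \ltfr \subseteq \ltmo$ and $\ltfr ; \ltmo \subseteq \ltfr$ collapse most products back into a single base relation, while $\ltrf ; \ltrf \subseteq \ltmo ; \ltrf$ is precisely what forces the term $\ltmo ; \ltrf$ to be present, and $\ltfr ; \ltrf$ is carried by the term $\ltfr ; \ltrf$ itself.

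I expect the main obstacle to be the two productions $(\ltmo ; \ltrf) ; \ltrf$ and $(\ltfr ; \ltrf) ; \ltrf$, which are the only cases where re-association matters. Here the inner $\ltrf ; \ltrf$ again expands to $\ltmo ; \ltrf$, so one must check $\ltmo ; (\ltmo ; \ltrf) = (\ltmo ; \ltmo) ; \ltrf \subseteq \ltmo ; \ltrf$ using transitivity of $\ltmo$, and $\ltfr ; (\ltmo ; \ltrf) = (\ltfr ; \ltmo) ; \ltrf \subseteq \ltfr ; \ltrf$ using $\ltfr ; \ltmo \subseteq \ltfr$; both land back in $S$. Verifying that no product escapes the chosen five terms is therefore mechanical once Lemma~\ref{lem:rel-inclusions} is in hand, and the conceptual content of the result is entirely that these ``one-step'' inclusions suffice to prevent the transitive closure from generating any genuinely new shape of edge.
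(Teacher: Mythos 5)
Your proposal is correct and follows essentially the same route as the paper's own proof: both directions are handled identically, with the reverse inclusion $\lteco \subseteq S$ established by induction on path length via closure of $S$ under right-composition with each generator, the resulting fifteen cases discharged exactly as you describe using Lemma~\ref{lem:rel-inclusions} together with transitivity of $\ltmo$ and the absorption $\ltfr ; \ltmo \subseteq \ltfr$. Even the delicate re-association cases $(\ltmo;\ltrf);\ltrf$ and $(\ltfr;\ltrf);\ltrf$ are resolved in the same way as in the paper, so there is nothing to add.
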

\begin{proof}
Let
\[
\lteco' = \ltrf \cup \ltmo \cup \ltfr \cup  (\ltmo;\ltrf) \cup (\ltfr;\ltrf)
\]
We show that $\lteco' = \lteco$.

Recall that $\lteco = (\ltfr \cup \ltmo \cup \ltrf)^+$. 
It is easy to see that $\lteco' \subseteq \lteco$
(as each option in the union defining $\lteco'$ is included in one or
two steps of $\lteco$).

We prove that $\lteco \subseteq \lteco'$ by induction.

Let $p$ be a
(nonempty) path through the transitive closure $\lteco$.
Thus for each $i$ such that $i+1 < |p|$ (where $|p|$ is the length of
$p$) $(p_i, p_{i+1}) \in \ltfr \cup \ltmo \cup \ltrf$ (indexing from $0$).
We prove by induction on the length of $p$ that
$(p_0, p_{|p|-1}) \in \lteco'$,
which is sufficient
to show that $\lteco \subseteq \lteco'$.
For the base case, $p$ contains two elements, $p_0$ and $p_1$, so we must
prove that $(p_0, p_1) \in \lteco'$.
But this follows from the fact that
$$
\ltfr \cup \ltmo \cup \ltrf \subseteq\ltrf \cup \ltmo \cup \ltfr \cup  (\ltmo;\ltrf) \cup (\ltfr;\ltrf)
$$
which is clear by inspection. For the induction, assume there is some $p'$ and
$x$ such that $p' = p ; \langle x \rangle$ and both
\begin{align*}
(p_0, p_{|p|-1}) &\in \lteco' \\
(p_{|p|-1}, x) &\in \ltfr \cup \ltmo \cup \ltrf
\end{align*}
We must prove that $(p_0, x) \in \lteco'$. It is sufficient to show that
\[
\lteco' ; (\ltfr \cup \ltmo \cup \ltrf) \subseteq \lteco'
\]
But by distributivity of $;$ over $\cup$, this is equivalent to
\[
(\lteco' ; \ltfr) \cup (\lteco' ; \ltmo) \cup (\lteco' ; \ltrf) \subseteq \lteco'
\]
Expanding the definition of $\lteco'$ and applying distributivity once again
we obtain 15 cases to check: five options in the union defining $\lteco'$
combined with each of the three relations $\ltfr, \ltmo, \ltrf$.
The cases, and their proofs are as follows:
\begin{itemize}
\item $\ltrf ; \ltfr \subseteq \lteco'$. But
\begin{align*}
\ltrf ; \ltfr &\subseteq \ltmo & \text{by Lemma \ref{lem:rel-inclusions}\ref{lem:rel-inclusions1}}\\
                  &\subseteq \lteco' & \text{by definition of $\lteco'$}
\end{align*}

\item $\ltrf ; \ltmo \subseteq \lteco'$. But
\begin{align*}
\ltrf ; \ltmo &\subseteq \ltmo & \text{by Lemma \ref{lem:rel-inclusions}\ref{lem:rel-inclusions2}}\\
                  &\subseteq \lteco' & \text{by definition of $\lteco'$}
\end{align*}

\item $\ltrf ; \ltrf \subseteq \lteco'$. But
\begin{align*}
\ltrf ; \ltrf &\subseteq \ltmo ; \ltrf& \text{by Lemma \ref{lem:rel-inclusions}\ref{lem:rel-inclusions3}}\\
                  &\subseteq \lteco' & \text{by definition of $\lteco'$}
\end{align*}

\item $\ltmo ; \ltfr \subseteq \lteco'$. But
\begin{align*}
\ltmo ; \ltfr &\subseteq \ltmo & \text{by Lemma \ref{lem:rel-inclusions}\ref{lem:rel-inclusions4}}\\
                  &\subseteq \lteco' & \text{by definition of $\lteco'$}
\end{align*}

\item $\ltmo ; \ltmo \subseteq \lteco'$. But
\begin{align*}
\ltmo ; \ltmo &\subseteq \ltmo & \text{by transitivity}\\
                  &\subseteq \lteco' & \text{by definition of $\lteco'$}
\end{align*}

\item $\ltmo ; \ltrf \subseteq \lteco'$. But this is true by definition of $\lteco'$.

\item $\ltfr ; \ltfr \subseteq \lteco'$. But
\begin{align*}
\ltfr ; \ltfr &\subseteq \ltfr & \text{by Lemma \ref{lem:rel-inclusions}\ref{lem:rel-inclusions7}}\\
                  &\subseteq \lteco' & \text{by definition of $\lteco'$}
\end{align*}

\item $\ltfr ; \ltmo \subseteq \lteco'$. But
\begin{align*}
\ltfr ; \ltmo &\subseteq \ltfr & \text{by Lemma \ref{lem:rel-inclusions}\ref{lem:rel-inclusions6}}\\
                  &\subseteq \lteco' & \text{by definition of $\lteco'$}
\end{align*}

\item $\ltfr ; \ltrf \subseteq \lteco'$. But this is true by definition of $\lteco'$.

\item $\ltmo;\ltrf ; \ltfr \subseteq \lteco'$. But
\begin{align*}
\ltmo;\ltrf ; \ltfr &\subseteq \ltmo ; \ltmo & \text{by Lemma \ref{lem:rel-inclusions}\ref{lem:rel-inclusions1}}\\
                  &\subseteq \ltmo & \text{by transitivity}\\
                  &\subseteq \lteco' & \text{by definition of $\lteco'$}
\end{align*}

\item $\ltmo;\ltrf ; \ltmo \subseteq \lteco'$. But
\begin{align*}
\ltmo;\ltrf ; \ltmo &\subseteq \ltmo ; \ltmo & \text{by Lemma \ref{lem:rel-inclusions}\ref{lem:rel-inclusions2}}\\
                  &\subseteq \ltmo & \text{by transitivity}\\
                  &\subseteq \lteco' & \text{by definition of $\lteco'$}
\end{align*}

\item $\ltmo;\ltrf ; \ltrf \subseteq \lteco'$. But
\begin{align*}
\ltmo;\ltrf ; \ltrf &\subseteq \ltmo ; \ltmo ; \ltrf& \text{by Lemma \ref{lem:rel-inclusions}\ref{lem:rel-inclusions3}}\\
                  &\subseteq \ltmo; \ltrf & \text{by transitivity}\\
                  &\subseteq \lteco' & \text{by definition of $\lteco'$}
\end{align*}

\item $\ltfr;\ltrf ; \ltfr \subseteq \lteco'$. But
\begin{align*}
\ltfr;\ltrf ; \ltfr &\subseteq \ltfr ; \ltmo & \text{by Lemma \ref{lem:rel-inclusions}\ref{lem:rel-inclusions1}}\\
                  &\subseteq \ltfr & \text{by Lemma \ref{lem:rel-inclusions}\ref{lem:rel-inclusions6}}\\
                  &\subseteq \lteco' & \text{by definition of $\lteco'$}
\end{align*}

\item $\ltfr;\ltrf ; \ltmo \subseteq \lteco'$. But
\begin{align*}
\ltfr;\ltrf ; \ltmo &\subseteq \ltfr ; \ltmo & \text{by Lemma \ref{lem:rel-inclusions}\ref{lem:rel-inclusions2}}\\
                  &\subseteq \ltfr & \text{by Lemma \ref{lem:rel-inclusions}\ref{lem:rel-inclusions6}}\\
                  &\subseteq \lteco' & \text{by definition of $\lteco'$}
\end{align*}

\item $\ltfr;\ltrf ; \ltrf \subseteq \lteco'$. But
\begin{align*}
\ltfr;\ltrf ; \ltrf &\subseteq \ltfr;\ltmo ; \ltrf & \text{by Lemma \ref{lem:rel-inclusions}\ref{lem:rel-inclusions3}}\\
                  &\subseteq \ltfr ; \ltrf& \text{by Lemma \ref{lem:rel-inclusions}\ref{lem:rel-inclusions6}}\\
                  &\subseteq \lteco' & \text{by definition of $\lteco'$}
\end{align*}
\end{itemize}
This completes our proof.
\end{proof}


\begin{lemma}[Weak Canonical RAR Consistency implies $\lteco$-irreflexivity]
\label{lem:canon-coher-implies-eco-irrefl}
For a  candidate execution $$\bbD = ((D, \ltsb), \ltrf, \ltmo),$$ if $\bbD$
is weakly canonical consistent then $\bbD$ satisfies $\irrefl(\lteco)$.
\end{lemma}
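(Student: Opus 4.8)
The plan is to reduce the irreflexivity of the transitively closed relation $\lteco$ to the irreflexivity of the five syntactically simple relations that appear in its closed form. Concretely, since {\sc UPD} holds (it is part of weak canonical consistency), Lemma~\ref{lem:eco-cases} applies and gives
\[
\lteco = \ltrf \cup \ltmo \cup \ltfr \cup (\ltmo; \ltrf) \cup (\ltfr; \ltrf),
\]
a finite union with no transitive closure. Because a finite union of irreflexive relations is again irreflexive, it suffices to prove $\irrefl(R)$ for each of these five summands $R$ individually. Four of them are essentially immediate, and only the last calls for the update-atomicity machinery.

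First I would dispatch the easy summands. The relation $\ltrf$ is irreflexive directly by axiom {\sc RFI}. The relation $\ltmo$ is a strict order by {\sc MO-Valid} (part of being a candidate execution), hence irreflexive. The relation $\ltfr = (\ltrf^{-1}; \ltmo) \backslash {\it Id}$ is irreflexive by construction, as the identity pairs are explicitly removed. For $\ltmo; \ltrf$, I would invoke Lemma~\ref{lem:upd-rewrite}, by which {\sc UPD} is equivalent to $\irrefl(\ltfr; \ltmo) \wedge \irrefl(\ltrf; \ltmo)$; combining $\irrefl(\ltrf; \ltmo)$ with the elementary equivalence $\irrefl(r; s) \iff \irrefl(s; r)$ yields $\irrefl(\ltmo; \ltrf)$.

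The only case requiring a genuine argument --- and hence the main obstacle, modest as it is --- is $\irrefl(\ltfr; \ltrf)$. I would argue by contradiction: suppose $(x, x) \in \ltfr; \ltrf$, so there is some $y$ with $(x, y) \in \ltfr$ and $(y, x) \in \ltrf$. Reading the same two edges in the opposite order gives $(y, y) \in \ltrf; \ltfr$, and by the inclusion $\ltrf; \ltfr \subseteq \ltmo$ of Lemma~\ref{lem:rel-inclusions}\ref{lem:rel-inclusions1} (available since {\sc UPD} holds) this forces $(y, y) \in \ltmo$, contradicting the irreflexivity of $\ltmo$ established above. This settles the final summand, and assembling the five irreflexivity facts through the union yields $\irrefl(\lteco)$, as required.
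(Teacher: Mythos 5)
Your proposal is correct and follows essentially the same route as the paper: both reduce $\irrefl(\lteco)$ to the five summands of the closed form in Lemma~\ref{lem:eco-cases} and dispatch $\ltrf$, $\ltmo$, $\ltfr$ immediately, leaving only $\ltmo;\ltrf$ and $\ltfr;\ltrf$. Your handling of those two cases---rotating via $\irrefl(r;s) \iff \irrefl(s;r)$ and appealing to Lemma~\ref{lem:upd-rewrite} and Lemma~\ref{lem:rel-inclusions}\ref{lem:rel-inclusions1}---is a slight repackaging of the paper's direct arguments (which use Lemma~\ref{lem:supporting-rel-props} and the one-to-many property of $\ltrf$), but the substance is identical and sound.
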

\begin{proof}
Recall from Lemma \ref{lem:eco-cases} that
\[
\lteco = \ltrf \cup \ltmo \cup \ltfr \cup  (\ltmo;\ltrf) \cup (\ltfr;\ltrf)
\]
Assume for a contradiction that there is some $(x, x) \in \lteco$. There are
five cases to consider: one for each option of the union.
It cannot be that $(x, x) \in \ltrf$, $(x, x) \in \ltfr$, $(x, x) \in \ltmo$ edges,
because all these relations are irreflexive. Thus the pair $(x, x)$ must appear in one of the following:
$\ltmo;\ltrf$ or $\ltfr;\ltrf$. We treat each case separately.

In the first case, we have $(x, x) \in \ltmo;\ltrf$ for some $x \in D$.
The relation $\ltmo;\ltrf$ goes from modifying operations to reading
operations, so again $x$ must be an update. Let $w'$ be the modification
satisfying $(x, w') \in \ltmo$ and $(w', x) \in \ltrf$ (the existence of this
operation is guaranteed by the definition of relational composition). But now,
by Property Lemma \ref{lem:supporting-rel-props}\ref{lem:supporting-rel-props:1},
$(w', x) \in \ltmo$ and thus $(x, x) \in \ltmo$
contrary to the irreflexivity of $\ltmo$.

In the second case, we have $(x, x) \in \ltfr;\ltrf$. Let $w$ be the modification
satisfying $(x, w) \in \ltfr$ and $(w, x) \in \ltrf$ (again, the existence of this
modification is guaranteed by relational composition). Because
$$\ltfr = \ltrf^{-1};\ltmo \setminus Id$$
there is some modification $w'$
satisfying $(w', x) \in \ltrf$, $(w', w) \in \ltmo$ and $w' \neq w$.
But because $\ltrf$ is one-to-many $\ltrf^{-1}(x) = w$ and $\ltrf^{-1}(x) = w'$,
and thus $w=w'$, a contradiction.
This completes our proof.
\end{proof}

\begin{lemma}
\label{lem:rewrite-eco-hb}
For a candidate execution $$\bbD = ((D, \ltsb), \ltrf, \ltmo),$$
$\irrefl(\lteco ; \lthb)$ is equivalent to the conjunction of
{\sc COH} and {\sc RF} (defined in Definition \ref{def:weak-canon}).
\end{lemma}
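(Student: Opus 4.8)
The plan is to prove the two implications separately, relying throughout on three elementary facts: $\ltrf, \ltmo, \ltfr \subseteq \lteco$; that $\lteco$ is transitively closed (being a transitive closure), so any composition of $\ltrf$, $\ltmo$ and $\ltfr$ is again contained in $\lteco$; and the ``rotation'' equivalence $\irrefl(r ; s) \iff \irrefl(s ; r)$. I would first record the auxiliary observation that $\irrefl(\lteco ; \lthb)$ already forces $\irrefl(\lthb)$: any $\lthb$-cycle must traverse at least one $\ltsw$ edge, because $\ltsb$ alone is acyclic in a candidate execution (by {\sc SB-Total} it is a per-thread strict total order, with initialising writes below everything and mutually unordered). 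Since $\ltsw \subseteq \ltrf \subseteq \lteco$ and the remainder of the cycle lies in $\lthb$ (which is transitive), such a cycle yields a self-loop in $\lteco ; \lthb$, contradicting its irreflexivity.

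For the implication from $\irrefl(\lteco ; \lthb)$ to {\sc COH} and {\sc RF}: the {\sc RF} conjunct $\irrefl(\ltrf ; \lthb)$ is immediate from $\ltrf \subseteq \lteco$. For {\sc COH}, I would expand $(\ltrf^{-1})^? ; \ltmo ; \ltrf^?$ into its four disjuncts $\ltmo$, $\ltmo ; \ltrf$, $\ltrf^{-1} ; \ltmo$ and $\ltrf^{-1} ; \ltmo ; \ltrf$, assume a self-loop $(x,x)$ in the composition with $\lthb$, and derive a contradiction in each case. The disjuncts $\ltmo$ and $\ltmo ; \ltrf$ embed directly into $\lteco$ by transitivity. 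The disjunct $\ltrf^{-1} ; \ltmo$ splits according to whether its endpoints coincide: off the diagonal it is exactly $\ltfr = (\ltrf^{-1} ; \ltmo) \setminus {\it Id} \subseteq \lteco$, while on the diagonal it would force an $\lthb$ self-loop, excluded by the auxiliary observation. The disjunct $\ltrf^{-1} ; \ltmo ; \ltrf$ is handled the same way, using $\ltfr ; \ltrf \subseteq \lteco$ off the diagonal and a bare $\ltrf \subseteq \lteco$ edge on it. Every case contradicts $\irrefl(\lteco ; \lthb)$ (or $\irrefl(\lthb)$), establishing {\sc COH}.

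For the converse, from {\sc COH} and {\sc RF} to $\irrefl(\lteco ; \lthb)$, I would use the closed-form description of Lemma~\ref{lem:eco-cases}, namely $\lteco = \ltrf \cup \ltmo \cup \ltfr \cup (\ltmo ; \ltrf) \cup (\ltfr ; \ltrf)$, to decompose $\lteco ; \lthb$ into five pieces. Composing with $\lthb$ and using $\ltfr \subseteq \ltrf^{-1} ; \ltmo$, each piece is contained in a relation that {\sc COH} or {\sc RF} declares irreflexive: $\ltrf ; \lthb$ is governed by {\sc RF}; $\ltmo ; \lthb$ and $(\ltmo ; \ltrf) ; \lthb$ sit inside the $\ltmo$ and $\ltmo ; \ltrf$ summands of {\sc COH}; and $\ltfr ; \lthb$, $(\ltfr ; \ltrf) ; \lthb$ sit inside the $\ltrf^{-1} ; \ltmo$ and $\ltrf^{-1} ; \ltmo ; \ltrf$ summands. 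Since a subset of an irreflexive relation is irreflexive and a finite union of irreflexive relations is irreflexive, $\lteco ; \lthb$ is irreflexive.

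The main obstacle is the treatment of update events, which is exactly where the diagonal of $\ltrf^{-1} ; \ltmo$ intrudes: an update reading from an $\ltmo$-earlier write produces a reflexive pair in $\ltrf^{-1} ; \ltmo$ that is \emph{not} an $\ltfr$ edge and hence not visibly inside $\lteco$. The auxiliary observation $\irrefl(\lteco ; \lthb) \Rightarrow \irrefl(\lthb)$ is precisely what neutralises these degenerate cases in the first implication. A secondary delicacy is that the closed form of Lemma~\ref{lem:eco-cases} presupposes update atomicity; I would either invoke it only once that property is in hand (it is available wherever weak canonical consistency is assumed) or, to keep the argument self-contained for an arbitrary candidate execution, replace the closed-form step by an induction on the length of an $\lteco$-path, collapsing adjacent edges using {\sc COH} and {\sc RF} directly rather than the structural inclusions of Lemma~\ref{lem:rel-inclusions}.
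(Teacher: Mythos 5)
Your plan is sound and, where it matters, \emph{more} careful than the paper's own proof, though it travels the same algebraic terrain. The paper proves the lemma in one pass: setting $R = (\ltrf^{-1})^? ; \ltmo ; \ltrf^?$, it asserts the relational equality $R = (\ltmo \cup \ltfr);(\ltrf \cup \mathit{Id})$, combines this with the closed form of Lemma~\ref{lem:eco-cases} to obtain $\lteco = R \cup \ltrf$, and finishes by distributivity of $;$ over $\cup$ and $\mathit{irrefl}(r \cup s) \iff \mathit{irrefl}(r) \wedge \mathit{irrefl}(s)$. That asserted equality is in fact false whenever any update is present: if $u$ reads from $w$ with $(w,u) \in \ltmo$, then $(u,u) \in \ltrf^{-1};\ltmo \subseteq R$, but $(u,u)$ lies neither in $(\ltmo \cup \ltfr);(\ltrf \cup \mathit{Id})$ nor in the closed form of $\lteco$. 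This is exactly the diagonal you isolate, and your auxiliary observation $\mathit{irrefl}(\lteco;\lthb) \Rightarrow \mathit{irrefl}(\lthb)$ (via $\ltsb$-acyclicity and $\ltsw \subseteq \ltrf \subseteq \lteco$) is precisely the missing step that discharges it. A further dividend: your forward direction uses no update atomicity at all, whereas the paper's argument silently imports {\sc UPD} through Lemma~\ref{lem:eco-cases}; this matters because Lemma~\ref{lem:coher-implies-canon-coher} applies the forward direction under the hypothesis $\mathit{irrefl}(\lteco^? ; \lthb)$ alone, which does not supply {\sc UPD}. Your version makes that application legitimate.

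The one genuine flaw is your proposed fallback for the converse: replacing the closed-form step by an induction on $\lteco$-paths, collapsing adjacent edges ``using {\sc COH} and {\sc RF} directly,'' so as to keep the lemma valid for an arbitrary candidate execution. No such argument can succeed, because the converse is \emph{false} without {\sc UPD}. Concretely: let $w_0$ be the initialising write of $x$ (value $0$); let thread $1$ execute $c = wr(x,1)$ then $a = wr(x,2)$, so $(c,a) \in \ltsb$; let thread $2$ execute a single update $u = upd(x,0,3)$ with $(w_0,u) \in \ltrf$; and let $\ltmo$ order the writes as $w_0, c, a, u$. This satisfies {\sc RF-Complete}, {\sc MO-Valid} and {\sc SB-Total}, hence is a candidate execution, but it violates {\sc UPD} (indeed $(w_0,w_0) \in \ltmo;\ltmo;\ltrf^{-1}$). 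With no synchronisation, $\lthb = \ltsb$, and a direct check shows no pair of $R$, nor of $\ltrf$, is inverted by $\lthb$, so {\sc COH} and {\sc RF} both hold. Yet $(a,u) \in \ltmo$ and $(u,c) \in \ltfr$ give $(a,c) \in \lteco$, while $(c,a) \in \ltsb \subseteq \lthb$, so $\lteco;\lthb$ has a self-loop at $a$. (The culprit is the $\ltmo;\ltfr$ pattern through a non-atomic update, which {\sc COH} and {\sc RF} cannot compress.) So the update-atomicity hypothesis of Lemma~\ref{lem:eco-cases} is essential rather than presentational --- strictly read, the paper's lemma as stated is itself false for arbitrary candidate executions --- and your first option is the correct repair: prove the forward direction unconditionally, as you do, and invoke the converse only where {\sc UPD} is in hand, which it is in the lemma's only converse-direction use, under weak canonical consistency.
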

\begin{proof}
Let $R = (\ltrf^{-1})? ; \ltmo ; \ltrf^?$ so that {\sc COH} is equivalent to
$\irrefl(R ; \lthb)$. Now, note that
\begin{align}
(\ltrf^{-1})? ; \ltmo ; \ltrf^? &= (\ltmo \cup \ltfr) ; (\ltrf \cup \mathit{Id}) \\
    & \qquad \qquad \text{def of $\ltfr$, ref. clos.} \nonumber \\
    &=(\ltmo \cup (\ltmo ; \ltrf) \cup \ltfr \cup (\ltfr ; \ltrf)) \\
    & \qquad \qquad\text{distrib of $;$ over $\cup$}\nonumber
\end{align}
Therefore, recalling from Lemma \ref{lem:eco-cases} that
\[
\lteco = \ltrf \cup \ltmo \cup \ltfr \cup  (\ltmo ; \ltrf) \cup (\ltfr;\ltrf)
\]
we have $\lteco = R \cup \ltrf$. Thus, because $;$ distributes over $\cup$,
$\lteco ; \lthb = (R ; \lthb) \cup (\ltrf ; \lthb)$, and so $\irrefl(\lteco ; \lthb)$ is
equivalent to $\irrefl(R ; \lthb \cup \ltrf ; \lthb)$. But, because
$\irrefl(r \cup s) \iff \irrefl(r) \wedge \irrefl(s)$ for any relations $r, s$, this is 
equivalent to the conjunction of {\sc COH} and {\sc RF}.
\end{proof}

\begin{lemma}[Weak Canonical RAR Consistency implies $\lteco ; \lthb$-irreflexivity]
\label{lem:canon-coher-implies-eco-hb-irrefl}
For a candidate execution $$\bbD = ((D, \ltsb), \ltrf, \ltmo),$$ whenever $\bbD$
is weakly canonical consistent, we have that $\bbD$ satisfies $\irrefl(\lteco^? ; \lthb)$.
\end{lemma}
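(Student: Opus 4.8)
The plan is to reduce the claim to the three irreflexivity conditions {\sc HB}, {\sc COH} and {\sc RF} of weak canonical consistency (Definition~\ref{def:weak-canon}), so that the result follows immediately once the reflexive closure is unpacked and the preceding lemma is cited. First I would rewrite the target relation by distributing relational composition over the reflexive closure: since $\lteco^? = \lteco \cup \mathit{Id}$ and $\mathit{Id} ; \lthb = \lthb$, we obtain $\lteco^? ; \lthb = (\lteco ; \lthb) \cup \lthb$. This is the only structural manipulation the argument requires.

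Next I would invoke the standard fact (already used in the proof of Lemma~\ref{lem:rewrite-eco-hb}) that $\irrefl(r \cup s) \iff \irrefl(r) \wedge \irrefl(s)$ for arbitrary relations $r, s$. Applied to the decomposition above, $\irrefl(\lteco^? ; \lthb)$ is equivalent to the conjunction of $\irrefl(\lteco ; \lthb)$ and $\irrefl(\lthb)$. The second conjunct is precisely the {\sc HB} condition, and so holds by the assumption that $\bbD$ is weakly canonical consistent.

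For the first conjunct I would appeal directly to Lemma~\ref{lem:rewrite-eco-hb}, which establishes that $\irrefl(\lteco ; \lthb)$ is equivalent to the conjunction of {\sc COH} and {\sc RF}. Both of these conditions hold because $\bbD$ is weakly canonical consistent, so $\irrefl(\lteco ; \lthb)$ holds as well. Combining this with $\irrefl(\lthb)$ then yields $\irrefl(\lteco^? ; \lthb)$, as required.

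I do not expect a genuine obstacle here: the substantive content has already been discharged in Lemma~\ref{lem:rewrite-eco-hb}, and the present lemma amounts to the observation that taking the reflexive closure of $\lteco$ contributes only the extra disjunct $\lthb$, whose irreflexivity is exactly {\sc HB}. The single step that deserves to be stated carefully is the distributivity identity $\lteco^? ; \lthb = (\lteco ; \lthb) \cup \lthb$, since it is what lets the two prior facts be glued together; everything else is a direct citation of the weak canonical consistency conditions.
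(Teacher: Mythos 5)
Your proposal is correct and follows essentially the same route as the paper's own proof: the paper likewise observes that a reflexive point of $\lteco^? ; \lthb$ must either lie in $\lthb$ (excluded by {\sc HB}) or in $\lteco ; \lthb$ (excluded by {\sc COH} and {\sc RF} via Lemma~\ref{lem:rewrite-eco-hb}). Your algebraic phrasing via $\lteco^? ; \lthb = (\lteco ; \lthb) \cup \lthb$ and $\irrefl(r \cup s) \iff \irrefl(r) \wedge \irrefl(s)$ is just a slightly more explicit rendering of that same case split.
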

\begin{proof}
First, note that Property {\sc HB} of weakly canonical consistency
ensures that $\irrefl(\lthb)$, so if there is a cycle in
$\lteco^? ; \lthb$ then there is a cycle in $\lteco ; \lthb$
(so we must actually take an $\lteco$ step). We prove that this later is
impossible. But by Lemma \ref{lem:rewrite-eco-hb}, because $\bbD$ satisfies {\sc COH}
and {\sc RF} we have $\irrefl(\lteco ; \lthb)$ as required.
\end{proof}

\begin{lemma}[Coherence implies canonical coherence]
\label{lem:coher-implies-canon-coher}
For a candidate execution $\bbD = ((D, \ltsb), \ltrf, \ltmo)$, if $\bbD$
satisfies $\irrefl(\lteco^? ; \lthb)$ then $\bbD$ satisfies all
of HB, COH, and RF above.
\end{lemma}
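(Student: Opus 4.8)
The plan is to derive all three conditions directly from the single hypothesis $\irrefl(\lteco^? ; \lthb)$, using only the fact that irreflexivity is inherited by subrelations together with the closed-form equivalence already established in Lemma~\ref{lem:rewrite-eco-hb}. The key observation is that unfolding the reflexive closure gives $\lteco^? ; \lthb = (\lteco ; \lthb) \cup \lthb$, so that both $\lthb$ and $\lteco ; \lthb$ are subrelations of $\lteco^? ; \lthb$.

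First I would obtain HB. Since $\lthb \subseteq \lteco^? ; \lthb$ (taking the identity branch of the reflexive closure), any reflexive pair of $\lthb$ would be a reflexive pair of $\lteco^? ; \lthb$; as the latter relation is irreflexive by assumption, $\irrefl(\lthb)$ follows immediately.

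Next I would treat COH and RF together. Because $\lteco \subseteq \lteco^?$ we have $\lteco ; \lthb \subseteq \lteco^? ; \lthb$, so the hypothesis at once yields $\irrefl(\lteco ; \lthb)$. Lemma~\ref{lem:rewrite-eco-hb} states precisely that $\irrefl(\lteco ; \lthb)$ is equivalent to the conjunction of COH and RF, and hence both of these conditions hold.

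There is no substantial obstacle in this direction: the whole argument rests on the monotonicity of irreflexivity under relation inclusion and on the equivalence proved in Lemma~\ref{lem:rewrite-eco-hb}. The only point that requires any care is the initial unfolding of $\lteco^?$ as $\lteco \cup \mathit{Id}$, which is what exhibits $\lthb$ and $\lteco ; \lthb$ simultaneously as subrelations of $\lteco^? ; \lthb$ and thereby lets a single irreflexivity hypothesis discharge all three obligations.
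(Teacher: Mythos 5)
Your proposal is correct and follows essentially the same route as the paper's proof: both derive \textsc{HB} from the inclusion $\lthb \subseteq \lteco^? ; \lthb$, derive $\irrefl(\lteco ; \lthb)$ from the inclusion $\lteco ; \lthb \subseteq \lteco^? ; \lthb$, and then invoke Lemma~\ref{lem:rewrite-eco-hb} to obtain \textsc{COH} and \textsc{RF} together. There is nothing to fix.
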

\begin{proof}
Assume  $\irrefl(\lteco^? ; \lthb)$. 
Because $\irrefl(\lteco^? ; \lthb)$ and $\lthb \subseteq \lteco^? ; \lthb$ we have $\irrefl(\lthb)$
as required for {\sc HB}.

Because $\irrefl(\lteco^? ; \lthb)$ and $\lteco ; \lthb \subseteq \lteco^? ; \lthb$ we have
$\irrefl(\lteco ;\lthb)$. By Lemma \ref{lem:rewrite-eco-hb}, this implies
the conjunction of {\sc COH} and {\sc RF}. This completes our proof.

\end{proof}

\begin{lemma}[Coherence implies Update Atomicity]
\label{lem:coher-implies-upd-atom}
For a candidate execution $\bbD = ((D, \ltsb), \ltrf, \ltmo)$, if $\bbD$
satisfies $\irrefl(\lteco^?)$ then $\bbD$ satisfies the
update atomicity property UPD.
\end{lemma}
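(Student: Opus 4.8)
The plan is to reduce UPD to a pair of irreflexivity statements via Lemma~\ref{lem:upd-rewrite}, and then to dispatch both by the single observation that the relevant compositions are sub-relations of $\lteco$. Since $\lteco^?$ always contains the diagonal and hence can never be irreflexive, I read the hypothesis as $\irrefl(\lteco)$, namely the $\lteco$-component of the Coherence axiom of Definition~\ref{def:legal-execution}; this is exactly what the surrounding development (in particular Lemma~\ref{lem:canon-coher-implies-eco-irrefl}) makes available.

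First I would apply Lemma~\ref{lem:upd-rewrite}, which holds for every candidate execution with no further side conditions and states that UPD is equivalent to $\irrefl(\ltfr ; \ltmo) \wedge \irrefl(\ltrf ; \ltmo)$. It therefore suffices to establish these two irreflexivities from the hypothesis $\irrefl(\lteco)$.

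The key step is the inclusion of both compositions in $\lteco$. Because $\lteco = (\ltfr \cup \ltmo \cup \ltrf)^+$, each of $\ltfr$, $\ltmo$ and $\ltrf$ is contained in $\lteco$, and $\lteco$, being a transitive closure, is transitive. Hence $\ltfr ; \ltmo \subseteq \lteco ; \lteco \subseteq \lteco$ and, symmetrically, $\ltrf ; \ltmo \subseteq \lteco ; \lteco \subseteq \lteco$. Since any relation contained in an irreflexive relation is itself irreflexive, $\irrefl(\lteco)$ yields both $\irrefl(\ltfr ; \ltmo)$ and $\irrefl(\ltrf ; \ltmo)$, and UPD follows by the reformulation above.

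There is no substantive obstacle here; the only point requiring care is the bookkeeping just described, namely confirming that the two compositions genuinely land inside the single transitive relation $\lteco$ rather than merely inside $\lteco ; \lteco$, which holds precisely because $\lteco$ is transitively closed. Once Lemma~\ref{lem:upd-rewrite} is in hand the argument is a matter of a few lines.
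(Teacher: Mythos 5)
Your proof is correct and follows essentially the same route as the paper's own: both reduce {\sc UPD} to $\irrefl(\ltfr;\ltmo) \wedge \irrefl(\ltrf;\ltmo)$ via Lemma~\ref{lem:upd-rewrite} and then observe that both compositions lie inside $\lteco$, whose irreflexivity finishes the argument. Your reading of the hypothesis as $\irrefl(\lteco)$ is also right --- the $\lteco^?$ in the statement is a typo (a reflexive closure is never irreflexive), and the paper's proof likewise silently uses $\irrefl(\lteco)$, matching the {\sc Coherence} axiom of Definition~\ref{def:legal-execution}.
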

\begin{proof}
By Lemma \ref{lem:upd-rewrite}, {\sc UPD} is equivalent to
$\irrefl(\ltfr ;\ltmo) \wedge \irrefl(\ltrf ; \ltmo)$.
But $\ltfr ;\ltmo \subseteq \lteco$, so because $\irrefl(\lteco)$
we have $\irrefl(\ltfr ;\ltmo)$. Likewise, $\ltrf ; \ltmo \subseteq \lteco$
so $\irrefl(\ltrf ; \ltmo)$. This is sufficient to prove {\sc UPD}.
\end{proof}

The four lemmas \ref{lem:canon-coher-implies-eco-irrefl},
\ref{lem:canon-coher-implies-eco-hb-irrefl},
\ref{lem:coher-implies-canon-coher} and \ref{lem:coher-implies-canon-coher}
together imply \ref{theor:canon-equiv} can now be used to prove our main
theorem.

\begin{theorem}
\label{theor:canon-equiv}
For any candidate execution $$\bbD = ((D, \ltsb), \ltrf, \ltmo),$$ $\bbD$ is
weakly canonical consistent iff $\bbD$ satisfies
{\SC Coherence} of Definition \ref{def:legal-execution} on page \pageref{def:legal-execution}.
\end{theorem}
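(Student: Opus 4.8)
The plan is to assemble the theorem directly from the four preceding lemmas, reading {\sc Coherence} of Definition \ref{def:legal-execution} as the conjunction $\irrefl(\lteco) \wedge \irrefl(\lteco^? ; \lthb)$. Here I would invoke the elementary fact (already used in the proof of Lemma \ref{lem:upd-rewrite}) that $\irrefl(r ; s) \iff \irrefl(s ; r)$, so that the condition $\irrefl(\lthb ; \lteco^?)$ stated in the definition coincides with $\irrefl(\lteco^? ; \lthb)$ as used in the lemmas. I would then prove the two implications of the biconditional separately.

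For the forward direction, I would assume $\bbD$ is weakly canonical consistent. Lemma \ref{lem:canon-coher-implies-eco-irrefl} immediately yields $\irrefl(\lteco)$, and Lemma \ref{lem:canon-coher-implies-eco-hb-irrefl} yields $\irrefl(\lteco^? ; \lthb)$. These are precisely the two irreflexivity conditions comprising {\sc Coherence}, so $\bbD$ satisfies {\sc Coherence} and this direction is complete.

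For the backward direction, I would assume $\bbD$ satisfies {\sc Coherence}, that is, both $\irrefl(\lteco)$ and $\irrefl(\lteco^? ; \lthb)$. Lemma \ref{lem:coher-implies-canon-coher} then delivers {\sc HB}, {\sc COH} and {\sc RF}, while Lemma \ref{lem:coher-implies-upd-atom} delivers {\sc UPD}. The only component of weak canonical consistency (Definition \ref{def:weak-canon}) not produced by these two lemmas is {\sc RFI}, namely $\irrefl(\ltrf)$; I would discharge this in one line from the inclusion $\ltrf \subseteq \lteco$ together with $\irrefl(\lteco)$. Collecting all five conditions establishes weak canonical consistency.

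I do not anticipate a real obstacle, since all the substantive content is carried by the lemmas and the theorem amounts to a bookkeeping assembly. The two points requiring mild care are presentational: first, the {\sc RFI} condition, which was split off for convenience and is the conclusion of no single lemma but follows at once from $\lteco$-irreflexivity; and second, Lemma \ref{lem:coher-implies-upd-atom} should be read with hypothesis $\irrefl(\lteco)$ rather than the literally-written $\irrefl(\lteco^?)$, since the latter can never hold (the reflexive closure always contains identity pairs). With these caveats noted, the two directions above complete the proof.
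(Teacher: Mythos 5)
Your proposal is correct and takes essentially the same route as the paper, which likewise assembles the theorem from Lemmas~\ref{lem:canon-coher-implies-eco-irrefl} and \ref{lem:canon-coher-implies-eco-hb-irrefl} for the forward direction and Lemmas~\ref{lem:coher-implies-canon-coher} and \ref{lem:coher-implies-upd-atom} for the converse (the paper's proof text cites \ref{lem:coher-implies-canon-coher} twice, evidently a slip for the latter). Your two caveats in fact repair small defects in the paper's own writeup: its stated proof never explicitly discharges {\sc RFI}, which your one-liner $\ltrf \subseteq \lteco$ together with $\irrefl(\lteco)$ handles, and the hypothesis of Lemma~\ref{lem:coher-implies-upd-atom} should indeed be read as $\irrefl(\lteco)$ rather than the vacuous $\irrefl(\lteco^?)$, exactly as its proof body uses it.
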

\begin{proof}
For the left-to-right direction, see Lemmas \ref{lem:canon-coher-implies-eco-irrefl} and
\ref{lem:canon-coher-implies-eco-hb-irrefl}. For the right-to-left
direction, see Lemmas \ref{lem:coher-implies-canon-coher} and \ref{lem:coher-implies-canon-coher}.
\end{proof}

\newcommand{\vflag}{\mathit{flag}}
\newcommand{\vturn}{\mathit{turn}}
\newcommand{\false}{\mathit{false}}
\newcommand{\true}{\mathit{true}}


\section{Proof of Peterson's algorithm}

A configuration $(P, \sigma)$ is an {\em initial configuration of
Peterson's algorithm} if $\sigma$ is an initial state of our semantics and
the following conditions hold: 
\begin{gather}
P.\pc_t = \ref{set-flag} \label{pete-init:pc}\\
\wrval(\sigma.\last(\vturn)) \in \{1, 2\} \label{pete-init:turn}\\
\wrval(\sigma.\last(\vflag_{t})) = \False \label{pete-init:flag}
\end{gather}
for each $t \in \{1, 2\}$. The last condition here is not
strictly necessary for our proof, but it is needed to ensure
that Peterson's algorithm makes progress.

\begin{lemma}[Peterson's C11 Invariants]
\label{lem:pet-invs}
If $(P, \sigma)$ is a state reachable of Peterson's algorithm, then 
$(P, \sigma)$
satisfies the following for each $t,\notT \in \{1, 2\}$.
\begin{gather}
  \text{$turn$ is an update-only location} \label{pete-prop:update-only-app}\\
  \vturn \detval{\sigma}_1 2 \vee\vturn \detval{\sigma}_2 1  \label{pete-prop:turn-app}\\
  P.\pc_t \in \{\ref{swap-turn}, \ref{busy-wait}, \ref{critical-section}, \ref{unset-flag}\} \implies \vflag_{t} \detval{\sigma}_t \true   \label{pete-prop:flag-app}\\
  P.\pc_t \in \{\ref{busy-wait}, \ref{critical-section}, \ref{unset-flag}\} \implies \vflag_t \hbo{\sigma}\vturn   \label{pete-prop:flag-turn-app}\\
\begin{array}[t]{@{}l@{}}
  P.\pc_t \in \{\ref{busy-wait}, \ref{critical-section}, \ref{unset-flag}\} \wedge  P.\pc_{\notT} \in \{\ref{busy-wait}, \ref{critical-section}, \ref{unset-flag}\} \implies  \\ 
     \ \qquad \qquad \qquad \qquad \vflag_{\notT} \detval{\sigma}_t \true \vee\vturn \detval{\sigma}_{\notT} t
\end{array}
   \label{pete-prop:cross-app}\\
  P.\pc_t = \ref{critical-section} \wedge  P.\pc_{\notT} \in \{\ref{busy-wait}, \ref{critical-section}, \ref{unset-flag}\} \implies \vturn \detval{\sigma}_{\notT} t
  \label{pete-prop:crit-app}\\
  P.\pc_t = \ref{set-flag} \implies \vflag_t \detval{\sigma} \False
  \label{pete-prop:quies-app}
\end{gather}
\end{lemma}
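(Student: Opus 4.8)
The plan is to argue by induction on the length of the execution reaching $(P, \sigma)$, showing that the seven invariants \refeq{pete-prop:update-only-app}--\refeq{pete-prop:quies-app} hold in every initial configuration and are preserved by every transition $(P, \sigma) \ltsArrow{m, e}_{\Pad} (P', \sigma')$. In the base case both program counters sit at line \ref{set-flag}, so every invariant whose premise requires a thread to be past line \ref{set-flag} is vacuous; the remaining obligations follow from the initial-configuration constraints together with rule {\sc Init}: $\vturn$ carries only its initialising write, so it is update-only, giving \refeq{pete-prop:update-only-app}; $\wrval(\sigma.\last(\vturn)) \in \{1,2\}$ makes one disjunct of \refeq{pete-prop:turn-app} hold; and $\wrval(\sigma.\last(\vflag_t)) = \false$ gives \refeq{pete-prop:quies-app}.

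For the inductive step I would fix the thread $s$ that takes the step and case-split on the program line it executes, identified by the generated event $e$. The great majority of the resulting obligations are discharged mechanically: whenever $e$ writes a variable not named in an invariant, rules {\sc NoMod} and {\sc NoModOrd} transport the assertion across the step unchanged, and whenever $s$ merely advances a program counter without enabling a new premise there is nothing to prove. Only three kinds of transition touch the relevant shared state---the relaxed write of $\vflag_s$ at line \ref{set-flag}, the releasing update of $\vturn$ at line \ref{swap-turn}, and the guard reads at line \ref{busy-wait}---and this is where the work lies. In particular, since a thread writes only its own flag, each assertion $\vflag_s \detval{\sigma}_s \true$ survives all intervening steps of the other thread by {\sc NoMod}.

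The write at line \ref{set-flag} uses the quiescence invariant \refeq{pete-prop:quies-app} and \reflem{lem:last-mod-trans} to conclude $m = \sigma.\last(\vflag_s)$, whereupon {\sc ModLast} establishes $\vflag_s \detval{\sigma'}_s \true$, discharging \refeq{pete-prop:flag-app}. The swap at line \ref{swap-turn} is the core of the argument. Because $\vturn$ is update-only, \reflem{lem:last-mod-trans} again forces the update to observe the last modification, so rule {\sc WOrd}, fed the freshly available $\vflag_s \detval{\sigma}_s \true$, yields the variable-order assertion $\vflag_s \hbo{\sigma'} \vturn$ (invariant \refeq{pete-prop:flag-turn-app}), and the update installs a determinate turn value for $s$, re-establishing \refeq{pete-prop:turn-app}. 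Two subtle points arise here. First, the ordering $\vflag_t \hbo{\sigma} \vturn$ must survive the \emph{other} thread's subsequent swap; this is exactly what rule {\sc UOrd} provides, because every $\vturn$ modification is a releasing {\sf RA} update. Second, the cross invariant \refeq{pete-prop:cross-app} is established by whichever thread swaps second: its acquiring update reads from the first thread's releasing update, so {\sc Transfer} copies the first thread's $\true$-valued determinate assertion on its flag into the swapper's view (the first disjunct), while the thread that swapped first already enjoys the matching $\vturn \detval{\sigma}_{\notT} t$ from {\sc ModLast} (the second disjunct).

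The last block handles the guard at line \ref{busy-wait}, which the operational semantics evaluates read-by-read; intermediate states keep $P.\pc_t = \ref{busy-wait}$, and only the exit into line \ref{critical-section} changes a counter, so only invariant \refeq{pete-prop:crit-app} needs new work. There are two ways to exit. If the acquiring read of $\vflag_{\notT}$ returns $\false$ (the case detailed in the body), then $\vflag_{\notT} \detval{\sigma}_t \true$ is false by \reflem{lem:det-val-read}, so the surviving disjunct of \refeq{pete-prop:cross-app} gives $\vturn \detval{\sigma}_{\notT} t$, carried to $\sigma'$ by {\sc NoMod}. If instead the relaxed read of $\vturn$ returns $t$, then by \reflem{lem:det-val-read} the disjunct $\vturn \detval{\sigma}_t \notT$ of \refeq{pete-prop:turn-app} is impossible, forcing the complementary disjunct $\vturn \detval{\sigma}_{\notT} t$. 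I expect this final block to be the main obstacle: the short-circuiting conjunction of an acquiring flag read and a relaxed turn read requires careful tracking of which subexpression is live, and the two genuinely different exit conditions must both be funnelled into the single conclusion of \refeq{pete-prop:crit-app}---in particular, deriving a determinate turn value for the \emph{other} thread from $t$'s own relaxed read hinges on reading it off the global invariant \refeq{pete-prop:turn-app} rather than on any direct synchronisation, which is precisely where the release-acquire structure set up at the swaps must be shown to pay off.
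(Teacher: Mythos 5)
There is one genuine gap: you assert that ``only three kinds of transition touch the relevant shared state'' (the write at line \ref{set-flag}, the swap at line \ref{swap-turn}, and the guard reads at line \ref{busy-wait}), but this omits the releasing write $\vflag_t :=^{\sf R} \false$ at line \ref{unset-flag}, which is the paper's Case~5. That transition modifies $\vflag_t$, so {\sc NoMod} does not transport the flag assertions of thread $t$ across it, and --- crucially --- in the paper's model it returns $P'.\pc_t$ to line \ref{set-flag}, so the quiescence invariant \refeq{pete-prop:quies-app} becomes newly enabled and must be \emph{re-established}: one uses \reflem{lem:last-mod-trans} together with invariant \refeq{pete-prop:flag-app} to conclude $m = \sigma.\last(\vflag_t)$, and then {\sc ModLast} to obtain $\vflag_t \detval{\sigma'}_t \false$. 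Your own treatment of line \ref{set-flag} consumes exactly \refeq{pete-prop:quies-app} to justify $m = \sigma.\last(\vflag_t)$ there, so without Case~5 the induction does not close once a thread re-enters the protocol; your blanket clause ``whenever $s$ merely advances a program counter without enabling a new premise there is nothing to prove'' does not cover this step, precisely because a new premise \emph{is} enabled. The repair is local and symmetric to your line-\ref{set-flag} case, but it must be stated.

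Apart from this, your route is essentially the paper's: the same base case, the same {\sc NoMod}/{\sc NoModOrd} transport, {\sc ModLast} with \reflem{lem:last-mod-trans} at line \ref{set-flag}, {\sc WOrd} and {\sc UOrd} at the swap, and the two guard-exit arguments via \reflem{lem:det-val-read} combined with invariants \refeq{pete-prop:cross-app} and \refeq{pete-prop:turn-app} respectively (the paper's Cases~3 and~4). One point in your favour: your use of {\sc Transfer} at the second swap to establish the first disjunct $\vflag_{\notT} \detval{\sigma'}_t \true$ of \refeq{pete-prop:cross-app} is more explicit than the appendix, which discharges \emph{both} instances of the cross invariant by citing only $\vturn \detval{\sigma'}_t \notT$; that fact yields the instance with $t$ and $\notT$ transposed, but for the untransposed instance the second disjunct $\vturn \detval{\sigma'}_{\notT} t$ actually fails after the second swap (thread $\notT$ then observes two writes to $\vturn$), so the {\sc Transfer} step you describe --- whose premises are supplied by \refeq{pete-prop:flag-app} and \refeq{pete-prop:flag-turn-app} for $\notT$, update-onlyness of $\vturn$, and \reflem{lem:last-mod-trans} --- is exactly what is needed there.
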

\begin{proof}
We first prove that each property holds in the initial configuration.
Let $(P,\sigma)$ be an initial state. Thus, for each $t$,
$\sigma.\pc_t = \ref{set-flag}$. This is sufficient to show that all of 
the invariants \ref{pete-prop:flag-app},
\ref{pete-prop:flag-turn-app}, \ref{pete-prop:cross-app}
and \ref{pete-prop:crit-app} are true initially, as these invariants
all assume that at least one thread $t$ has $P.\pc_t \neq \ref{set-flag}$.
We show that each remaining invariant
holds as follows:
\begin{description}
\item (\ref{pete-prop:update-only-app}) By definition, every location
is update-only in an initial state
\item (\ref{pete-prop:turn-app}) This follows from {\sc Init}, and the initial condition \ref{pete-init:turn}, $\vturn \detval{\sigma} 0$ or 
$\vturn \detval{\sigma} 1$.
\item (\ref{pete-prop:quies-app}) This follows from {\sc Init}, and the initial condition
$$\wrval(\sigma_0.\last(flag_t)) = \False.$$
\end{description}

We now prove, for each transition that each property is preserved.
Fix a transition $(P,\sigma) \ltsArrow{m, e}_{\Pad} (P',\sigma')$
with $(P,\sigma)$ satisfying the invariants of Figure \ref{lem:pet-invs}.
Also, fix a thread $t$ (thus fixing $\notT$), which is the
thread executing the operation represented by the transition.
For each transition, we prove that each invariant is preserved.
Where appropriate, we do so for both $t$ and $\notT$.
Invariants applied to $\notT$ are marked with a primed label.
We ignore the execution of line 5, as the critical section does not
modify the variables used in Peterson's algorithm.

{\bf Case 1:} $P.\pc_t = \ref{set-flag}$, and $P'.\pc_t = \ref{swap-turn}$ and
$e = W_t(\vflag_t, \true)$. It follows from Lemma \ref{lem:last-mod-trans}, and invariant \ref{pete-prop:quies-app}
that $$m = \sigma.\last(\vflag_t).$$
\begin{description}
\item (\ref{pete-prop:update-only-app}) [$\vturn$ is an update-only location in $\sigma'$].
This follows because $\vturn$ is an update-only location in $\sigma$, and $e \notin \W_{|\vturn}$.
\item (\ref{pete-prop:turn-app}) [$\vturn \detval{\sigma'}_1 2 \vee\vturn \detval{\sigma'}_2 1$].
From the rule {\sc NoMod}, and the fact that $e \notin \W_{|\vturn}$ it follows that
if $\vturn \detval{\sigma}_1 2$ (resp. $\vturn \detval{\sigma}_2 1$) then
$\vturn \detval{\sigma'}_1 2$ (resp. $\vturn \detval{\sigma'}_2 1$),
which is sufficient to prove that the invariant is preserved.
\item (\ref{pete-prop:flag-app}) [$P'.\pc_t \in \{\ref{swap-turn}, \ref{busy-wait}, \ref{critical-section}, \ref{unset-flag}\}
\implies \vflag_{t} \detval{\sigma'}_t \true$].
From rule {\sc ModLast}, the fact that $e \in \W_{|\vflag_t}$ and that $m = \sigma.\last(\vflag_t)$
it follows that $\vflag_t \detval{\sigma'} \wrval(e) = \true$.
\item (\ref{pete-prop:flag-app}') [$P'.\pc_{\notT} \in \{\ref{swap-turn}, \ref{busy-wait}, \ref{critical-section}, 
\ref{unset-flag}\}
\implies \vflag_{\notT} \detval{\sigma'}_{\notT} \true$].
From rule {\sc NoMod} and the fact that $e \notin \W_{|\vflag_{\notT}}$, it follows
that if $\vflag_{\notT} \detval{\sigma}_{\notT} \true$, then
$\vflag_{\notT} \detval{\sigma'}_{\notT} \true$, which is sufficient to prove that
the invariant is preserved.
\item (\ref{pete-prop:flag-turn-app}) [$P'.\pc_t \in \{\ref{busy-wait}, \ref{critical-section}, 
\ref{unset-flag}\} \implies \vflag_t \hbo{\sigma'}\vturn$].
Similar to the proof for Invariant \ref{pete-prop:flag-app}, $P'.\pc_t = 2 \notin \{\ref{busy-wait}, \ref{critical-section}, 
\ref{unset-flag}\}$.
\item (\ref{pete-prop:flag-turn-app}') [$P'.\pc_{\notT} \in \{\ref{busy-wait}, \ref{critical-section}, 
\ref{unset-flag}\} \implies \vflag_{\notT} \hbo{\sigma'}\vturn$].
From rule {\sc NoMod-Ord} and the fact that $e \notin \W_{|\vflag_{\notT}} \cup \W_{|\vturn}$,
it follows that if $\vflag_{\notT} \hbo{\sigma}\vturn$, then
$\vflag_{\notT} \hbo{\sigma'}\vturn$, which is sufficient to prove that
the invariant is preserved.
\item (\ref{pete-prop:cross-app}) [$P'.\pc_t \in \{\ref{busy-wait}, \ref{critical-section}, 
\ref{unset-flag}\} \wedge  P'.\pc_{\notT} \in \{\ref{busy-wait}, \ref{critical-section}, 
\ref{unset-flag}\} \implies 
     \vflag_{\notT} \detval{\sigma'}_t \true \vee\vturn \detval{\sigma'}_{\notT} t$].
     As before, it is sufficient that $P'.\pc_t \notin \{4, 5\}$.
\item (\ref{pete-prop:cross-app}') [$P'.\pc_{\notT} \in \{\ref{busy-wait}, \ref{critical-section}, 
\ref{unset-flag}\} \wedge  P'.\pc_t \in \{\ref{busy-wait}, \ref{critical-section}, 
\ref{unset-flag}\} \implies
  \vflag_t \detval{\sigma'}_{\notT} \true \vee\vturn \detval{\sigma'}_t t$].
It is again sufficient that $P'.\pc_t \notin \{\ref{busy-wait}, \ref{critical-section}, 
\ref{unset-flag}\}$.
\item (\ref{pete-prop:crit-app}) [$P'.\pc_t = \ref{critical-section} \wedge  P'.\pc_{\notT} \in \{\ref{busy-wait}, \ref{critical-section}, 
\ref{unset-flag}\} \implies 
  \vturn \detval{\sigma'}_{\notT} t$]. It is sufficient that $P'.\pc_t \neq 5$.
\item (\ref{pete-prop:crit-app}') [$P'.\pc_{\notT} = \ref{critical-section} \wedge  P'.\pc_t \in \{\ref{busy-wait}, \ref{critical-section}, 
\ref{unset-flag}\} \implies \vturn \detval{\sigma'}_t t$].
It is again sufficient that $P'.\pc_t \notin \{\ref{busy-wait}, \ref{critical-section}, 
\ref{unset-flag}\}$.
\item (\ref{pete-prop:quies-app}) [$P.\pc_t = \ref{set-flag} \implies \vflag_t \detval{\sigma} \False$].
It is sufficient that $P'.\pc_t = \ref{swap-turn}$.
\item (\ref{pete-prop:quies-app}') $P.\pc_{\notT} = \ref{set-flag} \implies \vflag_{\notT} \detval{\sigma} \False$. From rule {\sc NoMod} and the fact that $e \notin \W_{|\vflag_{\notT}}$,
it follows that if $\vflag_{\notT} \detval{\sigma} \False$, then
$\vflag_{\notT} \detval{\sigma'} \False$, which is sufficient to prove that
the invariant is preserved.
\end{description}
For the remaining cases, we do not explicitly state the invariant that we are proving.
The mapping from labels to invariants remains as above.

{\bf Case 2:} $P.\pc_t = \ref{swap-turn}$, and $P'.\pc_t = \ref{busy-wait}$ and $e = U_t(turn, v, \notT)$
for some $v$. By Lemma \ref{lem:last-mod-trans}, because $e$ is an update and
$\vturn$ is an update-only location, $m = \sigma.\last(\vturn)$.
\begin{description}
\item (\ref{pete-prop:update-only-app})
This follows because $e$ is an update.
\item (\ref{pete-prop:turn-app})
From the rule {\sc ModLast}, and that $e \in \W_{|\vturn}$ and $m = \sigma.\last(\vturn)$
it follows that $\vturn \detval{\sigma'}_t \wrval(e) = \notT$, which is sufficient.
\item (\ref{pete-prop:flag-app}) 
Note that by Invariant \ref{pete-prop:flag-app} applied to $(P, \sigma)$,
we have $\vflag_{t} \detval{\sigma}_t \true$.
Then, from the rule {\sc NoMod}, and the fact that $e \notin \W_{|\vflag_t}$ it
follows that $\vflag_{t} \detval{\sigma'}_t \true$ as required.
\item (\ref{pete-prop:flag-app}')
From rule {\sc NoMod} and the fact that $e \notin \W_{|\vflag_{\notT}}$, it follows
that if $\vflag_{\notT} \detval{\sigma}_{\notT} \true$, then
$\vflag_{\notT} \detval{\sigma'}_{\notT} \true$, which is sufficient to prove that
the invariant is preserved.
\item (\ref{pete-prop:flag-turn-app})
Note that by Invariant \ref{pete-prop:flag-app} applied to $(P, \sigma)$,
we have $\vflag_{t} \detval{\sigma}_t \true$.
Then, from the rule {\sc WriteOrd}, and the fact that $\vturn$ and $\vflag_t$
are distinct variables, $e \in \W_{|\vturn}$ and $m = \sigma.\last(\vturn)$,
we have $\vflag_t \hbo{\sigma'}\vturn$ as required.
\item (\ref{pete-prop:flag-turn-app}')
Note first that because $\vturn$ is update only, $m$ is an update
and thus $m \in \W_{R|\vturn}$. Then, from rule {\sc UpdOrd} and the fact that
$e \in \URA_{|\vturn}$ it follows that if $\vflag_{\notT} \hbo{\sigma}\vturn$, then
$\vflag_{\notT} \hbo{\sigma'}\vturn$, which is sufficient to prove that
the invariant is preserved.
\item (\ref{pete-prop:cross-app})
In the proof that this transition preserves Invariant \ref{pete-prop:turn-app}, we proved that
$\vturn \detval{\sigma'}_t \wrval(e) = \notT$, which is sufficient to prove that this current
invariant is preserved.
\item (\ref{pete-prop:cross-app}') Again, we know that $\vturn \detval{\sigma'}_t \wrval(e) = \notT$,
which is sufficient to prove that this invariant is preserved (bearing in mind that $t$ and $\notT$
are transposed in this invariant).
\item (\ref{pete-prop:crit-app}) It is sufficient that $P'.\pc_t \neq \ref{critical-section}$.
\item (\ref{pete-prop:crit-app}') Again, the fact that $\vturn \detval{\sigma'}_t \wrval(e) = \notT$
is enough.
\item (\ref{pete-prop:quies-app}) It is sufficient that $P'.\pc_t \neq \ref{set-flag}$.
\item (\ref{pete-prop:quies-app}') From rule {\sc NoMod} and the fact that $e \notin \W_{|\vflag_{\notT}}$,
it follows that if $\vflag_{\notT} \false$, then
$\vflag_{\notT} \detval{\sigma'}_{\notT} \false$, which is sufficient to prove that
the invariant is preserved.
\end{description}

{\bf Case 3:} In this case, we consider the first test at line \ref{busy-wait} $\vflag_t = \false$.
If this test returns $\true$, then nothing about the state changes except that $t$ moves to
the second test in the condition. Because nothing about the state is changing,
application of the rules {\sc NoMod} and {\sc NoModOrd} can be used to show that all the invariants
are preserved in a standard way. Therefore, we only consider in detail the situation when
the test returns $\false$. Thus, assume that $P.\pc_t = \ref{busy-wait}$, and $P'.\pc_t = \ref{critical-section}$,
and $e = R_t(\vflag_{\notT}, \false)$.

Because $e$ is not a write and the value of $\pc_{\notT}$ does not change,
it is straightforward to use the rules {\sc NoMod} and {\sc NoModOrd}
to show that each invariant except for \ref{pete-prop:crit-app} and \ref{pete-prop:crit-app}'
are preserved. Because it is simpler, we first prove that \ref{pete-prop:crit-app}' is preserved.
Briefly, $P'.\pc_{\notT} = P.\pc_{\notT}$, and because $e \notin \W_{|\vflag_t}$ we have
$\vflag_t \detval{\sigma}_{\notT} \true \implies \vflag_t \detval{\sigma'}_{\notT} \true$
(by rule {\sc NoMod}), and because $e \notin \W_{|\vturn}$ we have
$\vturn \detval{\sigma}_{\notT} \notT \implies \vturn \detval{\sigma'}_{\notT} \notT$.
These three properties are sufficient to show that \ref{pete-prop:crit-app}' is preserved.

We now prove that
\ref{pete-prop:crit-app} is preserved. We do so by proving that $\vturn \detval{\sigma'}_{\notT} t$
under the assumption that $P'.\pc_{\notT} \in \{\ref{busy-wait}, \ref{critical-section}, \ref{unset-flag}\}$.
Because $P.\pc_{\notT} = P'.\pc_{\notT}$, we have $P.\pc_{\notT} \in \{\ref{busy-wait}, \ref{critical-section}, \ref{unset-flag}\}$. Thus, because $P.\pc_t = \ref{busy-wait}$, Invariant \ref{pete-prop:cross-app}
guarantees that
\[
\vflag_{\notT} \detval{\sigma}_t \true \vee \vturn \detval{\sigma}_{\notT} = t
\]
But the disjunct $\vflag_{\notT} \detval{\sigma}_t$ must be false. If it were true,
Lemma \ref{lem:det-val-read} the read $e$ would have to return $\true$, contrary to
the hypothesis that $e = R_t(\vflag_{\notT}, \false)$. Thus $\vturn \detval{\sigma}_{\notT} = t$.
Then, from rule {\sc NoMod}, and the fact that 
$e$ is not a write, we have $\vturn \detval{\sigma'}_{\notT} t$ as required.

{\bf Case 4:} In this case, we consider the second test at line \ref{busy-wait} $\vturn = \notT$.
As before, if this test returns true, then all the invariants are straight-forwardly preserved.
So assume that $P.\pc_t = \ref{busy-wait}$, and $P'.\pc_t = \ref{critical-section}$,
and $e = R_t(\vturn, t)$.

Again, because $e$ is not a write and the value of $\pc_{\notT}$ does not change,
it is easy to show that each invariant except for \ref{pete-prop:crit-app} is preserved.
We show that Invariant \ref{pete-prop:crit-app} is preserved by proving that
$\vturn \detval{\sigma'}_{\notT} t$. By Lemma \ref{lem:det-val-read},
and the fact that $e = R_t(\vturn, t)$
the assertion $\vturn \detval{\sigma}_{t} \notT$ is false. Thus, by Invariant
\ref{pete-prop:turn-app}, $\vturn \detval{\sigma}_{\notT} t$. Then, from rule {\sc NoMod}, and the fact that 
$e$ is not a write, we have $\vturn \detval{\sigma'}_{\notT} t$
as required.

{\bf Case 5:} $P.\pc_t = \ref{unset-flag}$, and $P'.\pc_t = \ref{set-flag}$ and
$e = W_t(\vflag_t, \false)$. It follows from Lemma \ref{lem:last-mod-trans}, and Invariant \ref{pete-prop:flag-app}
that $$m = \sigma.\last(\vflag_t).$$
\begin{description}
\item (\ref{pete-prop:update-only-app})
This invariant is preserved because $e \notin \W_{|\vturn}$.
\item (\ref{pete-prop:turn-app}) This invariant is preserved by rule {\sc NoMod} and
the fact that $e \notin \W_{|\vturn}$.
\item (\ref{pete-prop:flag-app}) Note that $P'.\pc_t \notin \{\ref{swap-turn}, \ref{busy-wait}, \ref{critical-section},
\ref{unset-flag}\}$, which is sufficient to show that this invariant is preserved.
\item (\ref{pete-prop:flag-app}') This invariant is preserved by rule {\sc NoMod} and
the fact that $e \notin \W_{\vflag_{\notT}}$.
\item (\ref{pete-prop:flag-turn-app}) Note that $P'.\pc_t \notin \{\ref{busy-wait}, \ref{critical-section}\}$, which is sufficient to show that this invariant is preserved.
\item (\ref{pete-prop:flag-turn-app}') This invariant is preserved by rule {\sc NoMod} and
the fact that $e \notin \W_{|\vflag_{\notT}} \cup \W_{|\vturn}$.
\item (\ref{pete-prop:cross-app})  Note that $P'.\pc_t \notin \{\ref{busy-wait}, \ref{critical-section}, \ref{unset-flag}\}$, which is sufficient to show that this invariant is preserved.
\item (\ref{pete-prop:cross-app}') Again, it is sufficient to note that $P'.\pc_t \notin \{\ref{busy-wait}, \ref{critical-section}, \ref{unset-flag}\}$ (bearing in mind that $t$ and $\notT$ are transposed in \ref{pete-prop:cross-app}').
\item (\ref{pete-prop:crit-app}) This invariant is preserved by rule {\sc NoMod} and
the fact that $e \notin \W_{|\vturn}$.
\item (\ref{pete-prop:crit-app}') This invariant is preserved by rule {\sc NoMod} and
the fact that $e \notin \W_{|\vturn}$.
\item (\ref{pete-prop:quies-app})  From rule {\sc ModLast} and the fact that $e \in \W_{|\vflag_t}$,
$m = \sigma.\last(\vflag_t)$ and $\wrval(e) = \mathit{false}$, we have $\vflag_t \detval{\sigma'}_t \False$
as required.
\item (\ref{pete-prop:quies-app}') This invariant is preserved by rule {\sc NoMod} and
the fact that $e \notin \W_{|\vflag_{\notT}}$.
\end{description}
This completes our proof.
\end{proof}

\section{Mechanisation in MemAlloy}

The {\tt .cat} files for MemAlloy   \smallskip

\url{https://github.com/johnwickerson/memalloy} \smallskip

\noindent
are given below.

\begin{itemize}
\item {\tt c11\_rar.cat} contains our formalisation of the RAR fragment.
\item {\tt c11\_simp\_2.cat} is the same as {\tt c11\_simp.cat}, distributed with MemAlloy, but imports {\tt c11\_base\_rar.cat} instead of {\tt c11\_base.cat}.

\item {\tt c11\_base\_rar.cat} contains definitions common to both 
{\tt c11\_rar.cat} and {\tt c11\_simp\_2.cat}. It is essentially the file distributed with MemAlloy, but with a simplified $\ltsw$ relation  that ignores release sequences. It also omits events such as {\sf SC} events that are not part of our C11 model. 
\end{itemize}

No differences were found between {\tt c11\_rar.cat} and {\tt c11\_simp\_2.cat} for models up to size 7.

As a sanity check, both {\tt c11\_rar.cat} and {\tt c11\_simp\_2.cat} were compared against  {\tt c11\_lidbury.cat}, which formalises \citet{DBLP:conf/popl/LidburyD17}, revealing the exact same sets of counterexamples.

\subsection{File {\tt c11\_rar.cat}}

\begin{verbatim}
(* This file imports c11_rar_base.cat and 
rephrases the acyclicity axiom in terms of eco *)

"C"
include "c11_rar_base.cat"

let eco = (rf | co | fr)+

irreflexive hb as hb_irr
irreflexive hb ; eco as hb_eco_irr
irreflexive eco as eco_irr
\end{verbatim}

\subsection{File {\tt c11\_simp\_2.cat}}

\begin{verbatim}

(* This is the C11 file distributed with 
Memalloy, but imports c11_rar_base.cat  
instead of c11_base.cat *)

"C"
include "c11_rar_base.cat"

acyclic scp as Ssimp
\end{verbatim}

\subsection{File c11\_base\_rar.cat}

\begin{verbatim}
"C"
include "basic.cat"

(* Modifications to c11_base.cat *)

(* synchronises with (sw) simplified to 
elide release sequences *)
let sw = [REL]; rf; [ACQ]

empty [SC] as omitSC
empty [NAL] as omitNAL
empty [F] as omitF
empty [R & W] \ [REL & ACQ] as RAOnlyRMW

(* Definitions below are from c11_base.cat *)

let fsb = [F]; po
let sbf = po; [F]

(* release sequence *)
let rs = poloc*; rf*

(* happens before *)
let hb = (po | sw)+
let hbl = hb & loc

(* conflict *)
let cnf = (((W*M) | (M*W)) & loc) \ id
				  
(* data race *)
let dr = (cnf \ (A*A)) \ thd \ (hb | hb^-1)
undefined_unless empty dr as Dr 

(* unsequenced race *)
let ur = (cnf & thd) \ (po | po^-1)
undefined_unless empty ur as Ur

(* coherence, etc *)
acyclic hbl | rf | co | fr as HbCom

(* no "if(r==0)" *)
deadness_requires empty if_zero as No_If_Zero 

(* no unsequenced races *)
deadness_requires empty ur as Dead_Ur

(* coherence edges are forced *)
deadness_requires empty unforced_co as Forced_Co

(* external control dependency *)
let cde = ((rf \ thd) | ctrl)* ; ctrl
(* dependable release sequence *)
let drs = rs \ ([R]; !cde)
(* dependable synchronises-with *)
let dsw = 
sw & (((fsb?; [REL]; drs?) \ (!ctrl; !cde)) ; rf)		 
(* dependable happens-before *)
let dhb = po?; (dsw;ctrl)*
(* self-satisfying cycle *)	    
let ssc = id & cde
(* potential data race *)
let pdr = cnf \ (A*A)
(* reads-from on non-atomic location *)
let narf = rf & (NAL*NAL)

deadness_requires 
empty pdr \ (dhb | dhb^-1 | narf;ssc | ssc;narf^-1) 
as Dead_Pdr

let scb = fsb?; (co | fr | hb); sbf?
let scp = (scb & (SC * SC)) \ id

\end{verbatim}

\end{document}